\newtheorem{thm}{Theorem}[section]
\newtheorem{lem}[thm]{Lemma}
\newtheorem{claim}[thm]{Claim}
\newtheorem{cor}[thm]{Corollary}
\newtheorem{conj}[thm]{Conjecture}
\newtheorem{defi}[thm]{Definition}
\newtheorem{rem}[thm]{Remark}
\renewcommand{\leq}{\leqslant}
\renewcommand{\geq}{\geqslant}
\DeclareMathOperator*{\E}{\mathbb{E}}
\DeclareMathOperator*{\secondmax}{max2}
\DeclareMathOperator{\Inf}{\mathsf{Inf}}
\DeclareMathOperator*{\PP}{\mathbf{P}}
\newcommand{\calD}{\mathcal{D}}
\newcommand{\calA}{\mathcal{A}}
\newcommand{\calV}{\mathcal{V}}
\newcommand{\wgmd}{w_{\mathsf{GMD}}}
\newcommand{\wgp}{w_{\mathsf{GP}}}
\newcommand{\ndeg}{\mathsf{ndeg}}
\newcommand{\Valug}{\mathsf{Val_{UG}}}
\newcommand{\Optug}{\mathsf{Opt_{UG}}}
\newcommand{\Valgmd}{\mathsf{Val_{GMD}}}
\newcommand{\Optgmd}{\mathsf{Opt_{GMD}}}
\newcommand{\Valgp}{\mathsf{Val_{GP}}}
\newcommand{\Optgp}{\mathsf{Opt_{GP}}}
\newcommand{\Val}{\mathsf{Val}}
\newcommand{\Opt}{\mathsf{Opt}}
\newcommand{\RR}{\mathbb{R}}      
\newcommand{\ZZ}{\mathbb{Z}}      
\begin{document}
\title{{\bf Hardness of Graph Pricing \\ Through Generalized Max-Dicut}}

\author{
Euiwoong Lee\thanks{Supported by a Samsung Fellowship, Venkatesan Guruswami's US-Israel BSF grant 2008293, and NSF CCF-1115525. Most of this work is done while visiting Microsoft Research New England. {\tt euiwoonl@cs.cmu.edu}}}

\date{Computer Science Department \\ Carnegie Mellon University \\ Pittsburgh, PA 15213.}

\maketitle
\thispagestyle{empty}

\begin{abstract}
The Graph Pricing problem is among the fundamental problems whose approximability is not well-understood.
While there is a simple combinatorial $\frac{1}{4}$-approximation algorithm, 
the best hardness result remains at $\frac{1}{2}$ assuming the Unique Games Conjecture (UGC). 
We show that it is NP-hard to approximate within a factor better than $\frac{1}{4}$ under the UGC, so that the simple combinatorial algorithm might be the best possible. 
We also prove that for any $\epsilon > 0$, there exists $\delta > 0$ such that the integrality gap of $n^{\delta}$-rounds of the Sherali-Adams hierarchy of linear programming for Graph Pricing is at most $\frac{1}{4} + \epsilon$.

\smallskip
This work is based on the effort to view the Graph Pricing problem as a Constraint Satisfaction Problem (CSP) simpler than the standard and complicated formulation. 
We propose the problem called Generalized Max-Dicut($T$), which has a domain size $T + 1$ for every $T \geq 1$. Generalized Max-Dicut(1) is well-known Max-Dicut. 
There is an approximation-preserving reduction from Generalized Max-Dicut on directed acyclic graphs (DAGs) to Graph Pricing,
and both our results are achieved through this reduction. 
Besides its connection to Graph Pricing, the hardness of Generalized Max-Dicut is interesting in its own right since in most arity two CSPs studied in the literature, SDP-based algorithms perform better than LP-based or combinatorial algorithms --- for this arity two CSP, a simple combinatorial algorithm does the best. 

\end{abstract}

\newpage

\section {Introduction}

Consider the following natural problem for a seller with a profit-maximization objective. The seller has $n$ types of items $1, \dots, n$, each with unlimited copies, and there are $m$ customers $1, \dots, m$. Each customer $j$ has her own budget $b_j$ and a subset of items $e_j \subseteq \left\{ 1, \dots, n \right\}$ that she is interested in. Customers are {\em single-minded} in a sense that each customer $j$ buys all items in $e_j$ if the sum of the prices does not exceed her budget (i.e. $b_j \geq \sum_{i \in e_j} p(i)$, where $p(i)$ indicates the price of item $i$), in which the seller gets $\sum_{i \in e_j} p(i)$ from the customer. Otherwise, the customer does not buy anything and the seller gets no profit from this customer. The goal of the seller is to set a nonnegative price to each item to maximize her profit from $m$ customers.

This problem was proposed by Guruswami et al.~\cite{GHKKKM05}, and has received much attention. Let $k$ be the maximum cardinality of any $e_i$. Approximability of this problem achieved by polynomial time algorithms for large $k$ and $n$ is relatively well-understood now. There is a polynomial time algorithm that guarantees $O(\min(k, (n \log n)^{1/2}))$ fraction of the optimal solution, while we cannot hope for an approximation ratio better than 
$\Omega(\min(k^{1 - \epsilon}, n^{1/2 - \epsilon}))$ for any $\epsilon > 0$ under the Exponential Time Hypothesis~\cite{CLN13}. 

The special case $k = 2$ has also been studied in many works separately.
The instance can be nicely represented by a graph, with vertices as items and edges as customers, so this problem is called the {\em Graph (Vertex) Pricing} problem. 
The fact that this case can be represented as a graph not only gives a theoretical simplification, 
but also makes the problem flexible to model other settings. 
For example, Lee et al.~\cite{LBATL07, LBATL08} independently suggested the same problem from the networking community, motivated by the study of pricing traffic between different levels of internet service providers under the presence of peering. 

The best known approximation algorithm for a general instance of Graph Pricing, which guarantees $\frac{1}{4}$ of the optimal solution, is given by Balcan and Blum~\cite{BB07} and Lee et al.~\cite{LBATL07} The algorithm is simple enough to state here. First, assign 0 to each vertex with probability half independently. For each remaining vertex $v$, assign the price which maximizes the profit between $v$ and its neighbors already assigned 0. 
This simple algorithm has been neither improved nor proved to be optimal. 
Graph Pricing is APX-hard~\cite{GHKKKM05}, but  the only strong hardness of approximation result rules out an approximation algorithm with a guarantee better than $\frac{1}{2}$~\cite{KKMS09} under the Unique Games Conjecture (UGC) (via reduction from Maximum Acyclic Subgraph).

The $\frac{1}{4}$-approximation algorithm is surprisingly simple and does not even rely on the power of a linear programming (LP) or semidefinite programming (SDP) relaxation. The efforts to exploit the power of LP relaxations to find a better approximation algorithm 
have produced positive results for special classes of graphs. 
Krauthgamer et al.~\cite{KMR11} studied the case where all budgets are the same (but the graph might have a self-loop), and proposed a $\frac{6+\sqrt{2}}{5 + \sqrt{2}} \approx 1.15$-approximation algorithm based on a LP relaxation. 
In general case, the standard LP is shown to have an integrality gap close to $\frac{1}{4}$~\cite{KKMS09}.
Therefore, it is natural to consider hierarchies of LP relaxations such as the Sherali-Adams hierarchy~\cite{SA90} (see \cite{CT12} for a general survey and \cite{GTW13, YZ14} for recent algorithmic results using the Sherali-Adams hierarchy). Especially, Chalermsook et al.~\cite{CKLN13} recently showed that there is a $\mathsf{FPTAS}$ when the graph has bounded treewidth, based on the Sherali-Adams hierarchy. 
However, the power of the Sherali-Adams hierarchy and SDP, as well as the inherent hardness of the problem, was not well-understood in general case. 

\subsection{Our Results}
In this work, we show that any polynomial time algorithm that guarantees a ratio better than $\frac{1}{4}$ must be powerful enough to refute the Unique Games Conjecture. 
\begin{thm}
Under the Unique Games Conjecture, for any $\epsilon > 0$, it is NP-hard to approximate Graph Pricing within a factor of $\frac{1}{4} + \epsilon$. 
\label{thm:main}
\end{thm}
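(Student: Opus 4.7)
The plan is to prove Theorem~1.1 in two stages that correspond to the paper's central conceptual contribution. First, I would establish that for every $\epsilon > 0$ there exists a domain size $T = T(\epsilon)$ such that Generalized Max-Dicut$(T)$ restricted to DAGs is UG-hard to approximate within $\tfrac14 + \epsilon$. Second, I would invoke the approximation-preserving reduction from Generalized Max-Dicut on DAGs to Graph Pricing promised in the abstract. Composing these two steps immediately yields the claimed $\tfrac14 + \epsilon$ UG-hardness for Graph Pricing.

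The first stage is the technical heart and should follow the standard Khot-style dictator-test framework. Starting from Unique Games, I would build a long-code gadget where each variable of the UG instance is replaced by a function $f_u : \{0,1,\dots,T\}^R \to \{0,1,\dots,T\}$, and for each UG constraint a local test samples a pair $(\bd x, \bd y)$ from a carefully designed distribution $\mu$ on $\{0,\dots,T\}^R \times \{0,\dots,T\}^R$ and scores the assignment by the Generalized Max-Dicut payoff, of the form ``pay $f_v(\bd y)$ iff $f_u(\bd x) + f_v(\bd y) \le T$.'' The distribution $\mu$ should be chosen so that on a dictator $f_u(\bd x) = x_i$ and $f_v(\bd y) = y_i$, the expected payoff matches the integral optimum of one canonical Generalized Max-Dicut edge (completeness close to $1$ after normalization); concretely I expect $\mu$ to be a product of $R$ copies of a single-coordinate ``budget'' distribution on $\{0,\dots,T\}^2$ designed so that only the diagonal assignment extracts the full payoff.

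The soundness analysis is where the $\tfrac14$ barrier has to be matched. Assuming no coordinate has large low-degree influence in either $f_u$ or $f_v$, I would apply the Mossel–O'Donnell–Oleszkiewicz invariance principle to transfer the test to a Gaussian problem, where the ordinal variables in $\{0,\dots,T\}$ are replaced by correlated Gaussians and the ``price'' functions become Lipschitz thresholds. Over Gaussian space one must then show that any pair of threshold strategies can extract at most $\tfrac14 + o_T(1)$ of the completeness value; this is the Gaussian analogue of the $\tfrac14$ integrality gap and should reduce to an elementary calculation in which, as $T \to \infty$, the discrete budget-inequality test limits to the continuous condition $x + y \le 1$ on $[0,1]^2$, matching the $\tfrac14$-approximation algorithm of Balcan--Blum and Lee et al. A Unique Games decoding argument then converts no-influential-coordinate soundness into actual soundness for the UG reduction.

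Two side conditions need attention. The \emph{DAG property} has to be engineered into the reduction: symmetric Graph Pricing constraints can encode the asymmetric Max-Dicut payoff only after fixing a topological order. The standard trick is to start from a Unique Games instance whose constraint graph is layered (bipartite or multi-partite), and to orient all long-code constraints from one layer to the next; since every UG variable lives in a unique layer, the resulting constraint graph is automatically acyclic. The \emph{approximation-preserving reduction} from Generalized Max-Dicut on DAGs to Graph Pricing should be routine: scale prices to multiples of $1/T$, give each customer budget $1$, and read off the payoffs. The main obstacle I anticipate is step (ii): obtaining a sharp soundness bound of $\tfrac14$ rather than some constant, which forces the test distribution $\mu$ to be precisely calibrated so that the Gaussian analogue admits no better than threshold strategies, and to quantify how fast $o_T(1) \to 0$ so that one can choose $T(\epsilon)$ explicitly before applying the invariance principle.
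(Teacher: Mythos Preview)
Your high-level decomposition (UG-hardness of Generalized Max-Dicut on DAGs, then reduce to Graph Pricing) matches the paper, but both stages contain genuine gaps that would prevent the argument from going through.

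\textbf{The predicate and the source of $\tfrac14$.} You model the Generalized Max-Dicut payoff as ``pay $f_v(\bd y)$ iff $f_u(\bd x)+f_v(\bd y)\le T$'', and expect the soundness to come from a Gaussian analogue of the budget inequality $x+y\le 1$. That is not the predicate: an edge $(u,v)$ with label $t$ is satisfied iff $f_u(\bd x)=0$ \emph{and} $f_v(\bd y)=t$. There is no additive budget constraint in Generalized Max-Dicut, and consequently no ``$x+y\le 1$'' Gaussian problem to analyze. In the paper, the $\tfrac14$ does not emerge from the single-edge test at all; it comes from a second, purely combinatorial layer. The dictatorship test is run on $V\times[T]^R$ where $D=(V,A)$ is a fixed DAG of Alon et al.\ in which every dicut has at most $(\tfrac14+o(1))|A|$ edges. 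The per-edge test distribution is chosen with $\Pr[x=0]=T^{-1/4}$ so that the correlation $\rho(x,y)$ is $O(T^{-3/8})$; Mossel's invariance then says that for functions with no influential coordinate the acceptance probability on edge $(u,v)$ is essentially $\mu_{u,0}(1-\mu_{v,0})/T$. Summed over $A$, this is $\tfrac{1}{T}$ times the size of a fractional dicut in $D$, hence at most $\tfrac{1}{4T}+o(\tfrac1T)$. Your ``layered UG'' trick makes the instance acyclic but does not supply the max-dicut upper bound that actually produces $\tfrac14$.

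\textbf{The reduction to Graph Pricing.} The reduction is not ``scale prices to multiples of $1/T$, budget $1$''. That loses the direction of edges and the edge labels. The paper takes a topological order $s$ of the DAG and sets $b(u,v)=M^{Ts(v)+l_A(u,v)-1}$ and $w_{\mathsf{GP}}(u,v)=w_{\mathsf{GMD}}(u,v)/b(u,v)$ for a large $M$; the exponentially separated scales force any pricing $p$ to behave, up to $O(1/M)+O(1/\ndeg)$ error, like choosing a single label per vertex. This is exactly why the DAG structure and a lower bound on the normalized outdegree $\ndeg$ are needed (and why the paper pads the UG instance to make $\ndeg$ large). Your proposed uniform-budget reduction would not be approximation-preserving.
\bigskip
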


By the results of Khot and Vishnoi~\cite{KV05} and Raghavendra and Steurer~\cite{RS09} that convert a hardness under the UGC to a SDP gap instance, our result unconditionally shows that even a SDP-based algorithm will not improve the performance of a simple algorithm. 
For the Sherali-Adams hierarchy, we prove that even polynomial rounds of the Sherali-Adams hierarchy has an integrality gap close to $\frac{1}{4}$. 

\begin{thm}
Fix $\epsilon > 0$. There exists $\delta > 0$ such that the integrality gap of $n^{\delta}$-rounds of the Sherali-Adams hierarchy for Graph Pricing is at most $\frac{1}{4} + \epsilon$.
\label{thm:gp}
\end{thm}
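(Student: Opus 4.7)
The plan is to obtain the Sherali-Adams integrality gap for Graph Pricing by lifting in two stages. First, I will establish an $n^{\delta}$-round Sherali-Adams integrality gap of $\frac{1}{4}+\epsilon$ for Generalized Max-Dicut on DAGs, and then transfer this gap through the approximation-preserving reduction from Generalized Max-Dicut on DAGs to Graph Pricing that is used to prove Theorem~\ref{thm:main}. This mirrors the Raghavendra-Steurer philosophy of converting a UG-based hardness reduction into an SA integrality gap: because the hardness reduction is itself gadget-like and local, the same gadgets can be composed with a pre-existing SA gap instance for Unique Games to yield an SA gap for the target problem, losing only a constant factor in the number of rounds per composition step.

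For the first stage, I will start from a known $n^{\delta'}$-round Sherali-Adams gap instance for Unique Games, e.g.\ the Charikar-Makarychev-Makarychev construction, and apply the long-code based dictatorship gadget that underlies the UG-hardness reduction for Generalized Max-Dicut developed in the proof of Theorem~\ref{thm:main}. The central claim is that any consistent $r$-round local distribution over UG assignments pushes forward through the gadget to a consistent $r/C$-round local distribution over assignments of the Generalized Max-Dicut instance for an absolute constant $C$; the value this certifies is close to the completeness of the UG reduction, while the integral value is bounded by the soundness $\frac{1}{4}+\epsilon$ of the reduction. For the second stage, the reduction from Generalized Max-Dicut on DAGs to Graph Pricing replaces each vertex and each directed edge by a small local gadget of items and budget-constrained customers. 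Locality allows a local SA distribution over Generalized Max-Dicut assignments to compose with the gadget and yield a local SA distribution over price assignments, again losing only a constant factor in rounds.

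The main obstacle will be setting up the right Sherali-Adams relaxation for Graph Pricing, since Graph Pricing is not a fixed-alphabet CSP but an optimization problem over nonnegative prices. I plan to discretize the price domain to a constant set of candidate prices dictated by the gadgets of the reduction, verify that discretization costs at most $O(\epsilon)$ in the gap (since both the profit from each customer and the set of "interesting" prices can be rounded to a multiple of $\epsilon$), and then formulate the $r$-round SA relaxation over this discrete domain. A secondary difficulty is tracking the dependence of $\delta$ on $\epsilon$: if the gadget sizes are $g=g(\epsilon)$, then each lifting step divides the number of rounds by roughly $g$, so one must check that beginning from an $n^{\delta'}$-round UG gap still leaves $n^{\delta}$ rounds for some $\delta=\delta(\epsilon)>0$ at the end. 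A final technical point is to ensure that the lifted local distributions certify a \emph{large LP objective} for Graph Pricing, not merely local consistency; this requires that the gadget be tight with respect to the SA objective, which in turn relies on the same local calculation that underlies soundness of the reduction in Theorem~\ref{thm:main}.
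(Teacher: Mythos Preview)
Your two-stage plan is plausible, but the paper takes a genuinely different route in the first stage. It does \emph{not} obtain the Sherali-Adams gap for Generalized Max-Dicut by pushing a Unique Games SA gap through the dictatorship gadget. Instead, it constructs the gap directly on a concrete DAG: the Alon--Bollob\'as--Gy\'arf\'as--Lehel--Scott digraph is randomly sparsified and cleaned so that every $n^{\gamma}$-vertex subgraph is $l$-path decomposable; edge labels are assigned uniformly from $[T]^+$. On this instance, local distributions are defined via Charikar--Makarychev--Makarychev-style random multicuts (now producing $[T]$-valued labelings by ``propagating'' along each tree piece), and the resulting inconsistent pairwise moments $\rho(\cdot,\cdot)$ are made globally consistent by an explicit $\ell_2$ embedding followed by a \emph{non-Boolean} Gaussian rounding (for each vertex, take the label $i$ maximizing $\langle v(i),g\rangle$). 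The second stage is the same as yours: the GMD-on-DAGs $\to$ Graph Pricing reduction of Theorem~\ref{thm:reduction} carries the gap over.

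What the two approaches buy: your composition route is modular and reuses existing UG SA-gap machinery together with the paper's own hardness gadget; it avoids the new rounding analysis entirely. The paper's direct route is self-contained and produces a much simpler gap instance (a sparsified explicit DAG with random labels), and its main technical contribution---the $(T{+}1)$-ary hyperplane rounding with an explicit orthogonal decomposition of adjacent vectors---is of independent interest. A couple of corrections on details: the GMD $\to$ GP reduction is not a multi-vertex gadget but an identity on the graph (each vertex stays a vertex, each arc becomes a single customer with budget $M^{Ts(v)+l_A(u,v)-1}$), so the SA transfer there is essentially by relabeling, and no discretization loss occurs because the pushed-forward SA marginals are supported on the canonical prices $\{0\}\cup\{M^{Ts(v)+i-1}\}$; you do, however, need to verify the hypothesis $\ndeg \geq 1/\epsilon$ of Theorem~\ref{thm:reduction} for your composed instance, which is handled in the paper's setting by regularizing the UG instance and is not automatic in yours.
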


Our result is based on an interesting generalization of Max-Dicut, which we call Generalized Max-Dicut. It is parameterized by a positive integer $T \geq 1$. An instance consists of a directed graph $D = (V, A)$ and a label on each edge $l_A : E \rightarrow \left\{1, \dots, T \right\}$, where the goal is to assign to each vertex $v$ a label $l_V(v)$ from $\left\{0, \dots, T \right\}$ to maximize the number of satisfied edges --- each edge $(u, v)$ is satisfied if $l_V(u) = 0$ and $l_V(v) = l_A(u, v)$. 

This problem shares many properties with Graph Pricing, including a simple combinatorial $\frac{1}{4}$-approximation algorithm.
There is an approximation-preserving reduction from Generalized Max-Dicut($T$) on directed acyclic graphs (DAGs) to Graph Pricing for any $T$. We prove the following theorems that it is hard to improve upon this simple algorithm for large $T$ even on DAGs, which immediately imply Theorem~\ref{thm:main} and~\ref{thm:gp}.

\begin{thm}
Under the Unique Games Conjecture, 
it is NP-hard to approximate Generalized Max-Dicut($T$) on directed acyclic graphs within a factor of $\frac{1}{4} + O(\frac{1}{T^{1/4}})$.
\label{thm:ug-hard}
\end{thm}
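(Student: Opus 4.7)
The plan is a long-code-based reduction from Unique Games to Generalized Max-Dicut($T$), following the Khot--Kindler--Mossel--O'Donnell template. Given a UG instance $\mathcal{U} = (G, R, \{\pi_e\})$, I place for each UG vertex $u$ a block of $(T+1)^R$ Generalized Max-Dicut vertices representing a long code $f_u\colon \{0,1,\ldots,T\}^R \to \{0,1,\ldots,T\}$. For each UG edge $e=(u,v)$, arcs are sampled from a dictator-test distribution: pick $\ell \in [T]$ uniformly; for each coordinate $j$, draw $(x_j, y_j) \in [T+1]^2$ i.i.d.\ from a joint distribution $\mu_\ell$ concentrated on $(0,\ell)$ with a small smoothing parameter $\eta$; then insert a label-$\ell$ arc from the $x$-vertex of block $u$ to the $(y \circ \pi_e)$-vertex of block $v$. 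The DAG property can be enforced by conditioning the test on $\#\{j : x_j = 0\} > \#\{j : y_j = 0\}$, which holds with probability $1 - o(1)$ because $\mu_\ell$'s $x$-marginal is heavy on $0$ while its $y$-marginal is heavy on $\ell \neq 0$; a topological ordering then comes from sorting vertices across all blocks by their zero-count.

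For completeness, if $L$ is a near-satisfying UG labeling, let $f_u$ be the dictator on coordinate $L(u)$. On any UG edge satisfied by $L$, the test reduces to the single-coordinate event $\{(x_i, y_i) = (0,\ell)\}$ at $i = L(u)$, which occurs with probability $c := \Pr_{\mu_\ell}[(x,y) = (0,\ell)]$ by construction; $\mu_\ell$ is tuned so that $c = 1 - O(\eta)$. For soundness, assume the constructed instance admits a labeling of value at least $(\tfrac14 + C \cdot T^{-1/4}) c$. The standard influence-decoding framework recovers a UG labeling of non-negligible value unless all functions $f_u$ have every coordinate of low influence; in that remaining case I invoke the Mossel--O'Donnell--Oleszkiewicz invariance principle to replace the discrete test by its Gaussian lift, reducing the task to bounding $\mathbb{E}_\ell \Pr[f_u(X) = 0,\, f_v(Y) = \ell]$ for arbitrary measurable $f_u, f_v$ under the Gaussian analog of $\mu_\ell$.

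The main obstacle is exactly this last Gaussian bound with the tight $O(T^{-1/4})$ slack. Intuitively, the $\tfrac14$ ceiling in the Gaussian world is the analytic reflection of the combinatorial $\tfrac14$-approximation algorithm being tight: when the long code carries no information about a single distinguished coordinate, deciding each vertex's label ``$0$ vs.\ non-$0$'' independently at random is already essentially optimal, and the probability of a correctly oriented source/target pair is at most $p_0 (1-p_0) \leq \tfrac14$ with $p_0 = \Pr[f_u(X) = 0]$. The slack $O(T^{-1/4})$ then arises by balancing two errors that both scale polynomially with $\eta$ and $T$: the invariance-principle discretization error (which grows when $\eta$ is too small and $\mu_\ell$ is too close to a point mass) and the residual mass of $\mu_\ell$ on the $T-1$ ``wrong'' non-zero labels (which grows when $\eta$ is too large). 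A calculation should show that the sweet spot is $\eta \sim T^{-1/2}$, yielding the $O(T^{-1/4})$ total loss and the stated hardness factor.
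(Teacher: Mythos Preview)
Your proposal has two genuine gaps that break the soundness analysis.

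First, your test distribution $\mu_\ell$, being concentrated on $(0,\ell)$, leaks the target label through the $y$-marginal: the low-influence function $f_v(y) = \mathrm{plurality}(y_1,\ldots,y_R)$ outputs $\ell$ with probability $1 - o(1)$, so pairing it with the constant $f_u \equiv 0$ passes your test with probability essentially $c$, not $(\tfrac14 + o(1))c$. The paper's distribution $\PP^t$ is built precisely to avoid this: both marginals equal a \emph{fixed} measure $\PP$ with $\PP(0)=\delta=T^{-1/4}$ and $\PP(j)=(1-\delta)/T$ for $j\neq 0$, independent of $t$; the label $t$ is encoded only in the correlation between $x$ and $y$, and that correlation is shown to be at most $\sqrt{2/(T\delta)}$, which is what lets Mossel's invariance theorem give useful Gaussian bounds. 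Completeness then becomes $(1-\delta)/T$ rather than $1-O(\eta)$, and the ratio to soundness is what yields the $\tfrac14+O(T^{-1/4})$.

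Second, your ``$p_0(1-p_0)\le \tfrac14$'' intuition is wrong in the two-function setting: after invariance the quantity you must bound is $\mu_{u,0}(1-\mu_{v,0})$ with $u\neq v$, and nothing stops $\mu_{u,0}=1$, $\mu_{v,0}=0$ (take $f_u\equiv 0$ on one side of the UG bipartition and $f_v$ never $0$ on the other), giving acceptance $\approx 1/T$ and a trivial ratio. The paper's essential extra ingredient, absent from your plan, is to compose the long-code test with the fixed DAG $D$ of Alon et al.\ in which every directed cut has at most $(\tfrac14+o(1))|A|$ edges: the vertex set becomes $U\times V(D)\times [T]^R$, and each test arc is directed along an edge of $D$. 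In soundness, the post-invariance expression $\E_{(v_1,v_2)\in A(D)}[\mu_{v_1,0}(1-\mu_{v_2,0})]$ is precisely the expected value of a randomized dicut of $D$, hence $\le \tfrac14 + T^{-1/4}$ by the choice of $D$. This same composition also delivers acyclicity for free; your zero-count conditioning gives a topological order but no combinatorial $\tfrac14$ barrier, so it cannot substitute for $D$.
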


\begin{thm}
Fix $T$ and $\epsilon > 0$. There exists $\delta > 0$ such that the integrality gap of $n^{\delta}$-rounds of the Sherali-Adams for Generalized Max-Dicut($T$) is at most $\frac{T+1}{4T}(1 + \epsilon)$.
Furthermore, the same result holds even when the graph is acyclic.
\label{thm:gmd}
\end{thm}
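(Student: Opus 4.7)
The plan is to convert the Unique Games hardness established in Theorem~\ref{thm:ug-hard} into a Sherali-Adams integrality gap via the standard ``UG-to-SA'' template. The reduction underlying Theorem~\ref{thm:ug-hard} is (as is customary in UG-hardness proofs) a long-code / dictatorship reduction in which each output constraint depends only on a bounded number of Unique Games variables. Such \emph{local} reductions transfer Sherali-Adams integrality gaps from UG to the target CSP, with only a polynomial loss in the number of rounds.

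Step~1 is to start from a Sherali-Adams integrality gap for Unique Games: for any $\eta, \zeta > 0$ and sufficiently large alphabet $[R]$, there exists an instance $\Phi$ on $N$ variables with integer value at most $\zeta$ and SA value at least $1 - \eta$ after $N^{\delta_0}$ rounds, for some $\delta_0 = \delta_0(\eta, \zeta, R) > 0$. This is by now standard, combining the Khot--Vishnoi gap with the hierarchy-lifting machinery that promotes basic LP gaps to $N^{\delta_0}$-round SA gaps.

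Step~2 is to apply the Theorem~\ref{thm:ug-hard} reduction to $\Phi$ and obtain a Generalized Max-Dicut($T$) instance $\Psi$ on a DAG. The analysis in Theorem~\ref{thm:ug-hard} already shows that, since $\Phi$'s integer value is at most $\zeta$, no integer assignment to $\Psi$ satisfies more than $\frac{T+1}{4T}(1+\epsilon)$ fraction of the constraints. For the SA value, I would lift the SA distributions $\{\mu_U\}$ of $\Phi$ as follows: each variable of $\Psi$ corresponds to a pair $(u, x)$ where $u$ is a UG variable and $x$ is a long-code coordinate, so a subset $S$ of $\Psi$'s variables projects to a set $U \subseteq V(\Phi)$. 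The SA distribution on assignments to $S$ is defined by sampling $\sigma \sim \mu_U$ and then setting each $(u, x) \in S$ deterministically according to the dictator gadget indexed by $\sigma(u)$ at input $x$. Consistency on overlapping subsets is inherited from consistency of the $\mu_U$'s, as long as $|S| \cdot \mathrm{poly}(R) \leq N^{\delta_0}$; at this point the fraction of SA-satisfied edges in $\Psi$ is at least the completeness of the dictatorship test on a dictator assignment, which can be made arbitrarily close to $1$.

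The main obstacle is ensuring that the SA lift is simultaneously \emph{consistent}, \emph{supported on high-value assignments}, and \emph{respects the DAG structure}. Consistency requires the dictatorship test to be compatible with coordinate-by-coordinate SA lifting, which in practice means using a smooth or Label-Cover variant of the UG gap rather than the raw Khot--Vishnoi instance, and may require the test's local predicates to depend only on $O(1)$ coordinates per long-code. The DAG constraint adds a second subtlety: the ordering of long-code coordinates used in Theorem~\ref{thm:ug-hard} must yield an acyclic orientation simultaneously with the SA lift going through. Given that the reduction of Theorem~\ref{thm:ug-hard} already arranges these structural features, setting $\delta$ slightly smaller than $\delta_0$ completes the argument.
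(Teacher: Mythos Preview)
Your approach is entirely different from the paper's, and while the template you describe is legitimate in principle, it does not prove Theorem~\ref{thm:gmd} as stated.

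The paper does \emph{not} go through Unique Games at all for this theorem. It gives a direct construction: sparsify the Alon--Bollob\'as--Gy\'arf\'as--Lehel--Scott DAG (whose max dicut is $\tfrac14+o(1)$), assign random edge labels, and show the resulting instance has integral optimum at most $\frac{1+\epsilon}{4T}$. It then builds $n^{\delta}$-round Sherali--Adams local distributions by extending the Charikar--Makarychev--Makarychev Max-Cut machinery to the $(T{+}1)$-ary domain: random multicuts on an $l$-path-decomposable graph give inconsistent local distributions, an $l_2$ embedding plus a Gaussian ``argmax'' rounding makes them consistent, and an explicit symmetry analysis of the $2(T{+}1)$ vectors for an adjacent pair shows each edge is satisfied with probability $\geq \frac{1-O(\epsilon)}{T+1}$. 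The ratio $\frac{(1+\epsilon)/(4T)}{(1-O(\epsilon))/(T+1)}$ is exactly $\frac{T+1}{4T}(1+O(\epsilon))$, matching the $\tfrac14+\Omega(\tfrac1T)$ algorithm.

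Your route has two concrete problems. First, a factual slip: the completeness of the dictatorship test in Theorem~\ref{thm:ug-hard} is $\frac{1-\delta}{T}$ with $\delta=T^{-1/4}$ (Lemma~\ref{lem:completeness}), \emph{not} close to $1$. After lifting, the SA value of $\Psi$ is $\approx \tfrac{1}{T}$, not $1$. Second, and more seriously, this means the gap you obtain is
\[
\frac{\mbox{soundness}}{\mbox{completeness}} \;=\; \frac{\tfrac{1}{4T}+O(T^{-5/4})}{\tfrac{1-\delta}{T}} \;=\; \frac14 + O\bigl(T^{-1/4}\bigr),
\]
which is strictly weaker than the claimed $\frac{T+1}{4T}(1+\epsilon)=\tfrac14+\tfrac{1}{4T}+O(\epsilon)$ for every fixed $T$. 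So even executed flawlessly, your reduction proves only the ``large $T$'' asymptotic, not the theorem as stated. (The paper's bound is tight against the algorithm of Theorem~\ref{thm:gmd_algo}; the $T^{-1/4}$ loss is an artifact of the correlation parameter in the dictatorship test and cannot be removed from that reduction.)

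If you wanted to salvage your approach for the exact bound, you would need a dictatorship test with completeness $\frac{1-o(1)}{T+1}$ and soundness $\frac{1+o(1)}{4T}$, which is essentially what the paper's direct SA construction achieves---but there the ``noise'' is an LP artifact that can be taken arbitrarily small, whereas in the UG test the correlation $\rho$ must be bounded away from $1$ for the invariance principle to apply.
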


It is also interesting to compare the above results to other arity two Constraint Satisfaction Problems (CSPs), 
since whether the domain is Boolean (e.g. Max-Cut, Max-2SAT~\cite{GW95}) or not (e.g. 2-CSP with bounded domain~\cite{Hastad08}, Unique Games~\cite{CMM06}), SDP-based algorithms give a strictly better guarantee than LP-based or combinatorial algorithms. 
As discussed above, our result unconditionally says that a SDP-based algorithm cannot outperform a simple combinatorial algorithm for this arity two CSP (as $T$ increases).\footnote{ Formally, (approximation ratio of the SDP-based algorithm) / (approximation ratio of the best known combinatorial algorithm) = $1 + O(\frac{1}{T^{1/4}})$ for Generalized Max-Dicut. For Unique Games with $T$ labels, the SDP-based algorithm of Charikar et al.~\cite{CMM06}, which satisfies roughly $T^{-\epsilon / (2 - \epsilon)}$ fraction of constraints in an $(1 - \epsilon)$-satisfiable instance, performs better than the random assignment by any constant factor as $T$ increases.}

\subsection{Related Work and Our Techniques}

\paragraph{Formulation of Generalized Max-Dicut}
Our conceptual contribution is the introduction of Generalized Max-Dicut as a CSP that captures the complexity of Graph Pricing. It is inspired by the work of Khandekar et al.~\cite{KKMS09}, and our reduction is the almost same as their reduction from Max-Acyclic Subgraph (MAS) to Graph Pricing. 

In the natural formulation of Graph Pricing as a CSP, each vertex is assigned an (half-)integer price from $0$ to $B$ for the maximum budget $B$, and each customer becomes multiple constraints on two variables since the payoff linearly depends on the prices. It is shown in~\cite{KKMS09} that a half-integral optimal solution always exists for integral budgets, so this is a (almost) valid relaxation.
However, as each customer becomes multiple constraints with different payoffs, it seems hard to apply current techniques developed for well-studied CSPs to this formulation.

Khandekar et al.'s main idea was to use two well-known CSPs --- MAS for the hardness of approximation and Max-Dicut on directed acyclic graphs for the integrality gap of the standard LP. 
The former is harder to approximate, and the latter has the lower optimum.\footnote{
Under the Unique Games Conjecture, the best inapproximability ratio is 0.5 for MAS~\cite{GHMRC11} and 0.874 for Max-Dicut~\cite{Austrin10}. 
For the lower bound on integral optima, the maximum acyclic subgraph always has at least half of edges, while there is a directed acyclic graph where every directed cut cannot have more than $\frac{1}{4} + \epsilon$ fraction of edges for any $\epsilon > 0$.}
Generalized Max-Dicut seems to combine ingredients of both problems needed for Graph Pricing. It certainly inherits properties of Max-Dicut including low integral optima, but is much harder to approximate than Max-Dicut by Theorem~\ref{thm:ug-hard}.

\paragraph{Uniques Games-Hardness}
Proving hardness of Generalized Max-Dicut on general graphs is relatively straightforward --- proposing a dictatorship test with high completeness and low soundness, and plugging it into the {\em recipe} of Khot et al.~\cite{KKMO07} to deduce the hardness result. 
The dictatorship test is an instance of Generalized Max-Dicut with the set of vertices $\{ 0, ..., T \}^R$ (called {\em hypercube}) for some $R \in \mathbb{N}$. 
The main question in constructing a dictatorship test is how to sample $(x, y) \in \left\{0, \dots, T \right\}^2$, which induces a distribution on $\left\{0, \dots, T \right\}^2$. 
In Generalized Max-Dicut, $0$ is the only special label such that every directed edge is satisfied only if its tail is assigned 0. 
The simple combinatorial algorithm samples $0$ heavily --- the marginal distribution satisfies $\Pr[x = 0] \geq 0.5$, while the solution to the Sherali-Adams hierarchy constructed in Theorem~\ref{thm:gmd} treats 0 as other labels, having $\Pr[x = 0] = \frac{1}{T + 1}$. The latter distribution had a disadvantage that $x$ and $y$ are perfectly correlated --- the value of $x$ determines the value of $y$. 

To show the hardness based on the UGC (roughly equivalent to constructing a solution that {\em fools} SDP), we found that $\Pr[x = 0] = \frac{1}{T^{1/4}}$ is enough. In this case, we can ensure that the probability that dictators pass the test is large, while $x$ and $y$ behave almost independently. Based on the low correlation, we use the result of Mossel~\cite{Mossel10} to show low soundness. 

The resulting dictatorship test is not a DAG. 
To fix this problem, the final dictatorship test has the vertex set $V \times [T]^R$ for some DAG $D = (V, A)$.
For each edge $(u, v) \in A$, the above dictatorship test is performed so that each edge of the dictatorship test goes from the hypercube associated with $u$ to the one with $v$. This idea of keeping the dictatorship test acyclic is used in Svensson~\cite{Svensson12}, where he takes (the undirected version of) $D$ to be a complete graph. We take a nontrivial DAG found by Alon et al.~\cite{ABGLS07} where any directed cut has at most $\frac{1}{4} + o(1)$ fraction of edges. In the soundness case, if every hypercube is {\em pseudorandom}, the soundness analysis of an individual dictatorship test associated with each edge gives a rounding algorithm that finds a large directed cut in $D$, which contradicts the choice of $D$. 

This style of argument, composing the dictatorship test with a certain instance and solving this instance by the soundness analysis, resembles that of Raghavendra~\cite{Raghavendra08} for CSPs, Guruswami et al.~\cite{GHMRC11} for ordering CSPs, Kumar et al.~\cite{KMTV11} for strict CSPs, and Guruswami and Saket~\cite{GS10} for $k$-uniform $k$-partite Hypergraph Vertex Cover.
While they require the instance to have a good fractional solution (LP or SDP) but the low integral optimum, 
we only need the low integral optimum (of even a simpler problem) and our individual dictatorship test ensures completeness 
and part of soundness. We hope that this two-level technique --- constructing a simple dictatorship test for each edge and composing it with a certain instance with purely combinatorial properties --- makes it easier to bypass the barrier of finding a gap instance and prove hardness for many other problems, especially those with structured instances.

\paragraph{Sherali-Adams Gap}


On the integrality gap of Generalized Max-Dicut on a DAG, our work generalizes the work of Charikar et al.~\cite{CMM09}, which showed a similar result for Max-Cut, in several directions. 
The first obstacle is to find a DAG with a low integral optimum which is amenable to construct a good solution to the Sherali-Adams hierarchy. Previous works which obtained lower bounds for the Sherali-Adams hierarchy~\cite{ABLT06, dlVM07, CMM09} used $G(n, p)$, but $G(n, p)$ with an consistent orientation will not result in a low integral optimum. Instead, we show that sparsifying the aforementioned graph constructed in Alon et al.~\cite{ABGLS07}, which is already a DAG with a low integral optimum, gives other desired properties as well.


Given a set $S$ of $k$ vertices, we define a local distribution on the events $\left\{ l_V(v) = i \right\}_{v \in S, i \in T}$. One caveat of the above approach is that local distributions obtained might be inconsistent,
in a sense that $S$ and $S'$ might induce different marginal distributions on $S \cap S'$. Charikar et al.'s main idea is to embed them into $l_2$ and use hyperplane rounding to produce consistent ones. 
The most technical part of our work is to extend the hyperplane rounding to work for non-Boolean domains.
It is a complicated task in general, but we use the fact that the embedding is explicitly constructed for two adjacent vertices and it exhibits some symmetry, so that we can analyze the performance of our rounding. 
For $T = 1$, our result matches that of~\cite{CMM09}.

\subsection{Organization}
Section~\ref{sec:prelim} introduces problems and notations formally.
Section~\ref{sec:maxdicut_hardness} and Section~\ref{sec:gap} present Unique Games- hardness and Sherali-Adams integrality gaps of Generalized Max-Dicut respectively, which can be combined with the reduction in Section~\ref{sec:reduction} to give the same results for Graph Pricing.

\section {Preliminaries}
\label{sec:prelim}
For any positive integer $n$, let $[n] := \left\{0, 1, 2, \dots, n \right\}$ and $[n]^+ := \left\{1, 2, \dots, n \right\}$.
Given a sequence of numbers $a_1, ..., a_n$, let $\secondmax_j [a_j]$ be the second largest number among $a_j$'s. 

\paragraph{Graph Pricing} An instance of Graph Pricing consists of an undirected (possibly contain parallel edges) graph $G = (V, E)$ with budgets $b : E \rightarrow \mathbb{R}^+$ and weights $w : E \rightarrow \mathbb{R}^+$. Our goal is to find a pricing $p : V \rightarrow \mathbb{R}^+ \cup \left\{ 0 \right\}$ to maximize 
\[
\Val(p) := \sum_{e = (u, v) \in E} w(e) (p(u) + p(v)) \mathbb{I}[p(u) + p(v) \leq b(e)]
\]
where $\mathbb{I}[\cdot]$ is the indicator function. Let $\Opt(G, b, w) := \max_p \Val(p)$. 

\begin{rem}
This definition of Graph Pricing above coincides with General Graph Pricing defined in Khandekar et al.~\cite{KKMS09}. 
They presented an additional reduction from General Graph Pricing to Graph Pricing with no parallel edge and $w(e) = 1$. Throughout this paper, we use the definition above and allow weights and parallel edges for simplicity. In practice, weights can be naturally interpreted as the number of customers interested in the same pair.
\end{rem}

\begin{rem}
Another well-known pricing problem assumes that each customer will buy the cheapest item of her interest if she can afford it (i.e., $\Val(p) := 
\sum_{e = (u, v)} w(e) \min(p(u), p(v)) \mathbb{I}[\min(p(u), p(v)) \leq b(e)]$), which is called {\em unit-demand pricing}. Its approximability is similar to that of our single-minded pricing, including algorithms / hardness results for $k$-Hypergraph Pricing for large $k$~\cite{CLN13}, and a simple $\frac{1}{4}$-approximation algorithm for Graph Pricing $(k = 2)$. Indeed, Generalized Max-Dicut is also reducible to Unit-demand Graph Pricing and Theorem~\ref{thm:main} and~\ref{thm:gp} hold for it as well. We focus on Single-minded Graph Pricing here. 
\end{rem}

\paragraph{Generalized Max-Dicut} 
Fix a positive integer $T$. 
An instance of Generalized Max-Dicut($T$) consists of a digraph $D = (V, A)$ with a label $l_A : A \rightarrow [T]^+$ and a weight $w : A \rightarrow \mathbb{R}^+$ on each edge. Assume that  the sum of weights is normalized to 1. 
$(u, v)$ denotes the edge of $D$ from $u$ to $v$. We allow parallel edges from $u$ to $v$ if they have different labels (if parallel edges have the same label, simply merge them). 
Our goal is to find a labeling $l_V : V \rightarrow [T]$ (note vertices can be assigned 0, while edges are not) to maximize the weight of {\em satisfied} edges --- $(u, v)$ is satisfied when $l_V(u) = 0$ and $l_V(v) = l_A(u, v)$. 
Note than when $T = 1$, the problem becomes Max-Dicut. Given an instance $D = (V, A)$, $l_A$, and $w$, let 
$\Opt(G, l_A, w)$ be the maximum weight of edges satisfied by any labeling of vertices.
Given an assignment $l_V : V \rightarrow [T]$ to the vertices, let $\Val(l_V)$ be the weight of edges satisfied by $l_V$. Note that unlike Graph Pricing, the value of any assignment is normalized between 0 and 1.
The {\em normalized outdegree}, denoted by $\ndeg$, is defined to be $[\sum_u (\max_{(u, v) \in A} w(u, v))]^{-1}$. In unweighted instances (i.e. $w(e) = \frac{1}{|A|}$ for all $e$), $\ndeg \geq \frac{|A|}{|V|}$. 

\paragraph{Sherali-Adams Hierarchy}
In its most intuitive and redundant form, a feasible solution to the $r$-rounds of the Sherali-Adams hierarchy for a CSP with the domain $[q]$ consists of $\sum_{i = 1}^{r} \binom{n}{r} (q + 1)^r$ variables $\left\{ x_S(\alpha) \right\}$ for each subset of variables $S$ with cardinality at most $r$, and $\alpha \in [q]^S$. 
Each $x_S(\alpha)$ can be interpreted as the probability that the variables in $S$ are assigned $\alpha$. 
Therefore, it is required to satisfy the following natural conditions:
(1) $x_S(\alpha) \geq 0$ for all $S, \alpha$. 
(2) $\sum_{\alpha \in [q]^S} x_S(\alpha) = 1$ for all $S$.
(3) $\sum_{\alpha \in [q]^{S' \setminus S}} x_{S'}(\alpha \circ \beta) = x_S(\beta)$ for all $S \subseteq S', \beta \in [q]^S$, where $\alpha \circ \beta \in [q]^{S' \cup S}$ denote the joint assignment to the variables in $S'$.


The $r$-rounds of the Sherali-Adams hierarchy for Graph Pricing and Generalized Max-Dicut($T$) can be 
obtained by choosing an appropriate domain and an objective function, while using the constraints given above. 
For Graph Pricing, if we choose the domain to be $[B]$ where $B$ is the maximum budget, the objective function is the following.
\[
\sum_{e = (u, v)} w(e) \sum_{(i, j) \in [B]^2, i + j \leq b(u, v)} (i + j) \cdot x_{(u, v)}(i, j)
\]
Since $p(v)$ can be real, it is not clear whether this is a relaxation, even when the budgets are integers.
\cite{KKMS09} shows that there is a half-integral optimal solution.
The maximum budget $B$ can be exponentially big in the size of an instance,
and a standard trick is to consider only the powers of $(1 + \epsilon)$ as valid prices. It loses at most $\epsilon$ fraction of the optimum.
Our gap instance and proposed solution to the hierarchy have the marginal on each vertex supported by a constant number of prices, so they are applicable to any choice of the domain.

For Generalized Max-Dicut($T$), the domain is $[T]$, and the objective function is
\[
\sum_{(u, v) \in A} w(u, v) x_{(u, v)} (0, l_A(u, v)).
\]

Given an instance and a relaxation, we define the {\em integrality gap} to be the integral optimum divided by the value of the best solution to the relaxation. Since both our problems are maximization problems, it is at most 1 and a small number indicates a large gap. 

\section {Reduction from Generalized Max-Dicut to Graph Pricing}
\label{sec:reduction}
\begin{thm}
For any $T > 0$, there is a polynomial time reduction from an instance $(D = (V, A), l_A, \wgmd)$  of Generalized Max-Dicut($T$), where $D$ is acyclic and $\ndeg \geq \frac{1}{\epsilon}$, to an instance $(G, b, \wgp)$ of Graph Pricing such that $
\Opt(D, l_A, \wgmd) \leq \Opt(G, b, \wgp) \leq \Opt(D, l_A, \wgmd) + 3\epsilon.
$
\label{thm:reduction}
\end{thm}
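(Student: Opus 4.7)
The reduction puts $G$ on the same vertex set as $D$, with one customer per arc: for each $(u,v) \in A$ with label $\ell = l_A(u,v)$ and weight $w = \wgmd(u,v)$, introduce a customer on the pair $(u,v)$ with budget $b := \ell$ and weight $\wgp := w/\ell$ (parallel customers on a single pair are allowed). The design realizes the intended correspondence $p(v) \leftrightarrow l_V(v)$: the satisfied configuration $(p(u),p(v)) = (0,\ell)$ triggers payment $\ell$ and Graph Pricing contribution $\wgp \cdot \ell = w$, matching the payoff of the corresponding satisfied GMD arc. The lower bound $\Opt(G,b,\wgp) \geq \Opt(D,l_A,\wgmd)$ is then immediate: given any labeling $l_V$, set $p(v) := l_V(v)$, at which point every satisfied arc contributes exactly $w$ and the remaining customers contribute non-negatively.

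For the upper bound $\Opt(G,b,\wgp) \leq \Opt(D,l_A,\wgmd) + 3\epsilon$, I would start from an optimal pricing $p^*$ and build a GMD labeling $l_V$ with $\Valgmd(l_V) \geq \Valgp(p^*) - 3\epsilon$ in three stages, each responsible for one $\epsilon$-term. First, invoke the half-integrality of~\cite{KKMS09} (integer budgets $\ell \in [T]^+$ yield a half-integer optimum), and round the half-integer values to adjacent integers in $[T]$: the per-vertex rounding affects only the customers incident to $v$, and the hypothesis $\ndeg \geq 1/\epsilon$ --- equivalent to $\sum_u \max_{(u,v) \in A} w(u,v) \leq \epsilon$ --- caps the cumulative loss at $\epsilon$. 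Second, exploit the acyclicity of $D$ to process the vertices in reverse topological order, locally re-selecting each integer price to maximize the marginal contribution given the already-decided successors; the same normalized-outdegree bound controls the aggregate correction by $\epsilon$. Third, set $l_V(v) := p^*(v)$; the remaining gap between $\Valgp(p^*)$ and $\Valgmd(l_V)$ is due entirely to ``over-payments'' --- arcs $(u,v)$ with $p(u)+p(v) \leq \ell$ but $(p(u),p(v)) \neq (0,\ell)$ contributing fractionally to Graph Pricing while earning nothing in GMD --- and this slack is absorbed into the third $\epsilon$ by one final invocation of the normalized-outdegree estimate.

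\textbf{Main obstacle.} The conceptually hardest piece is controlling the over-payment. Unlike GMD, which credits only the exact configuration $(0,\ell)$, Graph Pricing grants any $p(u)+p(v) \leq \ell$ a reward proportional to $p(u)+p(v)$, so a single vertex could in principle amass partial rewards from many incoming arcs unsatisfied in GMD (and, since Graph Pricing customers are undirected, even from arcs whose ``tail'' and ``head'' roles are swapped relative to the arc in $D$). The two ingredients that neutralize this slack are the acyclicity of $D$, which lets us sequence the vertex decisions so that each rounding is effectively local to a vertex's incident arcs, and the normalized-outdegree hypothesis, which globally caps each vertex's residual slack by its (small) maximum outgoing weight. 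Summing the three $\epsilon$-losses yields the claimed $3\epsilon$ additive gap.
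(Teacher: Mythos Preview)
Your reduction is not the paper's, and it does not work. You set the budget of the customer for arc $(u,v)$ to be the label $\ell=l_A(u,v)\in[T]^+$. But then the pricing $p\equiv 1$ is devastating: on any instance in which every arc has label $\ell\ge 2$, each arc $(u,v)$ has $p(u)+p(v)=2\le\ell$ and contributes $\wgp(u,v)\cdot 2=\tfrac{2}{\ell}\,\wgmd(u,v)$; in particular if all labels equal $2$ the Graph Pricing value is the full total weight $1$. Take $D$ to be the Alon et al.\ DAG (so $\ndeg\ge 1/\epsilon$ is satisfied for large $n$) with every label set to $2$: then $\Optgmd$ is just the maximum dicut, which is $\tfrac14+o(1)$, while your $\Optgp\ge 1$. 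The claimed bound $\Optgp\le\Optgmd+3\epsilon$ is therefore false. None of your three stages rescues this: $p\equiv 1$ is already integral (stage~1 does nothing), it is locally optimal at every vertex (stage~2 does nothing), and the ``over-payment'' after setting $l_V\equiv 1$ is the entire value $1$, not $O(\epsilon)$ --- the normalized-outdegree bound $\sum_u\max_{(u,v)}w(u,v)\le\epsilon$ controls one max-weight outgoing edge per vertex, not the total.

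The paper's reduction is structurally different and this difference is the whole point. It fixes a topological order $s:V\to[n]^+$ with $s(u)>s(v)$ for $(u,v)\in A$ and a large $M$, and sets $b(u,v)=M^{Ts(v)+l_A(u,v)-1}$, $\wgp(u,v)=\wgmd(u,v)/b(u,v)$. The exponential spacing does two things you cannot get with budgets in $[T]^+$: (i) the price scale at a vertex $v$ encodes its label, since the only way to harvest close to $\wgmd(u,v)$ from the head is $p(v)\in(M^{Ts(v)+\ell-2},\,M^{Ts(v)+\ell-1}]$, which pins down $\ell$; and (ii) any nonzero ``label-scale'' price at the tail $u$ already exceeds every outgoing budget (because $s(u)>s(v)$), so the arc contributes $0$ to GP as well --- there is no over-payment from nonzero tails. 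The residual ``non-principal'' contribution from $p(u)$ is then a geometric series in $1/M$ per vertex, summed via the $\ndeg$ bound. In short, the topological order enters through the budget exponents, not through a reprocessing step; without that, the upper bound simply fails.
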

\begin{proof}
Fix an instance $(D = (V, A), l_A, \wgmd)$ of Generalized Max-Dicut($T$) with $n = |V|$ and $m = |A|$.
Let $G$ be the underlying undirected graph of $D$. 
Our reduction from Generalized Max-Dicut on directed acyclic graphs to Graph Pricing is almost the same as the one in Khandekar et al.~\cite{KKMS09} with some simplification. Let $M$ be a large number which will be fixed later.

The resulting instance of Graph Pricing is based on the same graph $G$. 
Since $D$ is acyclic, there is an injective function $s : V \rightarrow [n]^+$
such that for each edge $(u, v) \in A$, $s(u) > s(v)$. For each edge $(u, v) \in A$, $b(u, v) = M^{Ts(v) + l_A(u, v) - 1}$ and $\wgp(u, v) = \frac{\wgmd(u, v)}{b(u, v)}$.

To avoid confusion, let $\Optgmd$, $\Valgmd$ denote $\Opt$, $\Val$ for Generalized Max-Dicut instances, and $\Optgp$ and $\Valgp$ for Graph Pricing instances. 
Fix a labeling $l_V : V \rightarrow [T]$. The corresponding {\em canonical solution} $p : V \rightarrow \mathbb{R^+} \cup \left\{ 0 \right\}$ defined by 
\[
p(v)
= \begin{cases}
M^{T s(v) + l_V(v) - 1} & \mbox {if } l_V(v) \neq 0 \\
0 & \mbox {otherwise}
\end{cases}
\]
gives $\Valgp(p) \geq \Valgmd(l_V)$ --- for each $(u, v) \in A$ satisfied by $l_V$, $p$ gets $p(v) \wgp(u, v) = \wgmd(u, v)$.
Therefore, $\Optgp(G, b, \wgp) \geq  \Optgmd(D, l_A, \wgmd)$. 
The following lemma shows that the converse is almost true. The proof is given in Appendix~\ref{sec:proof_reduction}.

\begin{lem}
[\cite{KKMS09}] For any $p$, $\Valgp(p) \leq \Optgmd(D, l_A, \wgmd) + \frac{1}{M} + 2\epsilon$.
\label{lem:reduction}
\end{lem}

Taking $M \geq \frac{1}{\epsilon}$ proves the theorem.
\end{proof}

\section {Approximability of Generalized Max-Dicut}
\label{sec:maxdicut_hardness}
Recall that Generalized Max-Dicut(1) is exactly the well-known Max-Dicut problem, 
which admits a 0.874-approximation algorithm~\cite{LLZ02} as any Max-2CSP over the Boolean domain. 
As $T$ increases, however, the best approximation ratio for Max-2CSP over the domain of size $T+1$ can be at most $O(\frac{\log T}{\sqrt{T}})$~\cite{Chan13}, so viewing it as a general Max-2CSP does not yield a constant-factor approximation algorithm. 

There is a simple $\frac{1}{4}$-approximation algorithm, similar to the one for Graph Pricing --- assign 0 to each vertex with probability half independently and assign nonzero values to the remaining vertices greedily. 
The proof is based on the fact that we can easily find the optimal solution once the set of vertices assigned 0 is given.
For small $T$, we can do a little better based on a standard LP relaxation. 
The proof is given in Appendix~\ref{sec:maxdicut_algo}.

\begin{thm}
There is a polynomial time approximation algorithm for Generalized Max-Dicut($T$) that guarantees 
$\frac{1}{4} + \Omega(\frac{1}{T})$ of the optimal solution.
\label{thm:gmd_algo}
\end{thm}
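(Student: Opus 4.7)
The plan is to pair the combinatorial $\tfrac{1}{4}$-approximation with a rounding of the standard LP relaxation and show that their combination gains an additive $\Omega(1/T)$ over $\tfrac{1}{4}$. First I write the natural LP with variables $x_{v,i} \geq 0$ (with $\sum_{i \in [T]} x_{v,i} = 1$ for each $v \in V$) and $y_{(u,v)}$ (with $y_{(u,v)} \leq x_{u,0}$ and $y_{(u,v)} \leq x_{v,l_A(u,v)}$), maximizing $\sum_{(u,v)} w(u,v)\,y_{(u,v)}$; its value upper bounds $\Opt$.

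I then run two rounding schemes and return the better output. Algorithm~A (purely combinatorial) places each vertex $v$ into the label-$0$ set $S$ independently with probability $\tfrac{1}{2}$ and greedily assigns the best nonzero label to each $v \notin S$; a standard per-vertex analysis yields expected value at least $\tfrac{1}{4}\,\Opt$. Algorithm~B (LP-guided) instead places each $v$ into $S$ independently with probability $x_{v,0}$ and greedily labels $V \setminus S$; its expected value admits the per-vertex lower bound $\sum_v (1-x_{v,0}) \max_{j\neq 0} \sum_{u:\,l_A(u,v)=j} w(u,v)\,x_{u,0}$.

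The $\Omega(1/T)$ improvement then follows from a case split at each vertex $v$ based on a threshold $\tau = \Theta(1/T)$ applied to $x^\star_v := \max_{j \neq 0} x_{v,j}$. In the concentrated regime $x^\star_v \geq \tau$, the LP mass at $v$ is dominated by a single nonzero label, and Algorithm~B recovers an $\Omega(1)$ fraction of the LP contribution at $v$, beating $\tfrac{1}{4}$ by a constant. In the spread regime $x^\star_v < \tau$, the LP mass is distributed across $\Omega(T)$ nonzero labels of comparable weight, so the greedy step in Algorithm~A enjoys a variance gain: $\E[\max_j W_j^v]$ exceeds $\max_j \E[W_j^v]$ by an additive $\Omega(1/T)$ factor, because the maximum over $\Omega(T)$ nearly-independent comparable-mean candidates exceeds any single mean by that margin. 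Summing per-vertex and taking the better of Algorithms~A and~B yields a labeling of value at least $\bigl(\tfrac{1}{4} + \Omega(1/T)\bigr)\Opt$.

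The main obstacle is the quantitative variance estimate in the spread regime: the variables $W_j^v$ are not independent because distinct edges may share a common source $u$, so the variance lower bound must be extracted via a careful Chebyshev/Jensen-type argument exploiting the balanced structure of the per-label masses $M_j^v := \sum_{u:\,l_A(u,v)=j} w(u,v)$. Tuning the threshold $\tau$ and the LP rounding probability so that the two regimes dovetail without eroding the $\tfrac{1}{4}$ baseline in either case is the technical crux.
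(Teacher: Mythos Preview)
Your proposal has a genuine gap in the spread-regime argument, and the overall route is far more complicated than necessary. The case split is on the LP marginals $x^\star_v = \max_{j\neq 0} x_{v,j}$, but the claimed variance gain $\E[\max_j W^v_j] > \max_j \E[W^v_j]$ depends on the \emph{incoming edge-weight} structure $M^v_j = \sum_{u:\,l_A(u,v)=j} w(u,v)$, and these two need not be correlated. Concretely, suppose $v$ has a single incoming edge (label $1$) but many outgoing edges; an optimal LP may set $x_{v,0}$ close to $1$ and all $x_{v,j}$ tiny, putting $v$ in your spread regime, yet $\max_j W^v_j = W^v_1$ deterministically and there is zero variance gain. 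You acknowledge the dependence issue among $W^v_j$'s as the ``main obstacle,'' but the deeper problem is that spread LP mass at $v$ simply does not force many comparable-mean candidates $W^v_j$. The concentrated-regime claim that Algorithm~B recovers an $\Omega(1)$ fraction of the LP contribution at $v$ is also left unjustified; knowing $x^\star_v \geq \Theta(1/T)$ does not obviously yield a constant strictly above $1/4$ when combined with the factor $(1-x_{v,0})$.

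The paper's proof avoids all of this with a single direct rounding of the same LP: independently set $l_V(v)=0$ with probability $\tfrac{1+x_v(0)}{2}$ and $l_V(v)=i$ with probability $\tfrac{x_v(i)}{2}$ for $i\neq 0$. For an edge $(u,v)$ with $c := x_{(u,v)}(0,l_A(u,v))$, both $x_u(0)\geq c$ and $x_v(l_A(u,v))\geq c$, so the edge is satisfied with probability at least $\tfrac{(1+c)}{2}\cdot\tfrac{c}{2} \geq \tfrac{c}{4}+\tfrac{c^2}{4}$. Taking expectations and using convexity plus the easy bound $\Opt \geq \tfrac{1}{4T}$ gives $\E_{(u,v)}[\tfrac{c}{4}+\tfrac{c^2}{4}] \geq \tfrac{\Opt}{4}+\tfrac{\Opt^2}{4} \geq \tfrac{\Opt}{4}+\tfrac{\Opt}{16T}$. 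No greedy step, no case split, no variance estimate---the improvement comes entirely from the quadratic term and the lower bound $\Opt\geq\tfrac{1}{4T}$.
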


However, we prove that for large $T$, it is Unique Games-hard to improve the approximation ratio from $\frac{1}{4}$ to a better constant.

\begin{thm}
[Restatement of Theorem~\ref{thm:ug-hard}]
Under the Unique Games Conjecture, 
it is NP-hard to approximate Generalized Max-Dicut($T$) on directed acyclic graphs within a factor of $\frac{1}{4} + O(\frac{1}{T^{1/4}})$.
\end{thm}
Together with the reduction shown in Theorem~\ref{thm:reduction}, it immediately implies Theorem~\ref{thm:main} for Graph Pricing. Besides working on DAGs, the reduction also requires that $\ndeg$ be large, but it can be easily ensured by taking an Unique Games instance with large degree. See Appendix~\ref{sec:ugreduction} to see the full details. 

The theorem is proved by proposing a {\em dictatorship test} with high completeness and low soundness, 
combined with the standard technique to convert a dictatorship test to a hardness result based on the Unique Games Conjecture~\cite{KKMO07}. 
Constructing the dictatorship test has two components --- a simple dictatorship test based on correlation and Gaussian geometry, and composing it with a designated DAG. 
We present the dictatorship test here and defer the full reduction from Unique Games to Appendix~\ref{sec:ugreduction}. 

\subsection{Dictatorship Test}
We follow the notations in Mossel~\cite{Mossel10}.
Consider the hypercube $[T]^R$ where $[T] = \left\{0, 1, \dots, T \right\}$.
Let $\Omega_1 = \Omega_2 = [T]$. 
For $t \in [T]^+$, $\PP^t$ is a probability measure on $\Omega_1 \times \Omega_2$. Let $\PP$ be the marginal on $\Omega_i$ in $\PP^t$ (which does not depend on $t$ and $i$).
We want to ensure that 
$\PP(0) = \delta$, $\PP(j) = \frac{1 - \delta}{T}$ for $j \in [T]^+$ where $\delta = \frac{1}{T^{1/4}}$.
Let $\PP'$ be the distribution on $\Omega_1$ such that
$\PP'(0) = (\frac{1}{1 - \frac{1 - \delta}{T}})(\delta - \frac{1 - \delta}{T})$, $\PP'(j) = (\frac{1}{1 - \frac{1 - \delta}{T}})(\frac{1 - \delta}{T})$ (subtract $\frac{1 - \delta}{T}$ from $\PP(0)$ and renormalize). $\PP^t$ is defined by the following procedure to sample $(x, y)$. Sample $y$ according to $\PP$. If $y = t$, set $x = 0$. Otherwise, sample $x$ from $\PP'$ independently. It is easy to see that the marginal of both $x$ and $y$ is $\PP$. 
We show that $(x, y)$ are almost independent as $T$ increases. We define the correlation between two correlated spaces and prove the following lemma in Appendix~\ref{sec:proof_ughard}.
\begin{defi}
Given a distribution $\mathbf{Q}$ on $\Omega_1 \times \Omega_2$, we define the correlation $\rho(\Omega_1, \Omega_2; \mathbf{Q})$ by letting
\[
\rho(\Omega_1, \Omega_2; \mathbf{Q}) = \sup \left\{ \mathsf{Cov}[f, g] : f : \Omega_1 \rightarrow \mathbb{R}, g : \Omega_2 \rightarrow \mathbb{R}, \mathsf{Var}[f] = \mathsf{Var}[g] = 1 \right\}.
\]
\end{defi}
\begin{lem}
For any $t$, $\rho(\Omega_1, \Omega_2; \PP^t) \leq \sqrt{\frac{2}{T \delta}}$.
\label{lem:correlation}
\end{lem}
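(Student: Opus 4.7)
The plan is to identify $\rho(\Omega_1, \Omega_2; \PP^t)$ with the operator norm of the conditional-expectation operator restricted to mean-zero functions, and then compute it directly, exploiting the two-case structure of $\PP^t$. I would first recall the standard duality: for $f, g : [T] \to \RR$ with $\E_\PP[f] = \E_\PP[g] = 0$ and unit $L^2(\PP)$-norm, one has $\mathsf{Cov}[f, g] = \E[g \cdot Tf] = \langle g, Tf\rangle$, where $(Tf)(y) := \E[f(X) \mid Y = y]$ with $(X, Y) \sim \PP^t$. Maximizing first over $g$ gives $\rho(\Omega_1, \Omega_2; \PP^t) = \sup_f \|Tf\|_{L^2(\PP)}$ over mean-zero unit-norm $f$, so the whole task reduces to bounding $\|T\|$ on mean-zero functions.

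Next I would exploit the fact that the sampling description of $\PP^t$ yields an explicit formula for $T$. If $y = t$, then $X = 0$ deterministically, so $(Tf)(t) = f(0)$. If $y \neq t$, then $X$ is drawn from $\PP'$ independently of $Y$, so $(Tf)(y) = \E_{\PP'}[f]$, a single scalar independent of $y$. Writing $p := \PP(t) = (1-\delta)/T$ and using $\E_\PP[f] = 0$, which gives $\sum_{j \in [T]^+} p\, f(j) = -\delta f(0)$, a direct computation collapses $\E_{\PP'}[f]$ to a scalar multiple of $f(0)$:
\[
\E_{\PP'}[f] \;=\; \frac{\delta - p}{1-p}\, f(0) \;+\; \frac{1}{1-p}\sum_{j \in [T]^+} p\, f(j) \;=\; -\frac{p}{1-p}\, f(0).
\]
Thus $T$ acts like a rank-one operator on mean-zero functions, with all the mass concentrated on the coordinate $f(0)$.

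From here the computation is elementary: $\|Tf\|_{L^2(\PP)}^2 = p\, f(0)^2 + (1-p)\bigl(\tfrac{p}{1-p}\, f(0)\bigr)^2 = \tfrac{p}{1-p}\, f(0)^2$, while the constraint $1 = \|f\|_{L^2(\PP)}^2 \geq \PP(0) f(0)^2 = \delta f(0)^2$ gives $f(0)^2 \leq 1/\delta$. Combining,
\[
\rho(\Omega_1, \Omega_2; \PP^t)^2 \;\leq\; \frac{p}{\delta(1-p)} \;=\; \frac{1-\delta}{\delta(T - 1 + \delta)} \;\leq\; \frac{2}{T\delta}
\]
for all $T \geq 2$. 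The main obstacle I anticipate is simply spotting the cancellation that makes $\E_{\PP'}[f]$ proportional to $f(0)$; once that observation is in hand, the desired bound $\rho = O(1/\sqrt{T\delta})$ essentially reads off. A small secondary point is to verify the duality characterization of $\rho$ on these finite discrete spaces, which is a standard one-line computation.
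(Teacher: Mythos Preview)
Your proposal is correct and follows essentially the same route as the paper: both identify $\rho$ with the $L^2(\PP)$-operator norm of the conditional-expectation operator $L$ on mean-zero functions (the paper cites Mossel's Lemma~2.8 for this, you sketch the duality directly), observe that $(Lf)(t)=f(0)$ while $(Lf)(y)=\E_{\PP'}[f]=-\tfrac{p}{1-p}f(0)$ for $y\neq t$, and then bound $\|Lf\|^2=\tfrac{p}{1-p}f(0)^2\le\tfrac{2}{T}f(0)^2\le\tfrac{2}{T\delta}$ using $\delta f(0)^2\le\|f\|^2=1$. The computations and final inequalities match line for line.
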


Another component of the dictatorship test is the directed acyclic graph $D = (V, A)$ of Alon et al.~\cite{ABGLS07},
where every directed cut has size at most $(\frac{1}{4} + o(1))|A|$. 
Fix a graph $D = (V, A)$ such that every dicut cuts at most $(\frac{1}{4} +  \frac{1}{T^{1/4}})|A|$ edges. 
Note that the size of this graph depends only on $T$.  
We now describe the dictatorship test. The prover is expected to provide $F_v : [T]^R \rightarrow [T]$ for each $v \in V$.
\begin{enumerate}
\itemsep= 0ex
\item Choose $(u, v) \in A$ and $t \in [T]^+$ uniformly at random.
\item For each $i \in [R]^+$, pick $(x_i, y_i)$ according to $\PP^t$.
\item Accept if $F_u(x) = 0$ and $F_v(y) = t$. 
\end{enumerate}

This dictatorship test can be naturally interpreted as an instance of Generalized Max-Dicut($T$) with the vertex set $V \times [T]^R$. The weight of edge $((u, x), (v, y))$ with label $t$ is equal to the probability that it is sampled, and a labeling $l : V \times [T]^R \mapsto [T]$ passes with probability $\Val(l)$ (by $F_v(x) = l(v, x)$). 

\subsection{Completeness and Soundness}
The $i$th dictator function is $D_i : [T]^R \rightarrow [T]$ given by $D_i(x_1, \dots, x_R) = x_i$.
The purpose of the above dictatorship test is to allow dictatorship functions to be accepted with high probability while penalizing functions {\em far from} any dictator. 
The following lemma for completeness is immediate from the test --- for any fixed $t$ and $i$, $\Pr[x_i = 0, y_i = t] = \Pr[y_i = t] = \frac{1 - \delta}{T}$. 

\begin{lem} [Completeness]
Suppose that for some $i$, $F_v = D_i$ for all $v \in V$. The above test accepts with probability $\frac{1 - \delta}{T}$.
\label{lem:completeness}
\end{lem}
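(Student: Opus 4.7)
The plan is to reduce the acceptance event to a condition on a single coordinate. Since $F_u = F_v = D_i$ for every $u, v \in V$, step~3 of the test accepts iff $D_i(x) = 0$ and $D_i(y) = t$, i.e., iff $x_i = 0$ and $y_i = t$, where $t$ is the label drawn in step~1. The other coordinates $(x_j, y_j)$ for $j \neq i$ play no role in this event and can be ignored.

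Next, I would compute $\PP^t[x_i = 0, y_i = t]$ for a fixed $t \in [T]^+$. By the construction of $\PP^t$, the marginal law of $y_i$ is $\PP$, and whenever $y_i = t$ the procedure deterministically sets $x_i = 0$. Therefore $\{y_i = t\} \subseteq \{x_i = 0\}$ under $\PP^t$, giving
\[
\PP^t[x_i = 0,\, y_i = t] \;=\; \PP^t[y_i = t] \;=\; \PP(t) \;=\; \frac{1-\delta}{T}.
\]
Since this value is independent of $t$ and of the edge $(u, v) \in A$, averaging over the uniform choices in step~1 yields acceptance probability exactly $\frac{1-\delta}{T}$.

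The lemma is essentially definitional, and there is no real obstacle: the joint distribution $\PP^t$ was engineered precisely so that the event $\{y = t\}$ forces $x = 0$, which is exactly the pair of outcomes the dictator test looks for. The only things worth sanity-checking are that the marginal of $y$ under $\PP^t$ is indeed $\PP$ (as stated in the setup) and that the $R$ coordinates are sampled independently, so that extracting the $i$-th coordinate does not interact with the other coordinates.
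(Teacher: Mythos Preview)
Your proposal is correct and is essentially identical to the paper's argument, which is the one-line observation that for any fixed $t$ and $i$, $\Pr[x_i = 0, y_i = t] = \Pr[y_i = t] = \frac{1 - \delta}{T}$. You have simply spelled out the containment $\{y_i = t\} \subseteq \{x_i = 0\}$ and the averaging over $(u,v)$ and $t$ explicitly.
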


For $v \in V$ and $t \in [T]$, let $F_{v, t} : [T]^R \rightarrow \left\{ 0, 1 \right\}$ be defined such that $F_{v, t}(x) = 1$ iff $F_v(x) = t$, and $\mu_{v, t}
:= \Pr[F_v(x) = t] = \E[F_{v, t}(x)]$ where $x \sim \PP$. 
For each $F_{v, t}$ and $i \in [R]^+$, we define the {\em influence} of the $i$th coordinate to measure the extent a function depends on the $i$th coordinate. 
\[
\mathsf{Inf}_i[F_{v, t}] := \E [ \mathsf{Var} [ F_{v, t} (X_1, \dots, X_R) | X_j, 1 \leq j \leq R, j \neq i]].
\]
We use a similarly defined {\em low-degree influence} $\Inf_i^{\leq d}$ for our soundness (see~\cite{Mossel10} for the definition). 
\begin{lem} [Soundness]
For large enough $T$, there exist $\tau$ and $d$ (depending on $T$) such that
if $\Inf_i^{\leq d}(F_{v, t}) \leq \tau$ for all $i \in [R]^+$, $t \in [T]$, and $v \in V$,
the probability of accepting is at most $\frac{1}{4T} + \frac{4}{T^{5/4}}$.
\label{lem:soundness}
\end{lem}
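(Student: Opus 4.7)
My proof plan is to apply Mossel's invariance principle~\cite{Mossel10} to each individual pair $(F_{u,0},F_{v,t})$ in the test, reducing the bound to a Gaussian stability quantity, and then exploit the directed-cut property of the DAG $D$ to bound the resulting ``independent'' leading term. Fix an edge $(u,v)\in A$ and label $t\in[T]^+$: the conditional acceptance probability is $\E_{(x,y)\sim(\PP^t)^R}[F_{u,0}(x)F_{v,t}(y)]$, and Lemma~\ref{lem:correlation} bounds the correlation of $(\Omega_1,\Omega_2;\PP^t)$ by $\rho:=\sqrt{2/(T\delta)}=\sqrt{2}\,T^{-3/8}$. Mossel's Gaussian-bound theorem, applied to $(F_{u,0},F_{v,t})$ under the hypothesis $\Inf_i^{\leq d}(F_{u,0}),\Inf_i^{\leq d}(F_{v,t})\leq\tau$, gives
\[
\E[F_{u,0}(x)F_{v,t}(y)] \;\leq\; \Gamma_\rho(\mu_{u,0},\mu_{v,t}) + \eta,
\]
where $\Gamma_\rho(a,b):=\Pr[X\geq\Phi^{-1}(1-a),\,Y\geq\Phi^{-1}(1-b)]$ for $\rho$-correlated standard Gaussians and the slack $\eta=\eta(\tau,d,T)$ can be driven below $1/T^{5/4}$ by choosing $\tau$ sufficiently small and $d$ sufficiently large.

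I then split $\Gamma_\rho(a,b)=ab+(\Gamma_\rho(a,b)-ab)$. The identity $\Gamma_\rho(a,b)-ab=\int_0^\rho\phi_2(\Phi^{-1}(1-a),\Phi^{-1}(1-b);s)\,ds$, with $\phi,\phi_2$ the one- and two-dimensional Gaussian densities, gives
\[
\Gamma_\rho(a,b)-ab \;\leq\; \frac{\rho}{\sqrt{1-\rho^2}}\,\phi(\Phi^{-1}(1-a))\,\phi(\Phi^{-1}(1-b)).
\]
The map $\mu\mapsto\phi(\Phi^{-1}(\mu))$ is concave on $[0,1]$ (its second derivative equals $-1/\phi(\Phi^{-1}(\mu))\leq 0$), so Jensen's inequality together with the Gaussian tail estimate $\phi(\Phi^{-1}(\mu))=O(\mu\sqrt{\log(1/\mu)})$ yields $\sum_{t\in[T]^+}\phi(\Phi^{-1}(\mu_{v,t})) \leq T\,\phi(\Phi^{-1}((1-\mu_{v,0})/T)) = O(\sqrt{\log T})$. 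Hence $\sum_{t}(\Gamma_\rho(\mu_{u,0},\mu_{v,t})-\mu_{u,0}\mu_{v,t})=O(\rho\sqrt{\log T})$ per edge, contributing at most $O(\rho\sqrt{\log T}/T)=o(T^{-5/4})$ to $\Pr[\text{accept}]$ after dividing by $T$ and averaging over edges.

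For the main ``independent'' term, the identity $\sum_{t\in[T]^+}\mu_{v,t}=1-\mu_{v,0}$ yields $\frac{1}{T|A|}\sum_{(u,v),t}\mu_{u,0}\mu_{v,t} = \frac{1}{T}\E_{(u,v)\in A}[\mu_{u,0}(1-\mu_{v,0})]$. Interpret $(\mu_{v,0})_{v\in V}$ as the probabilities of a random $\{0,\star\}$-assignment in which $\sigma(v)=0$ independently with probability $\mu_{v,0}$: the expected number of edges of the induced directed cut from $\sigma^{-1}(0)$ to $\sigma^{-1}(\star)$ is exactly $\sum_{(u,v)\in A}\mu_{u,0}(1-\mu_{v,0})$. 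Since every dicut of $D$ has at most $(\tfrac14+T^{-1/4})|A|$ edges by the choice of $D$ from~\cite{ABGLS07}, we obtain $\E_{(u,v)}[\mu_{u,0}(1-\mu_{v,0})]\leq\tfrac14+T^{-1/4}$, and combining,
\[
\Pr[\text{accept}] \;\leq\; \tfrac{1}{4T}+\tfrac{1}{T^{5/4}}+o(T^{-5/4})+\eta \;\leq\; \tfrac{1}{4T}+\tfrac{4}{T^{5/4}}
\]
for sufficiently large $T$, by choosing $\tau,d$ so that $\eta\leq 2/T^{5/4}$.

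The main obstacle is the sharp control of $\Gamma_\rho(a,b)-ab$: a naive bound of $\rho/(2\pi)$ summed over the $T$ choices of $t$ already exceeds $T^{-5/4}$ by orders of magnitude, so one must exploit that the Gaussian density at the quantile $\Phi^{-1}(\mu_{v,t})$ decays rapidly as $\mu_{v,t}\to 0$, together with the concavity/Jensen step that keeps $\sum_t\phi(\Phi^{-1}(\mu_{v,t}))=O(\sqrt{\log T})$ despite the sum having $T$ terms. The invariance-principle slack $\eta$ is then controlled in a routine way via the quantitative rates of~\cite{Mossel10}.
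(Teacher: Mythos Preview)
Your overall plan is sound and reaches the same conclusion as the paper, but the route through the Gaussian analysis is genuinely different and there is one incorrect inequality along the way (which, fortunately, is not actually used at full strength).

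\medskip
\textbf{The slip.} The displayed bound
\[
\Gamma_\rho(a,b)-ab \;\leq\; \frac{\rho}{\sqrt{1-\rho^2}}\,\phi\bigl(\Phi^{-1}(1-a)\bigr)\,\phi\bigl(\Phi^{-1}(1-b)\bigr)
\]
is false in general: with $x=\Phi^{-1}(1-a)$, $y=\Phi^{-1}(1-b)$ one has
$\phi_2(x,y;s)/\bigl(\phi(x)\phi(y)\bigr)=\tfrac{1}{\sqrt{1-s^2}}\exp\!\bigl(\tfrac{s(2xy-s(x^2+y^2))}{2(1-s^2)}\bigr)$,
and for $x=y>0$ the exponent equals $sx^2/(1+s)>0$, so the ratio exceeds $1/\sqrt{1-s^2}$. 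What \emph{is} true, via the conditional factorization $\phi_2(x,y;s)=\tfrac{1}{\sqrt{1-s^2}}\phi(y)\,\phi\!\bigl(\tfrac{x-sy}{\sqrt{1-s^2}}\bigr)\le \tfrac{1}{\sqrt{2\pi(1-s^2)}}\phi(y)$, is the one-sided bound
\[
\Gamma_\rho(a,b)-ab \;\leq\; \frac{\rho}{\sqrt{2\pi(1-\rho^2)}}\,\phi\bigl(\Phi^{-1}(1-b)\bigr).
\]
This weaker inequality is exactly what you use downstream (you only sum the $b$-side factor over $t$ and apply Jensen to $\mu\mapsto\phi(\Phi^{-1}(\mu))$), so your estimate $\sum_t(\Gamma_\rho-\mu_{u,0}\mu_{v,t})=O(\rho\sqrt{\log T})$ and hence the final $o(T^{-5/4})$ contribution survive unchanged. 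Just replace the two-factor bound by the one-factor bound.

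\medskip
\textbf{Comparison with the paper.} After invoking Mossel's theorem and before the dicut step (which you do identically), the paper handles the $T$ labels differently: it first uses concavity of $b\mapsto\Gamma_\rho(a,b)$ (Lemma~\ref{lem:gauss_concave}) to collapse the average over $t$ to a single term $\Gamma_\rho\bigl(\mu_{u,0},(1-\mu_{v,0})/T\bigr)$, and then proves an ad hoc bound $\Gamma_\rho(a,b)\le ab+2T^{-5/4}$ valid for $b\le 1/T$ (Lemma~\ref{lem:gauss_bound}) via a case analysis on the sizes of $a,b$. You instead split off $ab$ first, use the integral representation $\Gamma_\rho-\Gamma_0=\int_0^\rho\phi_2\,ds$, and apply concavity of $\phi\circ\Phi^{-1}$ with Jensen to control the sum over $t$. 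Both arguments exploit a concavity property and the smallness of $\rho$; yours is somewhat cleaner conceptually (no case split) and yields a slightly better remainder $O(T^{-11/8}\sqrt{\log T})$ rather than $O(T^{-5/4})$, at the cost of needing the integral formula. Either way the final bound $\tfrac{1}{4T}+\tfrac{4}{T^{5/4}}$ follows.
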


\begin{proof}
We use the following theorem of Mossel~\cite{Mossel10}.
\begin{thm}[Theorem 6.3 of~\cite{Mossel10}]
\label{thm:mossel}
Let $(\Omega_1 \times \Omega_2, \PP)$ be correlated spaces such that
the minimum nonzero probability of any atom in $\Omega_1 \times \Omega_2$ is at least $\alpha$ and such
that $\rho(\Omega_1, \Omega_2; \PP) \leq \rho$. 
Then for every $\epsilon > 0$ there exist $\tau, d$ depending on $\epsilon$ and $\alpha$ such that if $f : \Omega_1^R \rightarrow [0, 1], g : \Omega_2^R \rightarrow [0, 1]$ satisfy
$
\max(\Inf_i^{\leq d}[f], \Inf_i^{\leq d}[g]) \leq \tau
$
for all $i$, then
\[
\E_{(x, y) \in \PP^{\otimes R}} [f(x)g(y)] \leq \Gamma_{\rho}(\E_x[f], \E_y [g]) + \epsilon.
\]
\end{thm}
The probability of accepting is at most
\[
\E_{(u, v) \in A} [ \E_{t \in [T]^+} [ \E_{(x, y) \sim (\PP^t)^{\otimes R}} [F_{u, 0}(x) F_{v, t}(y)] ] ] \\
\leq
\E_{(u, v) \in A} [ \E_{t \in [T]^+} [
 \Gamma_{\rho}(\mu_{u, 0}, \mu_{v, t}) + \frac{1}{T^{5/4}} ] ] 
\]
where the inequality follows from Theorem~\ref{thm:mossel} (set $\epsilon \leftarrow \frac{1}{T^{5/4}}$ and $\alpha = \Theta(\frac{1}{T^2})$). 
The following lemma, whose proof is given in Appendix~\ref{sec:gaussian}, shows that it is at most 
\[\E_{(u, v) \in A} [ 
\Gamma_{\rho}(\mu_{u, 0}, \frac{1 - \mu_{v, 0}}{T}) ] + \frac{1}{T^{5/4}}.
\]
\begin{lem}
Fix $\rho, a \in (0, 1)$. The function $f(b) := \Gamma_{\rho}(a, b)$ is concave. 
\label{lem:gauss_concave}
\end{lem}
The following lemma, whose proof is again given in Appendix~\ref{sec:gaussian}, shows that it is at most 
\[
\E_{(u, v) \in A} [ 
\mu_{u, 0} \frac{(1 - \mu_{v, 0})}{T} + \frac{2}{T^{5/4}}] + \frac{1}{T^{5/4}}
=
\frac{1}{T} \E_{(u, v) \in A} [\mu_{u, 0}(1 - \mu_{v, 0})] + \frac{3}{T^{5/4}}.
\]
\begin{lem}
For large enough $T$ and $\delta = \frac{1}{T^{1/4}}$, the following holds. 
For any $a \in [0, 1], b \in [0, \frac{1}{T}]$ and $\rho \in (0, \sqrt{\frac{2}{T\delta}})$, 
$\Gamma_{\rho}(a, b) \leq ab + \frac{2}{T^{5/4}}$.
\label{lem:gauss_bound}
\end{lem}
Given $\{ \mu_{v, 0} \}_{v \in V}$, imagine the rounding algorithm which puts $v \in S$ with probability $\mu_{v, 0}$ independently. 
The expected fraction of edges from $S$ to $V \setminus S$ is $\E_{(u, v) \in A} [\mu_{u, 0}(1 - \mu_{v, 0})]$, which is at most the fractional size of maximum dicut of $D$. Since we took $D$ to satisfy that $\E_{(u, v) \in A} [\mu_{u, 0}(1 - \mu_{v, 0})] \leq \frac{1}{4} + \frac{1}{T^{1/4}}$, the probability of accepting is at most $\frac{1}{4T} + \frac{4}{T^{5/4}}$ as desired.
Note that the probabilities of accepting in completeness and soundness differ by a factor of 
$
\frac{\frac{1}{4T} + \frac{4}{T^{5/4}} }{\frac{1 - \delta}{T}}
= \frac{ \frac{1}{4} + \frac{4}{T^{1/4}} } {1 - \frac{1}{T^{1/4}}}
= \frac{1}{4} + O(\frac{1}{T^{1/4}}).
$
\end{proof}



\section {Integrality Gaps for Generalized Max-Dicut}
\label{sec:gap}
Fix a positive integer $T$ and $\epsilon \in (0, \frac{1}{100})$. 
We present an instance of Generalized Max-Dicut($T$) $(D = (V, A), l_A)$ (we only deal with unweighted instances in this section and omit $w$) such that $D$ is acyclic, $|V| \leq \epsilon |A|$ (so that $\ndeg \geq \frac{1}{\epsilon})$, and a solution to $n^{\delta}$-rounds of the Sherali-Adams hierarchy such that the integrality gap is at most $\frac{T+1}{4T}(1 + \epsilon)$.
This result almost matches a simple $\frac{1}{4}$-approximation algorithm. 

Through the reduction given in Theorem~\ref{thm:reduction}, we also prove Theorem~\ref{thm:gp} --- a bad integral solution is guaranteed by the reduction, a good solution to the Sherali-Adams hierarchy is obtained by the mapping $l_V(u) = i$ to $p(u) = M^{Ts(u) + i - 1}$ (if $i \neq 0$) or 0 (otherwise). 
The budget in the resulting instance is an integer exponential in the size of instances, and our gap works even for a strong linear programming hierarchy where there is a variable for each vertex $v$ and an integer price $i$. 

The rest of this section is devoted to the proof of Theorem~\ref{thm:gmd}. 

\subsection{Obtaining a Good Instance}
\label{subsec:instance}

Our graph $D$ is obtained by randomly sparsifying the graph $D_* = (V, A_*)$ constructed in Alon et al.~\cite{ABGLS07}, followed by an appropriate postprocessing. 
$D_*$ is a directed acyclic graph with $n$ vertices and $m_* = \Theta(n^{\frac{5}{3}})$ edges. Its underlying undirected graph $G_* = (V, E_*)$ is a simple graph with the same number of vertices and edges, with the maximum degree $\Delta_* = \Theta(n^\frac{2}{3})$. Actually, $V = [n]^+$ and $(u, v) \in E$ only if $|u - v| \leq r$ where $r := \Theta(n^{\frac{2}{3}})$. It has the property that any directed cut has size at most $\frac{m_*}{4} + o(m_*)$ edges. 

The first version of $D = (V, A)$ is constructed as the following. $V := V_* = [n]^+$, and for each edge $(u, v) \in A_*$, put $(u, v) \in A$ with probability $p := \frac{\Delta}{\Delta_*}$ for some $\Delta$ to be fixed later. Let $G = (V, E)$ be the underlying undirected graph of $D$. 
$l_A$ is obtained by assigning each $l(u, v)$ a random number from $[T]^+$. 

Like previous integrality gap constructions for Max-Cut and Min-Vertex Cover (e.g.~\cite{ABLT06, dlVM07, STT07, CMM09}) , $D$ must be postprocessed to be amenable to have a Sherali-Adams solution with a large value.
Intuitively, we need to have the underlying undirected graph $G$ {\em locally sparse} --- if we look at a neighborhood of a certain vertex, the graph almost looks like a tree. We use the notion of \cite{CMM09} to measure how locally sparse the graph is. 

\begin{defi}
We say that $G'$ is $l$-path decomposable if every 2-connected subgraph $H$ of $G'$ contains a path of length $l$ such that every vertex of the path has degree 2 in $H$.
\end{defi}

The first version of the instance already has $\Opt(D, l_A) \approx \frac{1}{4T}$ with high probability. In order to make the instance locally sparse, we additionally need to remove some of the edges, but the fraction of removed edges is so small that it does not affect $\Opt(D, l_A)$ too much. 
As a result, we get the following theorem. The proof is given in Appendix~\ref{sec:proof_gap}.

\begin{thm}
Given $T$ and $\epsilon, \mu > 0$, there exist constants $\Delta, \delta$ and $l = \Theta(\log n)$ (all constants depending on $T$ and $\epsilon, \mu$) such that there is an instance of Generalized Max-Dicut($T$) $(D, l_A)$ with the underlying undirected graph $G$ with the following properties.
\begin{itemize}
\itemsep=0ex
\item Acyclicity: $D$ is a DAG.
\item Low integral optimum: $\Opt(D, l_A) \leq \frac{1 + \epsilon}{4T}$.
\item Almost regularity: Maximum degree of $G$ is at most $2\Delta$, and $G$ has at least $\Omega(\Delta n)$ edges.
\item Local sparsity: For $k \leq n^{\delta}$, every induced subgraph of $G$ on $(2\Delta)^lk$ vertices is $l$-path decomposable.
\item Large noise: For $k \leq n^{\delta}$, $(1 - \mu)^{l/10} \leq \frac{\mu}{5k}$.
\end{itemize}
\label{thm:goodinstance}
\end{thm}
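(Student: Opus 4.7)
\medskip

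\noindent\textbf{Proof proposal.} The plan is to carry out the construction in three stages: (i) randomly sparsify Alon et al.'s DAG $D_* = (V,A_*)$ to get the right degree, (ii) assign a random label $l_A(u,v) \in [T]^+$ to every surviving arc, and (iii) post-process by deleting a small set of edges to enforce local sparsity. The five conclusions of the theorem will then all be verified by standard concentration and first-moment arguments, with appropriately chosen constants $\Delta$, $l = \Theta(\log n)$, and $\delta$.

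First I would set $p = \Delta/\Delta_*$ and retain each arc of $A_*$ independently with probability $p$, taking $\Delta = \Delta(T,\epsilon)$ a large absolute constant (independent of $n$) to be determined. Since $\Delta_* = \Theta(n^{2/3})$ and $m_* = \Theta(n^{5/3})$, the expected degree of each vertex is $\Theta(\Delta)$ and $\E|A| = \Theta(\Delta n)$. A Chernoff bound together with a union bound over the $n$ vertices gives almost regularity (max degree $\le 2\Delta$ and $|A| \ge \Omega(\Delta n)$) with probability $1 - o(1)$. Acyclicity is inherited from $D_*$ and is preserved under edge deletions, so it will remain valid after postprocessing. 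For low integral optimum, fix any vertex labeling $l_V : V \to [T]$, let $S = l_V^{-1}(0)$, and let $E_*(S,\bar S)$ be the arcs of $D_*$ going out of $S$. An arc $(u,v) \in A_*(S,\bar S)$ is satisfied iff it survives sparsification (probability $p$) \emph{and} its random label equals $l_V(v) \ne 0$ (probability $1/T$). Hence the expected number of satisfied arcs is $p|A_*(S,\bar S)|/T \le (1/4 + o(1)) p m_* / T$, while $\E|A| = pm_*$, so the expected fraction is at most $(1/4+o(1))/T$. A Chernoff bound gives that for each fixed $l_V$ the fraction deviates by more than $\epsilon/(8T)$ with probability $\exp(-\Omega(\epsilon^2 \Delta n / T^2))$. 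Since there are $(T+1)^n$ labelings, taking $\Delta$ large enough in terms of $T,\epsilon$ makes the union bound succeed and yields $\Opt(D,l_A) \le (1+\epsilon)/(4T)$ simultaneously.

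Next I would perform the postprocessing to obtain local sparsity, following the strategy of Charikar--Makarychev--Makarychev \cite{CMM09}. Call a subgraph $H$ of $G$ \emph{bad} if it is 2-connected, has $s \le (2\Delta)^l n^{\delta}$ vertices, and contains no path of length $l$ whose internal vertices all have degree exactly 2 in $H$. A standard combinatorial observation (used in \cite{CMM09}) is that any minimal bad subgraph has at least $\Omega(s/l)$ vertices of degree $\ge 3$, which forces at least $s(1 + \Omega(1/l))$ edges. Since our $G$ sits inside a graph of maximum degree $\Delta_* = \Theta(n^{2/3})$ and since $p = \Delta/\Delta_*$, the expected number of minimal bad subgraphs on $s$ vertices can be bounded by something of the form $n \cdot (\Delta_* p)^{s-1} \cdot p^{\Omega(s/l)} \cdot \mathrm{poly}(s) = n \cdot \Delta^{s-1} \cdot (\Delta/\Delta_*)^{\Omega(s/l)}$; since $\Delta$ is constant while $\Delta_* = n^{2/3}$, taking $\delta$ small (so $s \le n^{\delta+o(1)}$) and $l = c\log n$ with $c$ chosen so that $(\Delta/\Delta_*)^{s/l}$ dominates $\Delta^s$, the total expected number of bad subgraphs summed over $s$ is $o(\Delta n)$. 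Deleting one edge from each bad subgraph therefore removes at most an $o(1)$ fraction of $A$, which by increasing $\epsilon$ slightly preserves the low optimum bound, the almost regularity bound, and acyclicity. The resulting graph $G$ is then $l$-path decomposable on every induced subgraph of size at most $(2\Delta)^l n^{\delta}$.

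Finally, the large noise condition is a purely numerical constraint: given $\mu > 0$ and $\delta$ chosen above, I pick $l = C \log n$ with $C \ge 20 (\delta + 1) / \mu$ so that $(1-\mu)^{l/10} \le e^{-\mu l/10} \le \mu / (5 n^\delta)$ for all large $n$, which handles all $k \le n^\delta$. The main obstacle is the postprocessing step: balancing the three constraints that $l$ be large enough for local sparsity and for the noise condition, while $\Delta^s(\Delta/\Delta_*)^{s/l}$ remains summable to $o(\Delta n)$ for all $s$ up to $(2\Delta)^l n^\delta$. The key leverage is that $\Delta_*$ is polynomial in $n$ whereas $\Delta$ is a constant, so one can afford $l$ logarithmic in $n$ and still make the bad-subgraph count vanish. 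Once these parameters are synchronized, all five conclusions of the theorem hold for the postprocessed instance with probability $1 - o(1)$, and the theorem follows by fixing any such realization.
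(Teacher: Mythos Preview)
Your overall plan matches the paper's proof: sparsify Alon et al.'s DAG with probability $p=\Delta/\Delta_*$, assign uniform random labels, bound the integral optimum via Chernoff and a union bound over all $(T+1)^n$ labelings, and obtain local sparsity by a first-moment count of overly dense small subgraphs (the paper phrases this via the Arora--Bollob\'as--Lov\'asz--Tourlakis lemma that any $2$-connected graph on $t$ vertices with at most $(1+\tfrac{1}{3l-1})t$ edges contains a long degree-$2$ path, which is exactly the contrapositive of your ``minimal bad subgraph has $\ge s(1+\Omega(1/l))$ edges'' observation). The final parameter balancing is also as you describe.

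There is one concrete step that fails as written. For almost regularity you invoke a Chernoff bound on each vertex's degree followed by a union bound over the $n$ vertices. But $\Delta$ is a constant depending only on $T,\epsilon$ (the low-optimum union bound requires only $\Delta=\Omega(T\log T/\epsilon^2)$), so the per-vertex failure probability $\exp(-\Omega(\Delta))$ is a fixed constant and the union bound over $n$ vertices diverges; indeed some vertices \emph{will} have degree exceeding $2\Delta$. The paper fixes this by instead bounding the expected number of \emph{bad edges} (edges incident to a vertex of degree $>2\Delta$): conditioning on an edge surviving, its endpoints are bad with probability at most $2\exp(-\Delta/4)$, so the expected number of bad edges is $O(\exp(-\Delta/4)\cdot\Delta n)$, and by Markov this is an $o(\epsilon)$-fraction of $|A|$ with probability at least $1/2$. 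Deleting these edges then enforces the degree cap at negligible cost. A secondary remark: the paper's local-sparsity count is actually sharp enough that the expected number of dense small subgraphs is $o(1)$ (not merely $o(\Delta n)$), so no deletion is needed for that part beyond a separate girth-control step; if you keep your deletion-based version, be sure to include the combinatorial factor $\binom{s^2}{\Omega(s/l)}$ for choosing the excess edges, which your sketch omits.
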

The last condition, large noise, is needed to ensure that in a LP solution, 
even though adjacent vertices are very correlated to give a large value, far away vertices behave almost independently. The meaning of each condition will be elaborated in later sections.


\subsection{Constructing (Inconsistent) Local Distributions}
\label{subsec:inconsistent_local}
Let $D = (V, A)$, $l_A$, and $G = (V, E)$ be the instance of Generalized Max-Dicut($T$) and its underlying undirected graph constructed as above. In this subsection, given a set of $k \leq n^{\delta}$ vertices $S = \left\{v_1, \dots, v_k \right\}$ we give a distribution on events 
\[ \left\{ l_V(v_1) = x_1, \dots, l_V(v_k) = x_k \right\}_{x_1, \dots, x_k \in [T]}. \]

The local distributions we construct in this subsection are not consistent; for different sets $S$ and $S'$, the marginal distribution on $S \cap S'$ from the distribution on $S$ can be different from the same marginal from the distribution on $S'$ (albeit they are close). This problem is fixed in the next subsection. 

Let $d(u, v)$ be the shortest distance between $u$ and $v$ in $G$ and $V' \subseteq V$ be the set of vertices whose shortest distance to $S$ is at most $l$. Let $G'$ and $D'$ be the subgraph of $G$ and $D$ induced on $V'$, respectively. Since $|V'| \leq (2\Delta)^l k$, $G'$ is $l$-path decomposable by Theorem~\ref{thm:goodinstance}.
Note that if $d(u, v) < l$, $d(u, v)$ is also the shortest distance between $u$ and $v$ in $G'$.
By the definition, a $l$-path decomposable graph does not have a cycle of length $l$, so if $d(u, v) < \frac{l}{2}$, the shortest path between $u$ and $v$ must be unique.

We begin by establishing a fact that when $G'$ is path-decomposable (intuitively looks similar to a tree), there is a distribution on the partitions of $V$ (i.e. multicuts) such that close vertices are unlikely to be separated but far vertices are likely to be separated. If $G'$ is a tree, it is obtained by deleting each edge independently with probability $\mu$. 
The noise parameter $\mu$ will be fixed later depending only on $T$ and $\epsilon$, so is asymptotically greater than $\frac{1}{l} = O(\frac{1}{\log n})$.

\begin{thm}
[\cite{CMM10}]
Suppose $G' = (V, E)$ is an $l$-path decomposable graph. Let $L = \lfloor l / 9 \rfloor; \mu \in [1/L, 1]$. Then there exists a probabilistic distribution of multicuts of $G'$ (or in other words random partition of $G'$ in pieces) such that the following properties hold. For every two vertices $u$ and $v$,
\begin{enumerate}
\itemsep=0ex
\item If $d(u, v) \leq L$, then the probability that $u$ and $v$ are separated by the multicut (i.e. lie in different parts) equals $1 - (1 - \mu)^{d(u, v)}$; moreover, if $u$ and $v$ lie in the same part, then the unique shortest path between $u$ and $v$ also lies in that part. 

\item If $d(u, v) > L$, then the probability that $u$ and $v$ are separated by the multicut is at least $1 - (1 - \mu)^L$.

\item Every piece of the multicut partition is a tree.
\end{enumerate}
\label{thm:partition}
\end{thm}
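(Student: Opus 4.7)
The plan is to prove the theorem by induction on the number of $2$-connected blocks of $G'$. In the base case, $G'$ is a forest; put each edge into the multicut independently with probability $\mu$. The unique path from $u$ to $v$ then has $d(u,v)$ edges, so separation has probability exactly $1-(1-\mu)^{d(u,v)}$, and if $u$ and $v$ are not separated the entire path lies in their common piece. Every piece is a subtree, so (3) is automatic and (1), (2) both follow.

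For the inductive step, if $G'$ contains a cycle it has a $2$-connected subgraph $B$, and by $l$-path decomposability $B$ contains a path $P = v_0, v_1, \ldots, v_l$ whose interior vertices have degree $2$ in $B$. The natural construction is to sample independent Bernoulli$(\mu)$ coins on the edges of $P$ and add the $1$-edges to the multicut. Because the interior of $P$ has degree $2$ in $B$, cutting even one edge of $P$ disconnects $B$ into strictly smaller $2$-connected pieces; with probability $1-(1-\mu)^l$ at least one coin is $1$, so the number of $2$-connected blocks strictly decreases and induction applies to the residual graph. For pairs $v_i,v_j$ on $P$ with $|i-j|\le L$ this step already gives separation probability $1-(1-\mu)^{|i-j|}$ as demanded by (1). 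For pairs $u,v$ with at least one endpoint off $P$, one uses uniqueness of shortest paths (which holds whenever $d(u,v)<l/2$ since $l$-path decomposability forbids short cycles), so the separation event depends on the cuts along a single well-defined path and reduces to the inductive hypothesis after conditioning on the coins on $P$.

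The main obstacle is the ``bad'' event that no edge of $P$ is cut, which has probability $(1-\mu)^l$. Leaving the cycle intact breaks property (3), but forcing a cut perturbs the marginals required by (1). The large-noise condition of Theorem~\ref{thm:goodinstance}, $(1-\mu)^{l/10}\le\mu/(5k)$, is calibrated precisely so that this event is negligible on the fixed set of $k\le n^\delta$ vertices we care about. The cleanest repair I can see is to slightly lower the per-edge coin probability to some $\mu'<\mu$ and, in the bad event, cut a uniformly random edge of $P$, choosing $\mu'$ so that the total marginal separation probability for $v_i,v_j$ equals $1-(1-\mu)^{|i-j|}$ exactly. Achieving this equality simultaneously for all $|i-j|\le L$ is the delicate combinatorial step and uses the gap between $L=l/9$ and the path length $l$ together with the large-noise condition; property (2) then follows from (1) applied to any pair at distance $L$ along the shortest path combined with monotonicity.
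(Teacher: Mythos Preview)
This theorem is not proved in the present paper: it is quoted verbatim from \cite{CMM10} and used as a black box (note the citation in the theorem header). There is therefore no ``paper's own proof'' to compare your proposal against.

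That said, your sketch has genuine gaps that would need to be closed if you want a self-contained proof. First, the repair you propose for the bad event---lowering the per-edge probability to some $\mu'<\mu$ and forcing a random cut when no edge of $P$ is cut---must make the marginal separation probability equal to $1-(1-\mu)^{|i-j|}$ \emph{simultaneously for every} $|i-j|\le L$. A single scalar $\mu'$ gives you one degree of freedom, not $L$, so this cannot work as stated; the actual CMM argument is more careful here (it chooses which edges of $P$ to cut via a correlated scheme, not i.i.d.\ coins plus a patch). Second, you invoke the large-noise condition $(1-\mu)^{l/10}\le \mu/(5k)$ from Theorem~\ref{thm:goodinstance}, but that inequality involves the parameter $k$ (the size of the fixed vertex set $S$), which does not appear anywhere in the statement of Theorem~\ref{thm:partition}. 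The theorem must hold for an arbitrary $l$-path decomposable graph and any $\mu\in[1/L,1]$, with no reference to $k$; you cannot import an external hypothesis to rescue the induction. Third, the degree-$2$ guarantee is in the $2$-connected subgraph $H$, not in $G'$, so interior vertices of $P$ may have additional incident edges in $G'$ lying in other blocks; your claim that ``the separation event depends on the cuts along a single well-defined path'' for pairs with an endpoint off $P$ needs a block-tree argument that you have not supplied.
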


Based on this random partitioning, we define the distribution on the vertices in $S$ (actually in $V'$). 
For each piece which is a tree, pick an arbitrary vertex $v$ in the tree, choose $l_V(v)$ uniformly at random, 
and propagate this label to {\em weakly satisfy} every edge in the tree --- an undirected edge $(u', v') \in E$ (swap $u'$ and $v'$ if necessary to assume $(u', v') \in A$) is weakly satisfied when $l_V(v') - l_V(u')= l_A(u', v')$ over $\ZZ_{T + 1}$. Note that this definition is necessary for the original definition of satisfaction, but not sufficient. 

It is clear that the choice of root in each tree does not matter, and the marginal distribution of each $l_V(v)$ is uniform on $[T]$. For vertices $u$ and $v$ with $d(u, v) \leq L$, we say that label $i$ for $u$ and $i'$ for $v$ {\em match} if $l_V(u) = i, l_V(v) = i'$ can be extended to weakly satisfy every edge on the unique shortest path between $u$ and $v$ (there are $T + 1$ such pairs). If $u$ and $v$ are close, $l_V(u)$ and $l_V(v)$ will be correlated in a sense that if $i$ and $i'$ match, $l_V(u) = i$ almost implies $l_V(v) = i'$, while it is not the case when $u$ and $v$ are far apart. The following corollary formalizes this intuition. The proof is in Appendix~\ref{subsec:proof_distribution}.


\begin{cor}
Suppose $G' = (V', E')$ is an $l$-path decomposable graph. Let $L = \lfloor l / 9 \rfloor$; $\mu \in [1 / L, 1]$. Then there exists a random mapping $r : V' \rightarrow [T]$ such that
\begin{enumerate}
\itemsep=0ex
\item If $d := d(u, v) \leq L$ then 
\[
\Pr[r(u) = i, r(v) = i'] = 
\begin{cases}
\frac{(1 - \mu)^d}{(T + 1)} + \frac{1 - (1 - \mu)^d}{(T + 1)^2} & \mbox{if } i \mbox{ and } i' \mbox{ match} \\
\frac{1 - (1 - \mu)^d}{(T + 1)^2} & \mbox{otherwise}
\end{cases}
\]
\item If $d > L$ then $\frac{1 - (1 - \mu)^L}{(T + 1)^2} \leq \Pr[r(u) = i, r(v) = i'] \leq \frac{1 - (1 - \mu)^L}{(T + 1)^2} + \frac{(1 - \mu)^L}{T + 1}$ for any $i, i' \in [T]$.
\end{enumerate}
\label{cor:distribution}
\end{cor}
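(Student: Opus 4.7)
The plan is to instantiate Theorem~\ref{thm:partition} on $G'$ to obtain a random multicut whose pieces are all trees, and then define $r$ by independently picking, for each piece, a uniformly random label in $[T]$ at an arbitrary root vertex and propagating along tree edges so that every edge inside that piece is weakly satisfied. Since each directed edge $(u',v') \in A$ forces $r(v') - r(u') \equiv l_A(u',v') \pmod{T+1}$, this propagation is well-defined on trees (there is a unique tree path from the root to any vertex) and the marginal of each $r(v)$ is uniform on $[T]$.

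For part (1), fix $u,v$ with $d := d(u,v) \leq L$. By Theorem~\ref{thm:partition}(1), with probability $(1-\mu)^d$ the two lie in a common piece, and in that event the unique shortest $u$-$v$ path (unique because $G'$ is $l$-path decomposable and $d \leq L < l/2$) also lies in that piece. Conditioned on this, $r(v)$ is a deterministic function of $r(u)$ along the path, so $\{r(u)=i, r(v)=i'\}$ can occur only when $i,i'$ match, and when they do it occurs with probability $\frac{1}{T+1}$ (the chance the root label aligns). With the remaining probability $1-(1-\mu)^d$ the vertices sit in different pieces, their root labels are independent uniform, and each joint outcome has probability $\frac{1}{(T+1)^2}$. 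Summing the two contributions gives exactly the expressions claimed.

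For part (2), when $d(u,v) > L$, set $p := \Pr[u,v \text{ separated}] \geq 1-(1-\mu)^L$ by Theorem~\ref{thm:partition}(2). Conditioning on separation gives an independent uniform contribution of $\frac{p}{(T+1)^2}$, which already yields the lower bound $\frac{1-(1-\mu)^L}{(T+1)^2}$. For the upper bound, the same propagation argument shows that the conditional joint probability given that $u,v$ share a piece is at most $\frac{1}{T+1}$ regardless of $(i,i')$. Hence
\[
\Pr[r(u)=i,\, r(v)=i'] \;\leq\; \frac{p}{(T+1)^2} + \frac{1-p}{T+1},
\]
which is decreasing in $p$ and is therefore maximized at $p = 1-(1-\mu)^L$, giving the stated upper bound.

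The only conceptual point to verify is that the weak-satisfaction propagation inside a tree piece agrees with the "match" relation defined in the statement via the unique shortest path in $G'$: this is immediate from Theorem~\ref{thm:partition}(1), which places the shortest path inside the piece, combined with the fact that in a tree the unique tree path between two vertices coincides with that shortest path. Beyond this verification, the proof is a direct case analysis on the random partition event, so I do not anticipate any further obstacle.
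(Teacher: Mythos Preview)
Your proposal is correct and follows essentially the same route as the paper: instantiate Theorem~\ref{thm:partition}, assign a uniform root label per tree piece and propagate by weak satisfaction, then do the same case split on whether $u,v$ are separated. Your write-up is in fact slightly more careful than the paper's---you make explicit the monotonicity in $p$ for the upper bound in part~(2) and the reason the tree path inside a piece coincides with the shortest path in $G'$ (Theorem~\ref{thm:partition}(1) plus uniqueness of tree paths)---but the argument is the same.
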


\begin{defi}
For any vertices $u \neq v$ and $i, i' \in [T]$, let 
$\rho(u(i), v(i')) := \Pr[r(u) = i, r(v) = i']$ if $d(u, v) \leq L$, or $\frac{1}{(T+1)^2}$ otherwise. 
$\rho(v(i), v(i)) := \frac{1}{T+1}$ and $\rho(v(i), v(i')) := 0$ for $i \neq i'$.
Since the shortest path between $u$ and $v$ is unique when $d(u, v) \leq L$, $\rho$ is uniquely defined given $G$, $D$, $l_A$ and does not depend on $S,$ $V'$, $G'$, $D'$ which induce a local distribution.
\end{defi}

\begin{defi}
Fix a set of $k$ vertices $S = \left\{ v_1, \dots, v_k \right\}$. 
For any vertex $u, v \in S$ and $i, j \in [T]$, let 
$\nu_S(u(i), v(i')) := \Pr[x(u) = i, x(v) = i']$ in the local distribution on $S$ defined by $r$ in Corollary~\ref{cor:distribution}. 
\end{defi}

\subsection{Geometric Embedding and Rounding}
\label{subsec:embedding_rounding}
In this subsection, we still fix a set of $k$ vertices $S = \left\{ v_1, \dots, v_k \right\}$ and produce a distribution on the events $\left\{ l_V(v_1) = x_1, \dots, l_V(v_k) = x_k \right\}_{x_1, \dots, x_k \in [T]}$. 
The difference from the last subsection is that the resulting distributions become consistent --- the marginal distribution on $S \cap S'$ does not depend on the choice of its superset ($S$ or $S'$) that is used to obtain a larger local distribution. 

\subsubsection{Embedding}
Consider $\rho$ and $\nu_S$ defined in the last subsection. 
$\rho$ and $\nu_S$ both capture the pairwise distribution between the events $\left\{ l_V(v) = x \right\}_{v \in S, x \in [T]}$, but each of them has its own defects. $\nu_S$ depends on the choice of $S$, so does not yield consistent local distributions. $\rho$ does not depend on $S$, but for far vertices, Corollary~\ref{cor:distribution} does not guarantee any local distribution consistent with it. However, they are close in a sense --- they are identical when $d(u, v) \leq L$ and differ by at most $\frac{(1 - \mu)^L}{T + 1}$ otherwise. 

The main idea of Charikar et al.~\cite{CMM09} is to 
interpret $\rho$ and $\nu_S$ as pairwise distances between events
and embed $\rho$ to $l_2$ with small error. It is based on the fact that $\rho$ and $\nu_S$ are close for any $S$ and $\nu_S$ is readily embeddable to $l_2$. 
Since the embedding into $l_2$ is uniquely defined by the pairwise distances and $\rho$ does not depend on the choice of $S$, geometric rounding schemes based on the embedding yield consistent local distributions.
Let $v(i)$ be the vector corresponding to the event $l_V(v) = i$. Our goal is to construct $k(T+1)$ vectors $\left\{ v(i) \right\}_{v \in S, i \in [T]}$ such that $u(i) \cdot v(i') \approx \rho(u(i), v(i'))$. 
Following the above intuition, the following lemma says that this embedding is possible with error depending on $\mu$. The proof is given in Appendix~\ref{subsec:proof_embedding}.

\begin{lem}
There exist $k(T+1)$ vectors $\left\{ v(i) \right\}_{v \in S, i \in [T]}$ such that $\| v(i) \|_2^2 = \mu + \frac{1}{T + 1}$ and $u(i) \cdot v(i') = \frac{\mu}{2} + \rho(u(i), v(i'))$. 
\label{lem:embedding}
\end{lem}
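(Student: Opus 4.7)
The plan is to write down the target Gram matrix $M$ indexed by $(v,i) \in S \times [T]$, defined by
\[
M[(u,i),(v,i')] = \begin{cases} \mu + \tfrac{1}{T+1}, & u=v \text{ and } i=i', \\ \tfrac{\mu}{2} + \rho(u(i),v(i')), & \text{otherwise,} \end{cases}
\]
show it is positive semidefinite, and take any Cholesky factorization to produce the $v(i)$'s. Because $M$ depends only on the $S$-independent data $\rho$ and the constant $\mu$, this automatically yields vectors whose pairwise inner products are consistent across different choices of $S$ (a property needed in the next subsection to glue local distributions).

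First decompose $M = \tfrac{\mu}{2} I + \tfrac{\mu}{2} J + M_\rho$, where $M_\rho[(u,i),(v,i')] := \rho(u(i),v(i'))$ (so that $M_\rho$ has $\tfrac{1}{T+1}$ on its diagonal and $0$ on same-vertex off-diagonal entries). Since $\tfrac{\mu}{2} J \succeq 0$, PSD-ness of $M$ reduces to showing $M_\rho \succeq -\tfrac{\mu}{2} I$. To obtain a PSD comparand close to $M_\rho$, invoke Corollary~\ref{cor:distribution} on $V' := \{v \in V : d(v,S) \leq l\}$, which is $l$-path decomposable by the local-sparsity clause of Theorem~\ref{thm:goodinstance}. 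Let $N[(u,i),(v,i')] := \PP[r(u)=i,\, r(v)=i']$ for $u,v \in S$. As a principal submatrix of the Gram matrix of the indicator random variables $\mathbb{I}[r(v)=i]$, one has $N \succeq 0$. By the definition of $\rho$ (and the fact that any shortest $G$-path of length $\leq L$ between two vertices of $S$ lies inside $V'$, so that $G$- and $G'$-distances agree in that regime), $N$ and $M_\rho$ coincide on same-vertex blocks and on off-diagonal pairs with $d(u,v) \leq L$, while Corollary~\ref{cor:distribution}(2) gives $|N - M_\rho| \leq \tfrac{(1-\mu)^L}{T+1}$ entrywise on pairs with $d(u,v) > L$.

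The main quantitative step is a spectral-norm bound on $E := M_\rho - N$. Each row of $E$ contains at most $(k-1)(T+1)$ nonzero entries of magnitude $\leq \tfrac{(1-\mu)^L}{T+1}$, giving a row-sum bound of $k(1-\mu)^L$; since $E$ is symmetric, $\|E\|_{\mathrm{op}} \leq k(1-\mu)^L$. The large-noise guarantee $(1-\mu)^{l/10} \leq \tfrac{\mu}{5k}$ from Theorem~\ref{thm:goodinstance}, combined with $L = \lfloor l/9 \rfloor \geq l/10$ (for $l$ large enough, which holds since $l = \Theta(\log n)$), yields $\|E\|_{\mathrm{op}} \leq \tfrac{\mu}{5}$. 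Consequently $M_\rho \succeq N - \tfrac{\mu}{5}I \succeq -\tfrac{\mu}{5}I \succeq -\tfrac{\mu}{2}I$, so $M \succeq \tfrac{3\mu}{10}I \succ 0$, and Cholesky supplies the required vectors. The crux is precisely this perturbation bound: the large-noise property of Theorem~\ref{thm:goodinstance} is calibrated so that the $O(k)$ ``unknown'' long-range entries of $\rho$ cannot overwhelm the $\tfrac{\mu}{2}I$ stabilizer and destroy positive semidefiniteness.
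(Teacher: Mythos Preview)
Your argument is correct, and it takes a genuinely different route from the paper's. The paper constructs the vectors explicitly as a direct sum $v(i) = v(i)_1 \oplus v(i)_2$: the second summand is the indicator $\mathbb{I}[r(v)=i]$ in $L^2$ of the local distribution $\nu_S$ (yielding inner products $\nu_S(u(i),v(i'))$), while the first summand is a small ``correction'' term built via a black-box lemma of Charikar--Makarychev--Makarychev asserting that any near-equilateral finite metric embeds isometrically into $l_2$; the large-noise bound is what makes the correction metric near-equilateral. You instead prove positive semidefiniteness of the target Gram matrix directly, via the decomposition $\tfrac{\mu}{2}I + \tfrac{\mu}{2}J + M_\rho$, comparison of $M_\rho$ with the genuinely PSD second-moment matrix $N$ of the same indicators, and a row-sum operator-norm bound on $M_\rho - N$. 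Both proofs rest on the same two ingredients (Corollary~\ref{cor:distribution} and the large-noise clause of Theorem~\ref{thm:goodinstance}); your version is more elementary in that it avoids the external embedding lemma, while the paper's is more constructive. Since the downstream rounding depends only on the Gram matrix (through the joint Gaussian law of the $v(i)\cdot g$), your non-explicit Cholesky step is harmless for the rest of the argument.
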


\subsubsection{Rounding and Analyzing adjacent vertices}
Given $k(T+1)$ vectors $\left\{ v(i) \right\}_{v \in S, i \in [T]}$, 
our rounding scheme is one of the most natural ways to choose one out of $(T+1)$ vectors --- take a random Gaussian vector $g$ and for each vertex $v$, set $l_V(v) = i$ such that $v(i) \cdot g$ is the maximum over all $i$.
Since the inner products of these vectors depend only on $\rho$ (which does not depend on the choice of $S$), it gives a consistent local distribution.

Fix adjacent vertices $v$ and $u$ (without loss of generality assume $(u, v) \in A$). It only remains to show that $\Pr[l_V(u) = 0, l_V(v) = l_A(u, v)] \approx \frac{1}{T + 1}$. For any pair of adjacent vertices, we can write $2(T+1)$ vectors explicitly.
They are just two sets of $T+1$ orthonormal vectors, very closely correlated --- there are $T+1$ pairs $(u(i), v(i'))$, $i' - i = l_A(u, v)$ in $\ZZ_{T+1}$, such that $u(i) \approx v(i')$. 
With this symmetric structure and a suitable choice of the noise parameter $\mu$, we can analyze the performance of our rounding. The proof is given in Appendix~\ref{subsec:proof_rounding}.

\begin{lem}
There exists $\mu$ depending on $T$ and $\epsilon$ such that, in the above rounding scheme, the probability that $l_V(u) = 0$ and $l_V(v) = l_A(u, v)$ is at most $\frac{1 - 12\epsilon}{T + 1}$.
\label{lem:rounding}
\end{lem}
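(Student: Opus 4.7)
The plan is to work out the $2(T+1)$ vectors from Lemma~\ref{lem:embedding} explicitly for the adjacent pair $u, v$ with $t := l_A(u, v)$, exploit the cyclic symmetry in $\ZZ_{T+1}$ to split off the ``centre'' of each simplex (irrelevant to the $\arg\max$ rounding), and reduce the target probability to a clean max-comparison on $T+1$ iid bivariate Gaussian pairs whose correlation is a controllable function of $\mu$.

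With $d(u,v)=1$, Lemma~\ref{lem:embedding} gives $\|u(i)\|^2 = \mu + \tfrac{1}{T+1}$, and $u(i)\cdot v(i')$ equals $\tfrac{\mu}{2} + \tfrac{1-\mu}{T+1} + \tfrac{\mu}{(T+1)^2}$ for matching pairs ($i' \equiv i + t \pmod{T+1}$) and $\tfrac{\mu}{2} + \tfrac{\mu}{(T+1)^2}$ otherwise. Writing $u(i) = p_u + \xi_i^u$ and $v(i') = p_v + \xi_{i'}^v$ for the centres $p_u := \tfrac{1}{T+1}\sum_j u(j)$ and $p_v$ analogous, a direct calculation gives $\sum_i \xi_i^u = \sum_{i'}\xi_{i'}^v = 0$; each of $\{\xi_i^u\}$, $\{\xi_{i'}^v\}$ is a regular simplex with pairwise inner product $-\tfrac{1}{T}\|\xi_i^u\|^2$; and the cross inner products simplify to $\xi_i^u\cdot\xi_{i+t}^v = \tfrac{T(1-\mu)}{(T+1)^2}$ (matching) and $-\tfrac{1-\mu}{(T+1)^2}$ (non-matching).

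Since $p_u\cdot g$ is a common shift across $\{u(i)\cdot g\}_i$, it does not affect $\arg\max_i u(i)\cdot g$. The simplex structure then lets us represent $(\xi_0^u\cdot g, \ldots, \xi_T^u\cdot g) = c\,(Z_0 - \bar Z, \ldots, Z_T - \bar Z)$ for iid $Z_0, \ldots, Z_T\sim N(0,1)$ and an appropriate $c > 0$, so $\arg\max_i u(i)\cdot g = \arg\max_i Z_i$; analogously $\arg\max_{i'} v(i')\cdot g = \arg\max_{i'} Z'_{i'}$ for another iid standard family. A short covariance calculation fixes the cross-coupling as $Z'_{i+t} = \rho\, Z_i + \sqrt{1-\rho^2}\, W_i$, with $W_i$ iid $N(0,1)$ independent of $Z$ and
\[
\rho(\mu) \;=\; \frac{2(1-\mu)}{\mu(T+1) + 2}.
\]
Setting $a_k := Z_k$ and $c_k := Z'_{k+t} = \rho\,Z_k + \sqrt{1-\rho^2}\,W_k$, the pairs $(a_k, c_k)$ become iid bivariate normals of correlation $\rho$, and the target probability equals
\[
P(\rho) \;:=\; \Pr\!\bigl[a_0 = \max_k a_k \ \text{and}\ c_0 = \max_k c_k\bigr]
\;=\; \int \phi_\rho(x, y)\, F_\rho(x, y)^T\,dx\,dy,
\]
where $\phi_\rho$ and $F_\rho$ are the bivariate-normal density and CDF.

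Finally, $P(1) = \tfrac{1}{T+1}$ (since $\rho = 1$ forces $c_k = a_k$) and $P(0) = \tfrac{1}{(T+1)^2}$; moreover $P$ is strictly increasing on $[0, 1]$ by Gaussian integration by parts, since $\tfrac{d}{d\rho}P$ is a sum over $k$ of $\E[\partial_{a_k}\mathbb{I}[a_0 = \max_k a_k]\cdot \partial_{c_k}\mathbb{I}[c_0 = \max_k c_k]]$, each a manifestly non-negative product of matched-sign boundary delta-measures (positive--positive for $k=0$, negative--negative for $k\neq 0$). Hence any $\mu\in(0,1)$ gives $P(\rho(\mu)) < \tfrac{1}{T+1}$; picking $\mu$ depending only on $T$ and $\epsilon$ so that $\rho(\mu)$ is sufficiently small yields $\tfrac{1}{T+1} - P(\rho(\mu)) \geq \tfrac{12\epsilon}{T+1}$, which is the claim. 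The main obstacle is extracting a quantitative lower bound on $\tfrac{1}{T+1} - P(\rho)$ that is explicit in $T$, so that the resulting $\mu = \mu(T, \epsilon)$ is bounded away from $0$ and $1$ and compatible with the ``large noise'' condition of Theorem~\ref{thm:goodinstance}; this reduces to differentiating the integral for $P$ at $\rho = 1$ to extract a leading-order rate in $(1-\rho)$, or to a Slepian-style coupling comparison between the $\rho = 1$ and $\rho < 1$ Gaussian fields.
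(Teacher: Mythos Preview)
Your reduction is clean and the computation of the simplex structure and of $\rho(\mu) = \frac{2(1-\mu)}{\mu(T+1)+2}$ is correct, but you have proved the wrong inequality. The phrase ``at most'' in the lemma is a typo: Section~\ref{subsec:embedding_rounding} is building a Sherali--Adams solution whose objective is $\sum_{(u,v)} w(u,v)\,x_{(u,v)}(0,l_A(u,v))$, and this must be \emph{large} to create a gap against the integral optimum $\frac{1+\epsilon}{4T}$; the paper's own proof in the appendix accordingly shows $\Pr[l_V(u)=i,\,l_V(v)=i] \ge \frac{1-12\epsilon}{T+1}$. In your notation the task is to pick $\mu$ small so that $\rho(\mu)$ is close to $1$ and $P(\rho(\mu)) \ge \frac{1-12\epsilon}{T+1}$. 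Your choice ``$\rho$ sufficiently small'' drives $P$ toward $\frac{1}{(T+1)^2}$ and destroys the LP value; and the ``main obstacle'' you identify --- a quantitative \emph{lower} bound on $\frac{1}{T+1} - P(\rho)$ --- is the opposite of what is needed.

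With the direction corrected, your route is genuinely different from the paper's and arguably cleaner. The paper does not centre and match covariances abstractly; instead it writes an explicit orthogonal decomposition $u(i) = a(i) + \sum_j b(i,j) + c(i) + d$ and $v(i) = a(i) + \sum_j b(j,i) + c'(i) + d$ (after relabelling so matching pairs share the index), where the shared $a(i)$ has variance $\frac{1-\mu}{T+1}$ and every $b,c,c',d$ component has variance $O(\mu)$. It then lower-bounds $\Pr[l_V(u)=l_V(v)=i]$ by $\frac{1}{T+1}$ times the probability that the noise maxima $\max_j C(j)$, $\max_j C'(j)$, $\max_j \sum_k B(j,k)$ are all below $\frac{1}{4}(\max_j A(j) - \secondmax_j A(j))$, and handles this via two elementary Gaussian tail lemmas. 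To finish your version you need a quantitative \emph{upper} bound on $\frac{1}{T+1} - P(\rho)$ as $\rho \to 1$; conditioning on $a_0 = \max_k a_k$ and bounding $\Pr\bigl[\exists k:\ \sqrt{1-\rho^2}\,(W_k - W_0) > \rho\,(a_0 - a_k)\bigr]$ reduces this to essentially the same two Gaussian estimates. Your monotonicity-by-integration-by-parts argument is correct in spirit but does not by itself supply the required rate.
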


This finishes the construction of a solution to the $n^{\delta}$-rounds of the Sherali-Adams hierarchy with value $\frac{1 - 12\epsilon}{T + 1}$. Since $\Opt(V, l_A) \leq \frac{1 + \epsilon}{4T}$ by Theorem~\ref{thm:goodinstance}, it proves Theorem~\ref{thm:gmd} and Theorem~\ref{thm:gp}.

\paragraph{Acknowledgements.} The author would like to thank Venkat Guruswami and Seung Woo Shin for helpful discussions.

\bibliographystyle{abbrv}
\bibliography{../mybib}

\appendix

\section{Proof of Lemma in the Reduction}
\label{sec:proof_reduction}
\begin{lem}
[\cite{KKMS09}, Restatement of~\ref{lem:reduction}] For any $p$, $\Valgp(p) \leq \Optgmd(D, l_A, \wgmd) + \frac{1}{M} + 2\epsilon$.
\end{lem}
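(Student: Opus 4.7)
The plan is to construct a labeling $l_V : V \to [T]$ from $p$ that ``digests'' most of $\Valgp(p)$. Define $l_V$ by rounding up in log-base-$M$: set $l_V(v) = i \in [T]^+$ iff $p(v) \in (M^{Ts(v)+i-2}, M^{Ts(v)+i-1}]$, and $l_V(v) = 0$ if $p(v) \leq M^{Ts(v)-1}$. Any vertex with $p(v) > M^{Ts(v)+T-1}$ makes no edge contribute to $\Valgp(p)$ (every incident edge has its budget exceeded by $p(v)$ alone, using $s(u) \geq s(v)+1$ for out-edges of $v$), so such vertices may be labeled arbitrarily. The goal is to prove $\Valgp(p) \leq \Valgmd(l_V) + 1/M + 2\epsilon$, whence the lemma follows from $\Valgmd(l_V) \leq \Optgmd(D, l_A, \wgmd)$. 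Split $\Valgp(p) = P + Q$, where $P := \sum_{(u,v) \in A} \wgmd(u,v) \cdot \frac{p(v)}{b(u,v)} \cdot \mathbb{I}[p(u)+p(v) \leq b(u,v)]$ and $Q$ is the analogous sum with $p(u)$ in place of $p(v)$.

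Acyclicity gives the key structural fact: for every contributing edge $(u,v)$, $b(u,v) \leq M^{Ts(v)+T-1} \leq M^{Ts(u)-1}$ (because $s(u) \geq s(v)+1$), so $p(u) \leq M^{Ts(u)-1}$ forces $l_V(u) = 0$; likewise $p(v) \leq b(u,v) = M^{Ts(v)+l_A(u,v)-1}$ together with the round-up rule forces $l_V(v) \leq l_A(u,v)$. Partition the terms of $P$ into those with $l_V(v) = l_A(u,v)$ (these are precisely the edges satisfied by $l_V$; each term is $\leq \wgmd(u,v)$ since $p(v) \leq b(u,v)$, summing to $\leq \Valgmd(l_V)$) and those with $l_V(v) < l_A(u,v)$. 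For the latter, the round-up rule yields $p(v) \leq M^{Ts(v)+l_V(v)-1} \leq M^{Ts(v)+l_A(u,v)-2} = b(u,v)/M$, so each term is $\leq \wgmd(u,v)/M$ and their total is $\leq 1/M$. Rounding \emph{up} rather than down is essential here: it collapses the boundary case where $p(v)$ lies just below $M^{Ts(v)+l_A(u,v)-1}$ into a satisfied edge, rather than leaving an unsatisfied edge with $p(v)/b(u,v)$ arbitrarily close to $1$.

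The main obstacle is bounding $Q$, and this is where the hypothesis $\ndeg \geq 1/\epsilon$ enters. Fix $u$. As $v$ and $l_A(u,v)$ range over the out-edges of $u$, the budgets $M^{Ts(v)+l_A(u,v)-1}$ are all distinct powers of $M$ (by injectivity of $s$ and the convention that parallel edges carrying the same label are merged), so when sorted their consecutive ratios are $\geq M$. Every contributing out-edge has $b(u,v) \geq p(u)$, so $\sum_{v \in E_u} 1/b(u,v)$ (summing over contributing out-neighbors $E_u$ of $u$) is bounded by a geometric series with smallest term $\geq 1/p(u)$ and ratio $\leq 1/M$, giving $\leq \tfrac{M}{M-1} \cdot \tfrac{1}{p(u)} \leq \tfrac{2}{p(u)}$ for $M \geq 2$. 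Hence $p(u) \sum_{v \in E_u} \wgmd(u,v)/b(u,v) \leq 2 \max_{(u,v) \in A} \wgmd(u,v)$, and summing over $u$ yields $Q \leq 2 \sum_u \max_{(u,v) \in A} \wgmd(u,v) = 2/\ndeg \leq 2\epsilon$. Combining, $\Valgp(p) \leq P + Q \leq \Valgmd(l_V) + 1/M + 2\epsilon$, as required.
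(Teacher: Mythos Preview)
Your proof is correct and follows essentially the same approach as the paper: split $\Valgp(p)$ into a principal part $P$ (head contributions) and a non-principal part $Q$ (tail contributions), bound $P \le \Valgmd(l_V) + 1/M$ via a labeling read off from $p$, and bound $Q \le 2\epsilon$ via the geometric decay of budgets together with $\ndeg \ge 1/\epsilon$. Your definition of $l_V$ (pure log-base-$M$ rounding of $p(v)$) is slightly cleaner than the paper's edge-dependent definition of $l'_V$, but the two agree on every vertex that matters for the bound, and the remaining arithmetic is identical.
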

\begin{proof}
Given $p$, we define the {\em principal part} of $\Valgp(p)$ as
\[
\sum_{(u, v) \in A} \wgp(u, v) p(v) \mathbb{I}[p(u) + p(v) \leq b(u, v)].
\]
Note that for each directed edge, only the price of its head contributes. 

We first bound the principal part of $\Valgp(p)$. 
For a vertex $v$, the only edges where $\wgp(u, v) p(v) > \frac{\wgmd(u, v)}{M}$ satisfy $M^{Ts(v) + l_A(u, v) - 2} < p(v) \leq M^{Ts(v) + l_A(u, v) - 1}$. If there is such an edge, let $l'_V(v) = l_A(u, v)$. Otherwise, let $l'_V(v) = 0$. 
Fix an edge $(u, v)$ where $\wgp(u, v)p(v) > \frac{\wgmd(u, v)}{M}$. $l'_V(v) = l_A(u, v)$ by above. If $l'_V(u) \neq 0$, it means $p(u) > M^{Ts(u) - 1} \geq M^{Ts(v) + T - 1} \geq b(u, v)$, so $(u, v)$ contributes 0 to the principal part of $\Valgp(p)$. 
Therefore, for each edge $(u, v)$ that contributes more than $\frac{1}{M}$ to the principal part of $\Valgp(p)$, $l'_V$ satisfies $(u, v)$. Therefore, the principal part of $\Valgp(p)$ is at most $\Optgmd(D, l_A, \wgmd) + \frac{1}{M}$.

For the non-principal part of $\Valgp(p)$, for each vertex $u$, and we bound 
\[
\sum_{(u, v) \in A} \wgp(u, v) p(u) \mathbb{I}[p(u) + p(v) \leq b(u, v)] 
\leq \sum_{(u, v) \in A, p(u) \leq b(u, v)} \wgmd(u, v) \frac{p(u)}{b(u, v)}.
\]
Note that all edges $(u, v)$ have different $b(u, v)$, and any two differ by at least a factor of $M$. Let $w_u := \max_{(u, v) \in A} \wgmd(u, v)$. 
Therefore, the right hand side can be bounded by $w_u(1 + \frac{1}{M} + \frac{1}{M^2} + \dots) \leq 2w_u$, where 
\[
\sum_u w_u = \frac{1}{\ndeg} \leq \epsilon.
\]
This shows that the non-principal part of $\Valgp(p)$ is at most $2\epsilon$, proving the lemma.
\end{proof}

\section{Details of the Integrality Gap}
\label{sec:proof_gap}
\subsection{Obtaining a Good Instance}
\label{subsec:proof_instance}
In this subsection, we prove the following theorem.
\begin{thm}
[Restatement of Theorem~\ref{thm:goodinstance}]
Given $T$ and $\epsilon, \mu > 0$, there exist constants $\Delta, \delta$ and $l = \Theta(\log n)$ (all constants depending on $T$ and $\epsilon, \mu$) such that there is an instance of Generalized Max-Dicut($T$) $(D, l_A)$ with the underlying undirected graph $G$ with the following properties.
\begin{itemize}
\itemsep=0ex
\item Acyclicity: $D$ is a DAG.
\item Low Integral Optimum: $\Opt(D, l_A) \leq \frac{1 + \epsilon}{4T}$.
\item Almost regularity: Maximum degree of $G$ is at most $2\Delta$, and $G$ has at least $\Omega(\Delta n)$ edges.
\item Local Sparsity: For $k \leq n^{\delta}$, every induced subgraph of $G$ on $(2\Delta)^lk$ vertices is $l$-path decomposable.
\item Large noise: For $k \leq n^{\delta}$, $(1 - \mu)^{l/10} \leq \frac{\mu}{5k}$.
\end{itemize}
\end{thm}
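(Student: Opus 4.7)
The plan is to realize $D$ as a random sparsification of the Alon et al.\ DAG $D_* = (V, A_*)$: retain each edge of $A_*$ independently with probability $p := \Delta/\Delta_*$, where $\Delta$ is a large constant depending on $T,\epsilon,\mu$ that is fixed at the end, and then label each surviving edge by $l_A(u,v)$ drawn i.i.d.\ uniform from $[T]^+$. Acyclicity is inherited from $D_*$ since only edge deletions are performed. For almost regularity, the degree of every vertex in the underlying undirected graph $G$ is a sum of at most $\Delta_*$ independent $\mathrm{Bernoulli}(p)$ variables with mean at most $\Delta$, so a Chernoff bound plus a union bound over the $n$ vertices gives maximum degree $\le 2\Delta$ with probability $1 - o(1)$; a matching two-sided Chernoff gives $|A| = \Theta(\Delta n)$, and in particular $|A| = \Omega(\Delta n)$.

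The low-integral-optimum bound is by union bound over the $(T+1)^n$ labelings $l_V : V \to [T]$. Fix $l_V$ and set $S := \{v : l_V(v) = 0\}$; an edge $(u,v) \in A_*$ contributes to $\Val(l_V)$ precisely when $u \in S$, $v \notin S$, the edge survives the sparsification, and $l_A(u,v) = l_V(v)$, which happens with probability $p/T$ over the two independent randomizations. Alon et al.\ guarantee that every directed cut of $D_*$ has at most $(1/4 + o(1))m_*$ edges, so $\E[\Val(l_V)] \le (1/(4T) + o(1))|A|$. A multiplicative Chernoff bound on this sum of independent Bernoullis yields
\[
\Pr\!\left[\Val(l_V) > \tfrac{1+\epsilon}{4T}|A|\right] \le \exp\!\left(-\Omega\!\left(\tfrac{\epsilon^2 \Delta n}{T}\right)\right),
\]
which defeats the $(T+1)^n$ union bound once $\Delta \ge c_0\, T \log(T+1)/\epsilon^2$ for an absolute constant $c_0$.

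The main obstacle is local sparsity, which I would handle by a two-step strategy familiar from Sherali--Adams gap constructions. First, show that with high probability every subgraph of $G$ on $s \le N := (2\Delta)^l n^\delta$ vertices has at most $s + \lfloor s/l \rfloor$ edges. Second, observe that any 2-connected subgraph $H$ with $s$ vertices and excess $t$ over a spanning tree has at most $2t$ vertices of degree $\ge 3$ in $H$ (by handshaking, since $2(s-1+t) \ge 3b + 2(s-b)$ forces $b \le 2t$), so pigeonholing the remaining degree-$2$ vertices over the at-most-$3t$ paths joining branching vertices produces a degree-$2$ path of length $\ge (s-2t)/(3t) \ge l$ whenever $t \le s/(3l+2)$. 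For the first step I would bound the first moment
\[
\sum_{s \le N} \binom{n}{s} \binom{s \Delta_*/2}{s + \lfloor s/l \rfloor}\, p^{\,s + \lfloor s/l \rfloor}
\]
and force it to $o(1)$ using $p\Delta_* = \Delta$, the short-range constraint $|u-v| \le r$ of $D_*$ (which replaces $\binom{n}{s}$ by a much smaller count, since a connected subgraph on $s$ vertices is confined to an interval of length $O(sr)$), and by picking $\delta$ small and $l/\log n$ smaller than an appropriate function of $\Delta$. Any exceptional bad subgraphs can be destroyed by deleting $o(|A|)$ edges, which does not affect the preceding properties. Finally, large noise is a parameter check: $(1-\mu)^{l/10} \le e^{-\mu l/10}$, so $l \ge (10\delta/\mu)\log n$ gives $(1-\mu)^{l/10} \le n^{-\delta} \le \mu/(5k)$ for every $k \le n^\delta$; choosing $l = c_1 \log n$ with $c_1$ in the nonempty interval $[10\delta/\mu,\, c_2(\Delta)]$, where $c_2(\Delta)$ is the upper bound coming from the local-sparsity count, makes all five conditions hold simultaneously with positive probability, and the probabilistic method extracts the desired $(D, l_A)$.
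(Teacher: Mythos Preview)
Your overall plan matches the paper's, but the almost-regularity step as written fails. Since $\Delta$ is a \emph{constant} depending only on $T,\epsilon,\mu$, the Chernoff tail $\Pr[\deg_G(v)>2\Delta]\le e^{-c\Delta}$ is itself a fixed positive constant, so a union bound over the $n$ vertices gives $n\cdot e^{-c\Delta}\to\infty$, not $1-o(1)$; you simply cannot get maximum degree $\le 2\Delta$ with high probability when $\Delta$ does not grow with $n$. The paper repairs this with exactly the deletion trick you later invoke for local sparsity: call an edge \emph{bad} if either endpoint has degree exceeding $2\Delta$; conditioned on an edge surviving, it is bad with probability $O(e^{-\Delta/4})$, so the expected number of bad edges is $O(e^{-\Delta/4})\cdot p\,m_*\ll \epsilon\, p\, m_*$, and by Markov one deletes these $o(|A|)$ edges to enforce the degree bound without disturbing the other properties.

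Your local-sparsity sketch also has two holes. First, the displayed first moment with the factor $\binom{n}{s}$ is far too large to be killed by $p^{\,s+\lfloor s/l\rfloor}$, and the interval-confinement refinement you propose is not what makes the count go through. What actually works is to enumerate \emph{connected} subgraphs on $t$ vertices by $O\bigl(n\cdot C^{t}\,\Delta_*^{\,t-1}\bigr)$ (choose a root and grow a spanning tree inside a graph of maximum degree $\Delta_*$), so that the extra factor $p^{\eta t+1}=(\Delta/\Delta_*)^{\eta t+1}$ from the excess edges leaves a net $n/\Delta_*^{\,1+\eta t}$, which is polynomially small because $\Delta_*=\Theta(n^{2/3})$; the paper carries this out in two regimes ($4\le t\le 3l$ with excess $1$, counted via Eulerian tours, and $t>3l$ with excess $\eta t$, counted via rooted trees). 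Second, your edge-excess condition ``at most $s+\lfloor s/l\rfloor$ edges'' does not exclude short single cycles: a cycle on $s<l$ vertices has exactly $s$ edges and excess $1$, while your pigeonhole step needs $s\ge 3l+2$ when the excess is $1$, so $l$-path-decomposability would fail there. The paper handles this by an explicit girth-control step: the expected number of cycles of length at most $l$ is $\sum_{i\le l} n(2r)^{i-1}p^{i}=O\bigl(n(C\Delta)^{l}/\Delta_*\bigr)=o(n^{0.7})$, and deleting one edge per such cycle again costs only $o(|A|)$ edges.
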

\begin{proof}
As mentioned in Section~\ref{subsec:instance}, our graph $D$ is obtained by randomly sparsifying the graph $D_* = (V, A_*)$ constructed in Alon et al.~\cite{ABGLS07} after an appropriate postprocessing. 
$D_*$ is a directed acyclic graph with $n$ vertices and $m_* = \Theta(n^{\frac{5}{3}})$ edges. Its underlying undirected graph $G_* = (V, E_*)$ is a simple graph with the same number of vertices and edges, with the maximum degree $\Delta_* = \Theta(n^\frac{2}{3})$. Actually, $V = [n]^+$ and $(u, v) \in E$ only if $|u - v| \leq r$ where $r := \Theta(n^{\frac{2}{3}})$. It has the property that any directed cut has size at most $\frac{m_*}{4} + o(m_*)$ edges. 

The first version of $D = (V, A)$ is constructed as the following. $V := V_* = [n]^+$, and for each edge $(u, v) \in A_*$, put $(u, v) \in A$ with probability $p := \frac{\Delta}{\Delta_*}$ for some $\Delta$ to be fixed later. Let $G = (V, E)$ be the underlying undirected graph of $V$. 
$l_A$ is obtained by assigning each $l(u, v)$ a random number uniformly sampled from $[T]^+$. 

\paragraph{Integral Solution}
The following lemma shows that if $\Delta$ is big enough, $\Opt(D, l_A)$ is close to $\frac{1}{4T}$. 

\begin{lem}
If $G$ satisfies the above four properties and $\Delta = \Omega(\frac{T \log T}{\epsilon^2})$, then $D$ and $l_A$ obtained by the above process satisfies $\Opt(D, l_A) \leq \frac{1 + 4\epsilon}{4T}$ with high probability.
\label{lem:integral2}
\end{lem}
\begin{proof}
Fix one assignment $l_V : V \rightarrow [T]$. For any edge $(u, v) \in A_*$ call it a {\em candidate} when $l_V(u) = 0, l_V(v) \neq 0$. 
Note that the number of candidate edges is at most the cardinality of the maximum directed cut of $D_*$, which is at most 
$\frac{1 + o(1)}{4}m_*$.

For each candidate edge $(u, v)$, the probability that $(u, v) \in A$ with $l_A(u, v) = l_V(v)$ is $\frac{1}{T}$. Therefore, the expected number of satisfied edges is at most $\frac{(1 + o(1)) \Delta m_*}{4 \Delta_* T}$.
By Chernoff bound, the probability that it is bigger than $\frac{(1 + \epsilon) p m_*}{4 T}$ is bounded by $\exp(-\Omega(\frac{\epsilon^2 p m_*}{T})) = \exp(-\Omega(\frac{\epsilon^2 \Delta n}{T}))$. By taking union bound over $(T+1)^n$ different $l_V$'s, the probability that there exists an assignment with more than $\frac{(1 + \epsilon) p m_*}{4T}$ satisfied edges is at most 
\[
\exp(-\Omega(\frac{\epsilon^2 \Delta n}{T})) * \exp(n \log (T+1)) \leq n^{-1} 
\]
for $\Delta := \Omega(\frac{T \log T}{\epsilon^2})$. 
Similarly, we can conclude that $|A| \geq (1 - \epsilon)m_*p$ with high probability. Therefore, $\Opt(D, l_A)$ is at most $\frac{(1 + \epsilon)}{4T(1 - \epsilon)} \leq \frac{1 + 4\epsilon}{4T}$ with high probability.
\end{proof}

The above lemma is the only place where it is desirable to have large $|A| = |E|$. 
For the rest of this subsection, we are going to delete some edges of $D$ (and $G$) to satisfy desired properties.
Note that in any case, the number of edges deleted is much less than $\epsilon p m_* $ so that each deletion does not hurt the above lemma. 

\paragraph{Maximum Degree Control}
Since the maximum degree in $G_*$ is $\Delta_*$, expected degree of each vertex $v \in V$ in $G$ is at most $p\Delta_* = \Delta$.
Call a vertex $v \in V$ {\em bad} if it has degree more than $2\Delta$ in $G$, and call an edge $(u, v) \in E$ bad if either $u$ or $v$ is bad.
Fix an edge $(u, v)$. The probability that $(u, v)$ becomes bad given $(u, v) \in E$ is at most $2\exp(-\frac{\Delta}{4})$. 
The expected number of bad edges is at most $2\exp(-\frac{\Delta}{4})pm_*$, and by Markov's inequality, with probability at least half, the number of bad edges is at most $4 \exp(-\frac{\Delta}{3}) pm_*$. 

Deleting all bad edges guarantees that the maximum degree of $G$ is at most $2\Delta$, and with probability at least half, we delete only $4 \exp(-\frac{\Delta}{3}) pm_*$ edges, which is much smaller than $\epsilon p m_*$ since $\Delta = \Omega(\frac{1}{\epsilon^2})$.

\paragraph{Girth Control}
The expected number of cycles of length $i$ is bounded by
\[
n(2r)^{i - 1} p^i =
n(2r)^{i - 1} (\frac{\Delta}{\Delta^*})^i \leq 
\frac{n (C\Delta)^i}{\Delta_*}
\]
for some absolute constant $C$. When $i = O(\frac{\log n}{\log \Delta})$ the above quantity becomes less than $n^{0.5}$. Assume $l = O(\frac{\log n}{\log \Delta})$ (it will be fixed even smaller than that later). Summing over $i = 4, \dots, l$ ensures that the expected number of cycles of length up to $l$ is at most $O(n^{0.6})$, and it is less than $O(n^{0.7})$ with high probability. Removing one edge for each cycle of length up to $l$ ensures that $G$ has girth at least $l$. 

\paragraph{Local Sparsity Control}

Let $\eta = \frac{1}{3l}$ for some $l$ fixed later. 
We want to show that there exists $\gamma > 0$ such that every subgraph $G'$ of $G$ induced on $t \leq n^{\gamma}$ vertices have only $(1 + \eta)t$ edges. 

For $4 \leq t \leq 1/\eta$, we count the number of connected subgraphs of $G_*$ with $t$ vertices and $t+1$ edges.
\begin{lem}
The number of connected subgraphs of $G_*$ with $t$ vertices and $t+1$ edges is bounded by $2nt^2 \Delta_*^{t-1}$.
\end{lem}
\begin{proof}
The only possible degree sequences for such subgraphs are $(4, 2, 2, 2, \dots)$ or $(3, 3, 2, 2, \dots)$. Assume that it is $(4, 2, 2, 2, \dots)$. Let $v$ be the vertex with degree 4. There is a sequence of $t + 2$ vertices 
$(v, \dots, v, \dots, v)$ representing an Eulerian tour (not necessarily unique). The number of such sequences is bounded by $nt\Delta_*^{t-1}$ ($n$ for guessing $v$, $t$ for guessing where $v$ occurs in the middle of the sequence, $\Delta_*^{t-1}$ for the other vertices). 

Assume that the degree sequence is $(3, 3, 2, 2, \dots)$, and $u, v$ be the vertices of degree 3. Take a sequence of $t+2$ vertices representing an Eulerian path from $u$ to $v$ (either $(u, \dots, u, \dots, v, \dots, v)$ or 
$(u, \dots, v, \dots, u, \dots, v)$). The number of such sequences is bounded by $nt^2 \Delta_*^{t-1}$ ($n$ for guessing $u$, $t^2$ for guessing positions of $u$ and $v$ in the middle of the sequence, $\Delta_*^{t-1}$ for the other vertices including $v$). 
\end{proof}

Therefore, the probability that there exists a subgraph of $G$ with $t$ vertices and $t+1$ edges for $4 \leq t \leq 1 / \eta = 3l$ is
\[
\sum_{t=4}^{3l} 2nt^2 \Delta_*^{t-1}p^{t+1} = \sum_{t=4}^{3l} \frac{2nt^2\Delta^{t+1}}{\Delta_*^2} 
\leq \frac{n}{\Delta_*^2} (9l)^2 \Delta^{3l+1} \leq n^{-0.1}
\]
for $l = O(\log n / \log \Delta)$, since $\frac{n}{\Delta_*^2} = O(n^{-\frac{1}{3}})$. 

For $t > 1 / \eta = 3l$, we count the number of subgraphs of $G_*$ with $t$ vertices and $(1 + \eta)t$ edges.
It is upper bounded by (the number of connected subtrees on $t$ vertices) * (the number of possibilities to choose other $\eta t + 1$ edges out of $\binom{t}{2}$ pairs). The number of unlabeled rooted trees on $t$ vertices is $C \alpha^t$ for some constants $C$ and $\alpha$~\cite{Otter48}, so the number of connected subtrees on $t$ vertices is bounded by $C n \alpha^t \Delta_*^{t-1}$. Therefore, the total number of such subgraphs is 
\[
C n \alpha^t \Delta_*^{t-1} \binom{\frac{t(t+1)}{2}}{\eta t + 1}
\leq C n \alpha^t \Delta_*^{t-1} \binom{t^2}{2 \eta t}
\leq C n \alpha^t \Delta_*^{t-1} (\frac{et}{2 \eta})^{2 \eta t}.
\]
The probability that such a graph exists in $G$ is at most 
\[
C n \alpha^t \Delta_*^{t-1} (\frac{et}{2 \eta})^{2 \eta t} (\frac{\Delta}{\Delta_*})^{(1 + \eta)t}
\leq \frac{n}{\Delta_*} (C_1 \Delta^2)^t (C_2 \frac{l^2 t^2}{\Delta_*})^{t / 3l}.
\]
Let $A = C_1 \Delta^2$ and $B = C_2\frac{l^2t^2}{\Delta_*}$. The above quantity is at most
\[
\frac{n}{\Delta_*} A^{t}B^{t/3l} = (\frac{n}{\Delta_*} A^{3l} B) (AB^{1/3l})^{t - 3l}.
\]
Assume $t \leq n^{\gamma}$ for some $\gamma \in (0, 0.1)$ and $l = O(\frac{\log n}{\log \Delta})$ be such that 
$\frac{n}{\Delta_*} A^{3l} B = \frac{C_2 l^2t^2 n (C_1\Delta^2)^{3l} }{\Delta_*^2} \leq n^{-0.1}$, which also implies $AB^{1/3l} \leq 1$. Summing over $t = 3l, \dots, n^{\gamma}$, the probability that such a graph exists is bounded by $o(1)$. 

\paragraph{Putting Them Together}
In Section~\ref{subsec:instance}, we mentioned that the resulting graph should be amenable to have a Sherali-Adams solution with a large value, and introduced the notion of path-decomposability to measure it. 
The following lemma of Arora et al.~\cite{ABLT06} shows that our construction satisfies that every subgraph of $G$ induced on at most $t \leq n^{\gamma}$ vertices is $l$-path decomposable.

\begin{lem}
[\cite{ABLT06}]
Let $l \geq 1$ be an integer and $0 < \eta < \frac{1}{3l - 1}$, and let $H$ be a 2-connected graph with $t$ vertices and at most $(1+\eta)t$ edges. Then $H$ contains a path of length at least $l+1$ whose internal vertices have degree 2 in $H$.
\end{lem}

Finally, $\delta$ and $l$ are fixed based on the other parameters to satisfy the requirements of the theorem.

\begin{lem}
There exists $\delta > 0$ and $l$ (depending on $T$, $\epsilon$, $\Delta$, $\mu$, $\gamma$) such that for any $k \leq n^{\delta}$, the following holds.
\label{lem:getdelta}
\end{lem}
\begin{enumerate}
\itemsep=0ex
\item $(1 - \mu)^{\frac{l}{10}} \leq \frac{\mu}{5k}$.
\item Every induced subgraph of $G$ on $(2\Delta)^l k$ vertices is $l$-path decomposable.
\end{enumerate}
\begin{proof}
The first condition is implied by $l \geq C \delta \log n$ for some constant $C$ depending on $\mu$.
The second condition is implied by $(2\Delta)^lk \leq n^{\gamma} \Leftrightarrow 
l \leq C'(\gamma - \delta)\log n$ for another constant $C'$ depending on $\Delta$. 
When we control girth and local sparsity, $l$ is required to be $O(\frac{\log n}{\log \Delta})$. Therefore, by taking $\delta$ a small enough constant depending on $T, \epsilon, \Delta, \mu$, and $\gamma$ (all of which depend on $T, \epsilon$), we can ensure that such $l$ exists.
\end{proof}
Therefore, there exist constants $\Delta, \delta$ and $l = \Theta(\log n)$ (all constants depending on $T, \epsilon, \mu$) that satisfy all the requirements given in the theorem.
\end{proof}

\subsection{Distribution}
\label{subsec:proof_distribution}
\begin{cor}
[Restatement of~\ref{cor:distribution}]
Suppose $G' = (V', E')$ is an $l$-path decomposable graph. Let $L = \lfloor l / 9 \rfloor$; $\mu \in [1 / L, 1]$. Then there exists a random mapping $r : V' \rightarrow [T]$ such that
\begin{enumerate}
\itemsep=0ex
\item If $d := d(u, v) \leq L$ then 
\[
\Pr[r(u) = i, r(v) = i'] = 
\begin{cases}
\frac{(1 - \mu)^d}{(T + 1)} + \frac{1 - (1 - \mu)^d}{(T + 1)^2} & \mbox{if } i \mbox{ and } i' \mbox{ match} \\
\frac{1 - (1 - \mu)^d}{(T + 1)^2} & \mbox{otherwise}
\end{cases}
\]
\item If $d > L$ then $\frac{1 - (1 - \mu)^L}{(T + 1)^2} \leq \Pr[r(u) = i, r(v) = i'] \leq \frac{1 - (1 - \mu)^L}{(T + 1)^2} + \frac{(1 - \mu)^L}{T + 1}$ for any $i, i' \in [T]$.
\end{enumerate}
\end{cor}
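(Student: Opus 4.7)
}
The plan is to invoke Theorem~\ref{thm:partition} with parameter $\mu$ to obtain a random partition of $V'$ into tree pieces, then define $r$ by labeling each piece independently. Concretely, for each piece (which is a tree), pick an arbitrary root $v_0$, sample $r(v_0)$ uniformly from $[T]$, and propagate along the tree so that every edge is weakly satisfied in the sense defined right after Theorem~\ref{thm:partition}. Since $T+1$ choices of root label yield $T+1$ globally consistent extensions, all of which are equiprobable and form a group action on $[T]$, the choice of root does not affect the distribution, and every single vertex has a uniform marginal on $[T]$. Independence across pieces is immediate from sampling the root labels independently.

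Next I would compute $\Pr[r(u) = i, r(v) = i']$ by conditioning on the event $\calE$ that $u$ and $v$ lie in the same piece of the multicut. When $d := d(u,v) \leq L$, Theorem~\ref{thm:partition} gives $\Pr[\calE] = (1-\mu)^d$, and moreover the unique shortest $uv$-path lies in that piece (uniqueness uses $d \leq L < l/2$, which rules out a second path by the absence of cycles of length $\leq l$ in an $l$-path decomposable graph). Conditioned on $\calE$, the propagation from the root along this unique path determines $r(v)$ from $r(u)$ via the shift dictated by the edge labels, so $\Pr[r(u)=i, r(v)=i' \mid \calE]$ equals $\frac{1}{T+1}$ when $i,i'$ match and $0$ otherwise. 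Conditioned on $\neg\calE$, the labels of $u$ and $v$ come from independent pieces and are thus uniform and independent, contributing $\frac{1}{(T+1)^2}$. Combining the two contributions yields exactly the two cases in part (1) of the corollary.

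For part (2), where $d > L$, I would use the same decomposition but plug in the weaker bound $\Pr[\neg\calE] \geq 1 - (1-\mu)^L$. The lower bound $\frac{1 - (1-\mu)^L}{(T+1)^2}$ comes from throwing away the contribution of $\calE$ entirely and using only the independent tail. For the upper bound, any fixed pair $(i,i')$ receives at most $\frac{1}{T+1}$ mass from $\calE$ (attained when $i,i'$ are a matchable pair along some $uv$-path inside the piece, and $0$ otherwise) and exactly $\frac{1}{(T+1)^2}$ mass per unit of probability from $\neg\calE$; using $\Pr[\calE] \leq (1-\mu)^L$ and $\Pr[\neg\calE] \leq 1$ in the wasteful direction gives the stated upper bound.

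I do not foresee a serious obstacle: this is essentially a bookkeeping lemma built on top of Theorem~\ref{thm:partition}. The only subtle point is justifying uniqueness of the shortest path in case (1) (so that ``match'' is unambiguous), which I would handle by noting that two distinct $uv$-paths of length $\leq L$ would contain a cycle of length $\leq 2L \leq l$, contradicting $l$-path decomposability. Everything else is a one-line conditional-probability computation.
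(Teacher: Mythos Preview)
Your proposal is correct and essentially identical to the paper's proof: both invoke Theorem~\ref{thm:partition}, label each tree piece by choosing a uniform root label and propagating to weakly satisfy edges, and then compute the joint probability by conditioning on whether $u,v$ land in the same piece. The only imprecision is in the upper bound of part~(2): using ``$\Pr[\neg\calE]\le 1$'' separately would overshoot the stated bound; the paper (and the right fix) uses $\Pr[\neg\calE]=1-\Pr[\calE]$ together with $\Pr[\calE]\le(1-\mu)^L$ and monotonicity to get exactly $\tfrac{1-(1-\mu)^L}{(T+1)^2}+\tfrac{(1-\mu)^L}{T+1}$.
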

\begin{proof}
$r$ is defined by the following process: sample a distribution of multicuts as Theorem~\ref{thm:partition}.
Each piece is a tree, so we can pick an arbitrary vertex $w$ and give a value $l_V(w)$ uniformly from $[T]$ and propagate along the tree to weakly satisfy every edge. Note that the distribution does not depend on the choice of the initial vertex. 

Suppose $d(u, v) \leq L$, which ensures that if $u$ and $v$ are in the same piece, the only path connecting $u$ and $v$ in the piece is the shortest path in $G$. If $i$ and $i'$ are match labels, 
\[
\Pr[r(u) = i, r(v) = i'] = \Pr[u, v \mbox{ in the same piece}]\cdot \frac{1}{T+1} + \Pr[u, v \mbox{ separated}]\cdot \frac{1}{(T+1)^2}.
\]
If $i$ and $i'$ are nonmatching labels, 
\[
\Pr[r(u) = i, r(v) = i'] = \Pr[u, v \mbox{ in the same piece}]\cdot 0 + \Pr[u, v \mbox{ separated}]\cdot \frac{1}{(T+1)^2}.
\]

If $d(u, v) > L$, $\Pr[r(u) = i, r(v) = i']$ is lower bounded by 
$\frac{\Pr[u \mbox{ and } v \mbox{ are separated}]}{(T+1)^2}$, and upper bounded by 
$\frac{\Pr[u \mbox{ and } v \mbox{ are separated}]}{(T+1)^2} + \frac{\Pr[u \mbox{ and } v \mbox{ are not separated}]}{T+1}$. The separation guarantee in Theorem~\ref{thm:partition} proves the lemma.
\end{proof}

\subsection{Embedding}
\label{subsec:proof_embedding}
\begin{lem}
[Restatement of Lemma~\ref{lem:embedding}]
There exist $k(T+1)$ vectors $\left\{ v(i) \right\}_{v \in S, i \in [T]}$ such that $\|v(i)\|_2^2 = \mu + \frac{1}{T + 1}$ and $u(i) \cdot v(i') = \frac{\mu}{2} + \rho(u(i), v(i'))$. 
\end{lem}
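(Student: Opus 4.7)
The plan is to reduce the lemma to showing that the $k(T+1) \times k(T+1)$ symmetric target matrix $M$, defined by $M[(u,i),(v,i')] = \frac{\mu}{2}+\rho(u(i),v(i'))$ when $(u,i) \neq (v,i')$ and $M[(v,i),(v,i)] = \mu+\frac{1}{T+1}$, is positive semidefinite. Once PSD, any Gram factorization $M = V^\top V$ produces the $k(T+1)$ vectors $\{v(i)\}$ with exactly the norms and inner products demanded. (The off-diagonal formula combined with $\rho(v(i),v(i)) = \frac{1}{T+1}$ and $\rho(v(i),v(i'))=0$ for $i\neq i'$ is consistent with treating the diagonal entry $\mu+\frac{1}{T+1}$ as a separate rule.)

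I would decompose $M = N + P + E$, where $N[(u,i),(v,i')] := \nu_S(u(i),v(i'))$ (with $N[(v,i),(v,i)] = \frac{1}{T+1}$) is the second-moment matrix $\mathbb{E}[XX^\top]$ of the indicator vector $X = (\mathbb{I}[r(v)=i])_{v\in S,i\in[T]}$ under the distribution of Corollary~\ref{cor:distribution}, hence manifestly PSD; $P := \frac{\mu}{2}I + \frac{\mu}{2}J$ is PSD with minimum eigenvalue $\mu/2$; and $E := M - N - P$ is an error matrix. A direct calculation, using the fact from Corollary~\ref{cor:distribution} that $\rho = \nu_S$ on pairs with $u=v$ and on pairs with $d(u,v)\le L$, shows that $E$ vanishes on all such entries and satisfies $E[(u,i),(v,i')] = \frac{1}{(T+1)^2} - \nu_S(u(i),v(i'))$ for $d(u,v) > L$. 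The Corollary's second clause then yields the uniform bound $|E[(u,i),(v,i')]| \le \frac{(1-\mu)^L}{T+1}$.

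Finally, I would invoke the large-noise condition of Theorem~\ref{thm:goodinstance}: since $L = \lfloor l/9 \rfloor \ge l/10$ and $(1-\mu)^{l/10} \le \frac{\mu}{5k}$, every entry of $E$ is bounded by $\frac{\mu}{5k(T+1)}$. Each row of $E$ has at most $k(T+1)$ nonzero entries, so the absolute row-sum bound on spectral norm gives $\|E\|_{\mathrm{op}} \le k(T+1)\cdot\frac{\mu}{5k(T+1)} = \mu/5$. Combining, $\lambda_{\min}(M) \ge \lambda_{\min}(N) + \lambda_{\min}(P) - \|E\|_{\mathrm{op}} \ge 0 + \frac{\mu}{2} - \frac{\mu}{5} > 0$, so $M$ is PSD and the vectors exist.

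The main obstacle is conceptual rather than technical: $\rho$ itself need not be a valid second-moment matrix of any distribution, because it substitutes the independent value $\frac{1}{(T+1)^2}$ for the true pairwise marginals $\nu_S$ on far pairs. The construction sidesteps this by regarding $\rho$ as a small perturbation of the honest covariance $\nu_S$; the additive PSD correction $\frac{\mu}{2}I + \frac{\mu}{2}J$ injects enough spectral slack to absorb the perturbation. The quantitative step that must be tuned carefully is matching the three constants in play --- the $(1-\mu)^L$ error from Corollary~\ref{cor:distribution}, the $\mu/(5k)$ large-noise bound from Theorem~\ref{thm:goodinstance}, and the dimension $k(T+1)$ of the Gram matrix --- so that $\|E\|_{\mathrm{op}} < \lambda_{\min}(P)$.
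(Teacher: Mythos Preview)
Your argument is correct and essentially complete. The decomposition $M = N + P + E$ with $N$ the honest second-moment matrix of the indicators, $P = \frac{\mu}{2}(I+J)$, and $E$ the discrepancy $\rho - \nu_S$ supported on far pairs does the job; your entrywise bound $|E_{ab}|\le (1-\mu)^L/(T+1)\le \mu/(5k(T+1))$ and the row-sum estimate $\|E\|_{\mathrm{op}}\le \mu/5 < \mu/2 = \lambda_{\min}(P)$ are both accurate.

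The paper takes a somewhat different route. Rather than arguing spectrally that the Gram matrix is PSD, it \emph{constructs} the vectors as a direct sum $v(i) = v(i)_1 \oplus v(i)_2$: the second block $v(i)_2$ is the indicator $\mathbb{I}[r(v)=i]$ in the probability space of $\nu_S$ (so $u(i)_2\cdot v(i')_2 = \nu_S(u(i),v(i'))$), and the first block $v(i)_1$ is a correction term built by invoking a black-box lemma of Charikar--Makarychev--Makarychev that any nearly-equidistant metric on $t$ points (distances within $\beta/(2t)$ of a common value $\beta$) embeds isometrically into $\ell_2$. The distances $\|u(i)_1 - v(i')_1\|_2^2 = \mu - 2\rho + 2\nu_S$ are all within the required tolerance of $\sqrt{\mu}$ precisely because of the same large-noise bound you use. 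Your approach is more self-contained and avoids the auxiliary embedding lemma; the paper's is more explicitly constructive and mirrors the two-block structure (``honest distribution plus noise absorber'') that is later exploited in the rounding analysis of adjacent vertices.
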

\begin{proof}
For each $u(i)$, we construct two vectors $u(i)_1$ and $u(i)_2$ and finally merge them by $u(i) := u(i)_1 \oplus u(i)_2$. $u(i)_2$ is the indicator random variable for the event $l_V(u) = i$, where the distribution follows $\nu_S$. Since $\nu_S$ is based on an actual distribution on the events, the vectors $\left\{ v(i)_2 \right\}_{v \in V, i \in [T]}$ are embeddable into $l_2$ with 
$\|v(i)_2\|_2^2 = \Pr[l_V(v) = i] = \frac{1}{T + 1}$ and $u(i)_2 \cdot v(i')_2 = \nu_S(u(i), v(i'))$. 
The first group of vectors $\left\{ v(i)_1 \right\}_{v \in V, i \in [T]}$ {\em convert} these inner products from $\nu_S$ to $\rho$ with small error.

The following lemma says that a metric space can be isometrically embeddable into $l_2$ if all pairwise distances are similar.
\begin{lem}[\cite{CMM10}] Consider a metric space $(Y, \alpha)$ on $t$ points. If for every two distinct points $u$ and $v$: $|\alpha(u, v) - \beta| \leq \frac{\beta}{2t}$ for some $\beta > 0$, then $(Y, \alpha)$ is isometrically embeddable into $l_2$. 
\label{lem:l2_embedding}
\end{lem}
We add a vector $O$ (so that we have $k(T+1) + 1$ vectors) and set the following distance requirements.

\begin{enumerate}
\item $\|v(i)_1 - O\|_2 = \sqrt{\mu}$ for all $u_i$. 
\item $\|u(i)_1 - v(i')_1\|_2 = \sqrt{\mu - 2 \rho(u(i), v(i')) + 2\nu_S(u(i), v(i'))}$ for all $u(i), v(i')$.
\end{enumerate}

Note that $|\rho(u(i), v(i')) - \nu_S(u(i), v(i'))| \leq \frac{(1 - \mu)^L}{T+1} \leq \frac{\mu}{5(T+1)k}$, where the 
last inequality follows from Theorem~\ref{thm:goodinstance}. 
This implies
\[
| \| u(i)_1 - v(i')_1\|_2 - \sqrt{\mu}| \leq \sqrt{\mu} (1 - \sqrt {1 - \frac{1}{2.5(T+1)k}}) \leq \sqrt{\mu} \cdot \frac{1}{2((T + 1)k+1)}.
\]
By Lemma~\ref{lem:l2_embedding}, there are vectors $\left\{ u(i)_1, v(i)_1 \right\}_i$ and $O$ that meet the above distance requirements. Without loss of generality, assume that $O$ is the origin. Defining $u(i) := u(i)_1 \oplus u(i)_2$ satisfies

\begin{enumerate}
\item $\|u(i)\|_2^2 = \mu + \frac{1}{T + 1}$.
\item $u(i) \cdot v(i') = u(i)_1 \cdot v(i')_1 + 
u(i)_2 \cdot v(i')_2 = \frac{2\mu - \|u(i)_1 - v(i')_1\|_2^2}{2} + \nu_S(u(i), v(i')) = \frac{\mu}{2} + \rho(u(i), v(i'))$.
\end{enumerate}
\end{proof}

\subsection{Rounding}
\label{subsec:proof_rounding}
\begin{lem}
[Restatement of Lemma~\ref{lem:rounding}]
There exists $\mu$ depending on $T$ and $\epsilon$ such that, in the above rounding scheme, the probability that $l_V(u) = 0$ and $l_V(v) = l_A(u, v)$ is at most $\frac{1 - 12\epsilon}{T + 1}$.
\end{lem}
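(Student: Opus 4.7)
My plan is to exploit the symmetric structure of the embedding to factor the joint rounding probability as an expected square, and then calibrate $\mu$. Write $t := l_A(u,v)$ and look for an orthogonal signal/noise split
\[
u(i) = \sqrt{\tfrac{1-\mu}{T+1}}\, e_i + T_i^{(u)}, \qquad v(j) = \sqrt{\tfrac{1-\mu}{T+1}}\, e_{j-t} + T_j^{(v)},
\]
where $\{e_0,\dots,e_T\}$ is orthonormal and the noise vectors $T^{(u)}_i, T^{(v)}_j$ live in a subspace orthogonal to $\operatorname{span}\{e_i\}$. Substituting into the inner-product prescriptions of Lemma~\ref{lem:embedding} forces $\|T_i^{(u)}\|^2 = \mu(T+2)/(T+1)$, $T_i^{(u)}\cdot T_{i'}^{(u)} = \mu/2$ for $i\ne i'$, and---crucially---$T_i^{(u)}\cdot T_j^{(v)} = \mu/2 + \mu/(T+1)^2$ for \emph{every} pair $(i,j)$, matching or not, because the matching bonus $\tfrac{1-\mu}{T+1}$ has been entirely absorbed into the $e$-component. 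Splitting the resulting $2(T+1)\times 2(T+1)$ noise Gram matrix into its symmetric/antisymmetric sectors (each of the form $\alpha I + \beta J$) shows the smallest eigenvalue is $\mu/2$, so the decomposition is feasible.

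With $g \sim \mathcal{N}(0,I)$ and $X_i := g\cdot e_i$, $Y_i^{(u)} := g\cdot T_i^{(u)}$, $Y_j^{(v)} := g\cdot T_j^{(v)}$, orthogonality makes $X$ independent of all $Y$'s. The event $l_V(u)=0$ becomes $\Delta_i \ge P_i$ for all $i\ne 0$, and $l_V(v)=t$ becomes $\Delta_k \ge Q_k$ for all $k\ne 0$, with
\[
\Delta_i := \sqrt{\tfrac{1-\mu}{T+1}}(X_0 - X_i), \quad P_i := Y_i^{(u)} - Y_0^{(u)}, \quad Q_k := Y_{t+k}^{(v)} - Y_t^{(v)}.
\]
The two-line punchline: since all four cross covariances $\operatorname{Cov}(Y^{(u)}_\cdot, Y^{(v)}_\cdot)$ equal the same constant, expanding $\operatorname{Cov}(P_i, Q_k)$ gives a four-term cancellation and $\operatorname{Cov}(P_i, Q_k) = 0$ for all $i,k$. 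Joint Gaussianity then upgrades this to genuine independence $P \perp Q$, and the identical within-vertex Gram structures for $u$ and $v$ ensure $P$ and $Q$ have the same distribution. Conditioning on the independent $\Delta$ yields
\[
\Pr[l_V(u)=0,\, l_V(v)=t] = \E_{\Delta}\!\bigl[F(\Delta)^2\bigr], \qquad F(\Delta) := \Pr[P_i \le \Delta_i \;\forall i \mid \Delta],
\]
while exchangeability of $(A_0,\dots,A_T)$ forces the marginal $\E_{\Delta}[F(\Delta)] = \Pr[l_V(u)=0] = 1/(T+1)$.

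Finally I would tune $\mu$ to make $F$ almost constant. Since $\operatorname{Var}(P_i) = \mu(T+3)/(T+1)$ grows in $\mu$ while $\operatorname{Var}(\Delta_i) = 2(1-\mu)/(T+1)$ shrinks, as $\mu \uparrow 1$ the noise swamps the signal and $F(\Delta) \to \Pr[Y_0^{(u)} = \max_i Y_i^{(u)}] = 1/(T+1)$ in probability, so by bounded convergence $\E[F(\Delta)^2] \to 1/(T+1)^2$. The hypothesis $\epsilon < 1/100$ gives $1/(T+1) \le 1/2 < 0.88 \le 1 - 12\epsilon$ for every $T \ge 1$, so $1/(T+1)^2 < (1-12\epsilon)/(T+1)$ strictly; by continuity of $\mu \mapsto \E[F(\Delta)^2]$ some finite $\mu = \mu(T,\epsilon) \in (0,1)$ delivers the required upper bound $(1-12\epsilon)/(T+1)$. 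The main obstacle is finding the orthogonal decomposition that kills the matching dependence of the noise---once that is in place, the factorisation $\E[F^2]$ and the calibration of $\mu$ are routine.
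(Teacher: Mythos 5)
Your decomposition is correct and, modulo packaging, is the same one the paper uses: the paper's $a(i)$ is your signal $\sqrt{\tfrac{1-\mu}{T+1}}\,e_i$, and your noise $T_i^{(u)}$, $T_j^{(v)}$ just collapse the paper's $\sum_j b(i,j)+c(i)+d$, $\sum_i b(i,j)+c'(j)+d$. The key observation — that all noise cross inner products $T_i^{(u)}\cdot T_j^{(v)}$ are the same constant $\tfrac{\mu}{2}+\tfrac{\mu}{(T+1)^2}$ regardless of whether $(i,j)$ match — is right, and the factorization $\Pr[l_V(u)=0,\,l_V(v)=t]=\E_\Delta[F(\Delta)^2]$ with $\E_\Delta[F(\Delta)]=\tfrac{1}{T+1}$ is a genuinely clean way to organize the analysis.

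But the calibration of $\mu$ is in the \emph{wrong direction}, and it makes the proof unusable for the purpose it serves. The ``at most'' in the statement is a typo: what the construction needs is $\Pr[l_V(u)=0,\,l_V(v)=l_A(u,v)]\ge\tfrac{1-12\epsilon}{T+1}$, i.e.\ the local distribution must assign \emph{high} probability to the satisfying assignment so that the Sherali--Adams solution has value near $\tfrac{1}{T+1}$ (making the integrality gap $\approx\tfrac{T+1}{4T}$). An upper bound of $\tfrac{1-12\epsilon}{T+1}$ is almost trivial — the union intersection is always $\le\Pr[l_V(u)=0]=\tfrac{1}{T+1}$ — and contributes nothing. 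By Cauchy--Schwarz you always have $\tfrac{1}{(T+1)^2}=\E[F]^2\le\E[F^2]\le\E[F]=\tfrac{1}{T+1}$; the lemma needs $\E[F^2]$ near the \emph{upper} end, which happens when $F$ is nearly a $0/1$ function, i.e.\ when the shared signal dominates the independent noise. That means taking $\mu$ \emph{small}, not sending $\mu\uparrow 1$. Sending $\mu\uparrow 1$ drives $F$ to the constant $\tfrac{1}{T+1}$ and $\E[F^2]$ to the useless lower endpoint $\tfrac{1}{(T+1)^2}$. The paper's proof takes $\mu\le\tfrac{\epsilon^2}{256(T+1)\log^2(\frac{T+1}{\epsilon})}$ and bounds, via Gaussian tail estimates on $\max_j C(j)$ versus the gap $\max_j A(j)-\secondmax_j A(j)$, the probability that noise overturns the signal's argmax; that is the hard step you would still need to supply.
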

\begin{proof}
For notational simplicity, assume $l_A(u, v) = 0$ --- which is not allowed in actual instances. Then $u(i)$ and $v(i)$ become {\em matching} vectors --- $\rho(u(i), v(i)) = \frac{1 - \mu}{T + 1} + \frac{\mu}{(T + 1)^2}$ and $\rho(u(i), v(j)) = \frac{\mu}{(T + 1)^2}$ for $i \neq j$. The following is the list of all possible inner products between $2(T+1)$ vectors.

\begin{enumerate}
\item $\| u(i)\|_2^2 = \mu + \frac{1}{T + 1}$.
\item $u(i) \cdot u(j) = \frac{\mu}{2}$ for $i \neq j$.
\item $u(i) \cdot v(i) = \frac{\mu}{2} + \frac{1 - \mu}{T + 1} + \frac{\mu}{(T + 1)^2}$.
\item $u(i) \cdot v(j) = \frac{\mu}{2} + \frac{\mu}{(T+1)^2}$ for $i \neq j$.
\end{enumerate}

Even though we used Lemma~\ref{lem:l2_embedding} as a black-box to obtain the current embedding, we can explicitly represent $u(i), v(i)$'s in the Euclidean space. They can be represented as a linear combination of $(T+1) + (T+1)^2 + 2(T+1) + 1$ orthogonal vectors (with different lengths), which can be classified into the following four categories:

\begin{itemize}
\item $a(i)$ for $i \in [T]$: Length $\sqrt{\frac{1 - \mu}{T+1}}$. Denotes the event that $(u, v)$ is not deleted and $l_V(u) = l_V(v) = i$.
\item $b(i, j)$ for $i, j \in [T]$: Length $\sqrt{\frac{\mu}{(T+1)^2}}$. Denotes the event that $(u, v)$ is deleted and $l_V(u) = i$, $l_V(v) = j$.
\item $c(i), c'(i)$ for $i \in [T]$: Length $\sqrt{\frac{\mu}{2}}$. One of them is assigned for each of $2(T+1)$ vectors.
\item $d$: Length $\sqrt{\frac{\mu}{2}}$. Common for all vectors.
\end{itemize}
Let 
\begin{align*}
u(i) := a(i) + \sum_{j} b(i, j) + c(i) + d \\
v(i) := a(i) + \sum_{j} b(j, i) + c'(i) + d.
\end{align*}
It is straightforward to check that the following representation of $u(i)$ and $v(i)$ satisfy all the inner product requirements. 

For each vector $u(i)$, we denote the random variable equal to the inner product of $u(i)$ and $g$ by $U(i)$. Similarly, define $V(i), A(i), B(i, j), C(i), C'(i), D(i)$ for $v(i), a(i), b(i, j), c(i), c'(i), d(i)$ respectively.
Each random variable follows the Gaussian distribution with mean 0 and standard deviation same with the length of the corresponding vector. Furthermore, the inner products of two vectors is the same with the covariance of corresponding random variables. The following lemma shows that our consistent local distributions actually satisfy each edge with probability close to $\frac{1}{T + 1}$, proving Theorem~\ref{thm:gmd}.

\begin{lem}
Fix $i \in [T]$ and $0 < \epsilon < 1/24$. If $\mu \leq \frac{\epsilon^2}{256(T+1) \log^{2} (\frac{T+1}{\epsilon})}$, 
\[
\Pr[l_V(u) =  i, l_V(v) = i] \geq \frac{1 - 12\epsilon}{T+1}.
\]
\end{lem}
\begin{proof}
We compute the probability that $u$ and $v$ are assigned the same label $i$. 

\begin{eqnarray*}
&& Pr[l_V(u) =  i, l_V(v) = i] \\
& \geq & \Pr[A(i) = \max_j [A(j)]] * \Pr[\max_{j}[\sum_{k} B(j, k)], \max_{j}[\sum_{k} B(k, j)], \max_{j}[C(j)], \max_{j}[C'(j)] \\ 
&& \leq \frac{A(i)  - \max_{j \neq i}[A(j)]}{4} | A(i) = \max_j [A(j)]] \\
&\geq& \frac{1}{T+1} \Pr[\max_{j}[\sum_{k} B(j, k)], \max_{j}[\sum_{k} B(k, j)], \max_{j}[C(j)], \max_{j}[C'(j)] \leq \frac{\max_{j}[A(j)]  - \secondmax_{j}[A(j)]}{4}]
\end{eqnarray*}

We argue that the above quantity is close to $\frac{1}{T+1}$ by showing that each of 4 quantities 
\[
\max_{j}[\sum_{k} B(j, k)], \max_{j}[\sum_{k} B(k, j)], \max_{j}[C(j)], \max_{j}[C'(j)]
\]
is greater than $\frac{\max_{j}[A(j)]  - \secondmax_{j}[A(j)]}{4}$ with small probability. 
Note that $\sum_{k} B(j, k)$ follows the Gaussian distribution with mean 0 and variance $\frac{\mu}{T+1}$, which is much less than that of $C(j)$. Since $C(j)$ and $C'(j)$ follow the same distribution, it is enough to show that $\max_{j}[C(j)] > \frac{\max_{j}[A(j)]  - \secondmax_{j}[A(j)]}{4}$ with small probability. The following claim proves the lemma.
\end{proof}

\begin{claim}
Let $0 < \epsilon < 1/4$. If $\mu \leq \frac{\epsilon^2}{256(T+1) \log^{2} (\frac{T+1}{\epsilon})}$,
\[
\Pr[\max_{j}[C(j)] >  \frac{\max_{j}[A(j)]  - \secondmax_{j}[A(j)]}{4}] < 3\epsilon.
\]
\end{claim}
\begin{proof}
The above probability can be rewritten as
\[
\Pr[\sqrt{\frac{\mu}{2}} \max_{j}[g_j] > \sqrt{\frac{1 - \mu}{T+1}} \frac{\max_{j}[g'_j]  - \secondmax_{j}[g'_j]}{4}]
\]
where $g_0, \dots, g_T, g'_0, \dots, g'_T$ are independent standard Gaussian random variables. 

Let $x \geq \sqrt{\mu \log \frac{T+1}{\epsilon}}$. By Lemma~\ref{lem:maxgauss1},
\[
\Pr[\sqrt{\frac{\mu}{2}} \max_{j}[g_j] > x] < \epsilon.
\]
Let $x \leq \frac{\epsilon}{8\sqrt{\log \frac{T+1}{\epsilon}}} \sqrt{\frac{1 - \mu}{T + 1}}$. By Lemma~\ref{lem:maxgauss2}, 
\[
\Pr[\sqrt{\frac{1 - \mu}{T+1}} \frac{\max_{j}[g'_j]  - \secondmax_{j}[g'_j]}{4} < x] < 2\epsilon.
\]
The fact that $\mu \leq \frac{\epsilon^2}{256(T+1) \log^{2} (\frac{T+1}{\epsilon})}$ ensures that there is $x$ that satisfies the both Lemma~\ref{lem:maxgauss1} and~\ref{lem:maxgauss2}. Taking union bound proves the lemma.
\end{proof}
It remains to prove the following two lemmas about Gaussians. 
We prove them in Appendix~\ref{sec:gaussian} using some basic properties of Gaussians.
\begin{lem}
Let $g_1, \dots, g_n$ ($n \geq 2$) be independent standard Gaussian random variables and $0 < \epsilon < 1$.
If $x \geq \sqrt{2 \log \frac{n}{\epsilon}}$, 
\[
\Pr[\max_j [g_j] \leq x] \geq 1 - \epsilon.
\]
\label{lem:maxgauss1}
\end{lem}

\begin{lem}
Let $g_1, \dots, g_n$ ($n \geq 2$) be independent standard Gaussian random variables and $0 < \epsilon < 1/4$.
If $x \leq \frac{\epsilon}{2\sqrt{\log \frac{n}{\epsilon}}}$,
\[
\Pr[\max_j[g_j] - \secondmax_j[g_j] \geq x] \geq (1 - 2\epsilon).
\]
\label{lem:maxgauss2}
\end{lem}
\end{proof}

\section{$(\frac{1}{4} + \Omega(\frac{1}{T}))$-Approximation Algorithm for Generalized Max-Dicut}
\label{sec:maxdicut_algo}
In this section, we propose an approximation algorithm for Generalized Max-Dicut($T$) 
that guarantees $(\frac{1}{4} + \frac{1}{16T})$ fraction of the optimal solution, 
proving Theorem~\ref{thm:gmd_algo}.
It is based on the 2-rounds of the Sherali-Adams hierarchy (also known as the standard LP), defined as the following:
\begin{eqnarray*}
\mbox{maximize} & \sum_{(u, v) \in A} x_{(u, v)} (0, l_A(u, v)) & \\
\mbox{subject to} & \sum_{\alpha \in [T]^S} x_S(\alpha) = 1 & \mbox{ for all } S \subseteq V, |S| \leq 2 \\
& \sum_{j \in [T]} x_{(u, v)}(i, j) = x_u(i) & \mbox{ for all } u \neq v, i \in [T]
\end{eqnarray*}

The algorithm is almost identical to the simple $\frac{1}{4}$-approximation algorithm. For each vertex $v$, independently set $l_V(v) = 0$ with probability $\frac{1 + x_v(0)}{2}$, and $l_V(v) = i$ ($i \neq 0$) with probability $\frac{x_v(i)}{2}$. Equivalently, we assign each vertex 0 with probability half and follow its marginal $x_v$ with probability half. 

For each edge $(u, v) \in A$, let $c = c(u, v) := x_{(u, v)}(0, l_A(u, v))$
so that the value the solution $\left\{ x_S(\alpha) \right\}$ to the LP is $\E_{(u, v)}[c(u, v)] \geq \Opt$. The probability that $(u, v)$ is satisfied is 
\[
(\frac{1 + x_u(0)}{2}) (\frac{x_v(l_A(u, v))}{2}) \geq \frac{c}{4} + \frac{c^2}{4}
\]
since $x_u(0), x_v(l_A(u, v)) \geq c$.
Therefore, the expected fraction of satisfied edges is at least 
\[
\E_{(u, v) \in A} [\frac{c(u, v)}{4} + \frac{c(u, v)^2}{4}] \geq \frac{\Opt}{4} + \frac{\Opt^2}{4} \geq \frac{\Opt}{4} + \frac{\Opt}{16T} 
\]
since $\Opt \geq \frac{1}{4T}$ (focusing on the label with the most edges and finding the maximum dicut with respect to the edges with this label guarantees to satisfy $\frac{1}{4T}$ fraction of edges). 

\section{Details in the Unique Games-Hardness}
\label{sec:proof_ughard}
\subsection{Lemmas about the Dictatorship Test}
\begin{lem}
[Restatement of Lemma~\ref{lem:correlation}]
For any $t$, $\rho(\Omega_1, \Omega_2; \PP^t) \leq \sqrt{\frac{2}{T \delta}}$.
\end{lem}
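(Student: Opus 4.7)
The plan is to exploit the fact that $\PP^t$ decomposes cleanly as the convex combination of a point mass at $(0,t)$ (weight $q := \PP(t) = (1-\delta)/T$) and the product distribution $\PP' \otimes \PP_{|y \neq t}$ (weight $1-q$). Only the point-mass part creates dependence, so the correlation will be governed by the size of $q$ relative to how atypical the coordinates $0$ and $t$ are under the marginals.

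First, I would reduce to mean-zero $f, g$ with $\mathsf{Var}[f] = \mathsf{Var}[g] = 1$, which is valid since $\mathsf{Cov}$ is translation-invariant. Writing $a = f(0)$, $b = g(t)$, $\alpha = \E_{\PP'}[f]$, and $\beta = \E_{\PP_{|y \neq t}}[g]$, the mixture decomposition yields
$$\mathsf{Cov}[f, g] \;=\; \E_{(x,y) \sim \PP^t}[f(x)g(y)] \;=\; q\,ab + (1-q)\,\alpha\beta,$$
where the second term factors by independence on the product branch. Since the marginals of $\PP^t$ on both coordinates are $\PP$ (this is exactly how $\PP'$ was defined), the mean-zero constraints become $qa + (1-q)\alpha = 0$ and $qb + (1-q)\beta = 0$. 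Substituting $\alpha = -qa/(1-q)$ and $\beta = -qb/(1-q)$, the covariance collapses to the clean identity
$$\mathsf{Cov}[f,g] \;=\; q ab + \frac{q^2 ab}{1-q} \;=\; \frac{q ab}{1-q}.$$

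Second, I would bound $|a|$ and $|b|$ using the variance constraints. Applying Cauchy--Schwarz to the constraint $\delta a + \sum_{j \neq 0} \PP(j) f(j) = 0$ gives $\sum_{j \neq 0} \PP(j) f(j)^2 \geq \delta^2 a^2/(1-\delta)$, so $\mathsf{Var}[f] \geq \delta a^2/(1-\delta)$, hence $|a| \leq \sqrt{(1-\delta)/\delta}$. An identical argument for $g$ against the atom at $t$ yields $|b| \leq \sqrt{(1-q)/q}$. Combining everything,
$$|\mathsf{Cov}[f,g]| \;\leq\; \frac{q}{1-q}\sqrt{\frac{1-\delta}{\delta}}\sqrt{\frac{1-q}{q}} \;=\; \sqrt{\frac{q(1-\delta)}{(1-q)\delta}} \;=\; \sqrt{\frac{(1-\delta)^2}{\delta(T - 1 + \delta)}},$$
and the elementary inequality $(1-\delta)^2 T \leq 2(T - 1 + \delta)$ (valid for $T \geq 2$) reduces this to $\sqrt{2/(T\delta)}$, giving the claim after taking the supremum over $f, g$.

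I do not anticipate a substantive obstacle: the only places requiring care are (i) verifying that the mixture decomposition is consistent with the definition of $\PP'$ so that the mean-zero equations take the stated form, and (ii) tracking constants in the final algebraic simplification. Both are routine.
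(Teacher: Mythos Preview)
Your proof is correct. It differs from the paper's in presentation rather than in essence, but the difference is worth noting. The paper invokes the Markov-operator characterization $\rho = \sup_f \sqrt{\E[(Lf)^2]}$ (Lemma~2.8 of Mossel), computes $(Lf)(y)$ explicitly --- it equals $f(0)$ when $y=t$ and $\E_{\PP'}[f]$ otherwise --- and then bounds $\E[(Lf)^2] \le \frac{q}{1-q}\,f(0)^2 \le \frac{2}{T\delta}$ using $\delta f(0)^2 \le \E[f^2]=1$. You instead work directly from the definition of $\rho$ as a supremum of covariances: the mixture decomposition $\PP^t = q\,\delta_{(0,t)} + (1-q)\,\PP'\otimes \PP_{|y\neq t}$ together with the mean-zero constraints collapses $\mathsf{Cov}[f,g]$ to the closed form $qab/(1-q)$, after which you bound $|a|$ and $|b|$ separately. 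The two computations are equivalent (your optimization over $g$ with $|b|\le\sqrt{(1-q)/q}$ recovers exactly $\sqrt{\E[(Lf)^2]}$), but your route is self-contained --- it avoids the external lemma --- and your Cauchy--Schwarz step yields the marginally sharper $a^2 \le (1-\delta)/\delta$ rather than the paper's $a^2 \le 1/\delta$, though both land at the same final bound $\sqrt{2/(T\delta)}$.
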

\begin{proof}
Let $f : \Omega_1 \rightarrow \mathbb{R}$ be the function satisfying $\E[f] = 0$, $\E[f^2] = 1$.
Let $L$ be the Markov operator defined in Section 2.1 of Mossel~\cite{Mossel10} such that
\[
(Lf)(y) = \E[f(X) | Y = y]
\]
for $y \in \Omega_2$ and $(X, Y) \in \Omega_1 \times \Omega_2$ is distributed according to $\PP^t$. 
By Lemma 2.8 of~\cite{Mossel10}, 
\[
\rho(\Omega_1, \Omega_2) = \sup_f \sqrt{\E[(Lf)^2]}.
\]
Let $f(i) = a_i, (Lf)(i) = b_i$ for $i \in [T]$. 
$b_t = a_0$ and all the other $b_i$'s are equal to $\E_{\PP'_1}[f]$, which is equal to 
$(\frac{1}{1 - \frac{1 - \delta}{T}})(\E_{\PP_1}[f] - \frac{1 - \delta}{T} a_0) = 
(\frac{1}{1 - \frac{1 - \delta}{T}})(- \frac{1 - \delta}{T} a_0)$. 

\begin{eqnarray*}
\E[(Lf)^2] &=& (\frac{1 - \delta}{T}) a_0^2 + 
(1 - \frac{1 - \delta}{T}) [(\frac{1}{1 - \frac{1 - \delta}{T}})(- \frac{1 - \delta}{T} a_0)]^2 \\
&=& (\frac{1 - \delta}{T}) a_0^2 + 
(\frac{1}{1 - \frac{1 - \delta}{T}})(\frac{1 - \delta}{T} a_0)^2 \\
&=& (\frac{1 - \delta}{T}) a_0^2 [ 1 +
(\frac{1}{1 - \frac{1 - \delta}{T}})(\frac{1 - \delta}{T})] \\
&\leq& \frac{2}{T} a_0^2 \\
&\leq& \frac{2}{T \delta}
\end{eqnarray*}
Since $\delta a_0^2 \leq \E[f^2] \leq 1$.
\end{proof}

\subsection{Reduction From Unique Games}
\label{sec:ugreduction}
In this subsection, we introduce the reduction from the Unique Games to Generalized Max-Dicut($T$), using the dictatorship test constructed.
We first introduce the Unique Games Conjecture~\cite{Khot02}, which is stated below.
\begin{defi} [Unique Games]
An instance $\mathcal{L} (G(U \cup W, E), [R]^+, \left\{ \pi(u, w) \right\}_{(u, w) \in E})$ of Unique Games consists of a regular bipartite graph $G(U \cup W, E)$ and a set $[R]^+$ of labels. For each edge $(u, w) \in E$ there is a constraint specified by a permutation $\pi(u, w) : [R]^+ \rightarrow [R]^+$. Given a labeling $l : U \cup W \rightarrow [R]^+$, let $\Valug(l)$ be the fraction of labels satisfied by $l$, where an edge $e = (u, w)$ is said to be satisfied if $l(u) = \pi(u, w)(l(w))$. 
Let $\Optug(\mathcal{L}) = \max_l (\Valug(l))$. 
\end{defi}
\begin{conj} [Unique Games Conjecture~\cite{Khot02}] 
\label{conj:ug}
For any constant $\alpha > 0$, there is $R = R(\alpha)$ such that, for a Unique Games instance $\mathcal{L}$ with label set $[R]^+$, it is NP-hard to distinguish between
\begin{itemize}
\item $\Optug(\mathcal{L}) \geq 1 - \alpha$. 
\item $\Optug(\mathcal{L}) \leq \alpha$. 
\end{itemize}
\end{conj}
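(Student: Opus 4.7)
The statement at the end of the excerpt is the Unique Games Conjecture of Khot itself, which is a famous open problem rather than a theorem awaiting proof; my ``proposal'' must therefore be read as a sketch of what a hypothetical proof would have to accomplish and where the obstruction lies, not as a serious attempt at a proof I expect to succeed.

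The standard framework for any NP-hardness of a gap problem is a gap-preserving reduction from a problem already known to be NP-hard with a constant gap. The only plausible starting point here is the Label Cover problem obtained from the PCP theorem combined with Raz's parallel repetition, which yields arbitrarily small soundness but has \emph{projection} constraints rather than bijective ones. The plan would be to compose such a Label Cover instance with an inner verifier that, when the outer assignment is consistent, accepts with probability $1 - \alpha$ using a test whose constraints on encoded labels form \emph{permutations} of $[R]^+$, and that, when soundness fails (i.e., the test accepts with probability exceeding $\alpha$), permits one to decode an outer Label Cover assignment satisfying a fraction of constraints forbidden by the outer soundness. The first concrete step is to choose an encoding of $[R]^+$-valued labels: the Long Code is the classical choice for composition with dictatorship tests, while the Short Code of Barak--Gopalan--H\aa stad--Meka--Raghavendra--Steurer and the Grassmann encoding of Khot--Minzer--Safra are more recent alternatives that have led to the $2$-to-$2$ Games Theorem but not further.

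The second step is to design a permutation test on encoded labels. Roughly, for a random pair of encodings supposed to correspond to the two sides of a Label Cover edge, the test would check whether the two encoded labels are related by a permutation of $[R]^+$ drawn from the Label Cover constraint, with appropriately calibrated noise to control soundness. The hope, familiar from Khot--Kindler--Mossel--O'Donnell and Mossel's invariance principle, is that any assignment passing the test with probability above $\alpha$ can be decoded, through Fourier analysis of the encoding and influence-based list decoding, into a short list of candidate labels for each Label Cover vertex whose expected agreement across edges contradicts the assumed soundness.

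The hard part, and the reason the conjecture has remained open for over two decades, is precisely this soundness analysis in the \emph{genuinely bijective} regime. The Grassmann-graph expansion inequalities underlying the $2$-to-$2$ theorem succeed because a $2$-to-$2$ constraint still carries some slack, whereas a $1$-to-$1$ constraint provides almost none; any honest plan must therefore single out a new expansion or hypercontractivity inequality on a suitable test graph as the missing ingredient. Additionally, any successful reduction must survive the subexponential-time algorithm of Arora--Barak--Steurer for Unique Games, which heavily constrains the parameter blow-up, and must be compatible with known sum-of-squares lower bounds on small-set expansion. I have no concrete mechanism to upgrade the current $(2+\epsilon)$-to-$(2+\epsilon)$ frontier to genuine uniqueness, and a candid plan must flag this as the open step.
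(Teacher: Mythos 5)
You correctly recognize that the statement is the Unique Games Conjecture itself, which the paper merely assumes as a hypothesis (citing Khot 2002) and does not prove --- indeed no proof exists. Your refusal to supply a proof, together with the survey of what any proof would require and why the genuinely bijective case resists the Grassmann-graph techniques that yielded the $2$-to-$2$ Games Theorem, is the right response; there is nothing in the paper to compare against.
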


\begin{thm}
[Restatement of Theorem~\ref{thm:ug-hard}]
Under the Unique Games Conjecture, 
it is NP-hard to approximate Generalized Max-Dicut($T$) on directed acyclic graphs within a factor of $\frac{1}{4} + O(\frac{1}{T^{1/4}})$.
\end{thm}

\begin{proof}

Given an instance of $\mathcal{L} (G(U \cup W, E), [R]^+, \left\{ \pi(v, w) \right\}_{(v, w) \in E})$ of Unique Games, we construct an instance $\calD(\calV, \calA), l_{\calA}$ of Generalized Max-Dicut($T$). For $x \in [T]^R$ and a permutation $\pi : [R]^+ \rightarrow [R]^+$, let $x \circ \pi \in [T]^R$ be defined by $(x \circ \pi)_i = (x)_{\pi^{-1}(i)}$. Let $D = (V, A)$ be the fixed-size graph where the maximum dicut has at most $(\frac{1}{4} + \frac{1}{T^{1/4}})$ fraction of edges. 
\begin{itemize}
\item $\calV = U \times V \times [T]^R$.
\item Sample $w \in W$ uniformly at random and its neighbors $u_1, u_2$ uniformly and independently. 
Sample $t \in [T]^+$, $(v_1, v_2) \in A$, and $x, y \in [T]^R$ from the dictatorship test. Add an edge $((u_1, v_1, x \circ \pi_{u_1, w}), (u_2, v_2, y \circ \pi_{u_2, w}))$ to $\calA$ with label $t$. The weight is equal to the probability that this edge is sampled. 
\end{itemize}

\paragraph{Completeness} Suppose that $\Valug(l) \geq 1 - \alpha$ for some labeling $l : U \cup W \rightarrow [R]^+$.

Set $l_{\calV}(u, v, (x_1, \dots, x_R)) = x_{l(u)}$. 
For $w, u_1, u_2$ sampled as above, with probability $1 - 2 \alpha$, $\pi(u_1, w)^{-1}(l(u_1)) = \pi(u_2, w)^{-1}(l(u_2))$.
In that case, by Lemma~\ref{lem:completeness}, 
\begin{align*}
& \Pr_{v_1, v_2, t, x, y} [l_{\calV}(u_1, v_1, x \circ \pi_{u_1, w}) = 0, l_{\calV}(u_2, v_2, y  \circ \pi_{u_2, w}) = t]  \\
=& \Pr_{v_1, v_2, t, x, y} [(x \circ \pi_{u_1, w})_{l(u_1)}  = 0, (y  \circ \pi_{u_2, w})_{l(u_2)} = t] \\
=& \Pr_{v_1, v_2, t, x, y} [(x)_{\pi(u_1, w)^{-1}(l(u_1))}  = 0, (y)_{\pi(u_2, w)^{-1}(l(u_2))} = t] \\
\geq& \frac{1 - \delta}{T}.
\end{align*}
Therefore, $\Valgmd(l_{\calV}) \geq \frac{(1 - 2\alpha)(1 - \delta)}{T}$. 

\paragraph{Soundness}
For each $u \in U, v \in V$ and $t \in [T]$, let $F_{u, v, t} : [T]^R \rightarrow \left\{ 0, 1 \right\}$ be defined by
\[
F_{u, v, t}(x) = 1 \mbox { if and only if } l_{\calV}(u, v, x) = t.
\]

Similarly, for each $w \in W, v \in V$ and $t \in [T]$, let $H_{w, v, t} : [T]^R \rightarrow [0, 1]$ be the function defined by 
\[
H_{w, v, t}(x) = \E_{(u, w) \in E} [F_{u, v, t}(u, x \circ \pi(u, w))] = 
\Pr_{(u, w) \in E}[l_{\calV}(u, v, x \circ \pi(u, w)) = t].
\]

Suppose that there exists $l_V$ such that $\Valgmd(l_V) \geq \frac{1}{4T} + \frac{5}{T^{5/4}}$. 
For at least $\frac{1}{T^{5/4}}$ fraction of $w$, an edge of $\calA$ sampled by first choosing $w$ is satisfied with probability more than $\frac{1}{4T} + \frac{4}{T^{5/4}}$. 
By Lemma~\ref{lem:soundness}, there exist $\tau$ and $D$, such that, for each such $w$, we have 
$\Inf_i^{\leq d}[H_{w, v, t}] > \tau$ for some $i, v$ and $t$. Set $l_{\calV}(w) = i$. For other $w$'s, choose $l_V(w)$ arbitrarily. 

From the representation of influences in terms of Fourier coefficients (see Khot et al.~\cite{KKMO07}), 
\[
\tau < \Inf_i^{\leq d}[H_{w, v, t}] \leq \E_{(u, w) \in E}[\Inf_{\pi_{u, w}(i)}^{\leq d}[F_{u, v, t}]]
\]
and we conclude that $\tau / 2$ fraction of neighbors $u$ of $w$ have $\Inf_{\pi_{u, w}(i)}^{\leq d}(F_{u, v, t}) \geq \tau / 2$. 
We choose $l_{\calV}(u)$ uniformly from 
\[
\left\{ i : \Inf_{i}^{\leq d}[F_{u, v, t}] \geq \tau / 2 \mbox{ for some } t, v \right\}.
\]
Since $\sum_i \Inf_{i}^{\leq D}[F_{u, t}] \leq d$, 
there are at most $\frac{2(T+1)d|V|}{\tau}$ of candidate $i$'s for each $u$. If $u$ have no candidate, choose $l_V(u)$ arbitrarily. The above strategy satisfies $(\frac{1}{T^{5/4}})(\frac{\tau}{2})(\frac{\tau}{ 2(T+1)D|V|})$ fraction of constraints in expectation. 
Taking $\alpha$ small enough completes the proof of the theorem.
\end{proof}

Now, we present the full proof of our main theorem. 

\begin{thm}
[Restatement of Theorem~\ref{thm:main}]
Under the Unique Games Conjecture, for any $\epsilon > 0$, it is NP-hard to approximate Graph Pricing within a factor of $\frac{1}{4} + \epsilon$. 
\end{thm}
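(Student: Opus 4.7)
The plan is to obtain Theorem~\ref{thm:main} as a direct consequence of the UGC-hardness for Generalized Max-Dicut on DAGs (Theorem~\ref{thm:ug-hard}) combined with the approximation-preserving reduction of Theorem~\ref{thm:reduction}, with the parameters tuned so that the additive slack in the reduction is negligible compared to the integral optimum.

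Given $\epsilon>0$, I would first choose $T=T(\epsilon)$ large enough that the hardness factor $\tfrac{1}{4}+O(T^{-1/4})$ from Theorem~\ref{thm:ug-hard} is at most $\tfrac{1}{4}+\tfrac{\epsilon}{2}$. Theorem~\ref{thm:ug-hard} then yields a $(c,s)$-gap for Generalized Max-Dicut($T$) on DAGs with $c\geq \tfrac{1-o_T(1)}{T}$ and $s\leq \tfrac{1}{4T}+O(T^{-5/4})$. Because Theorem~\ref{thm:reduction} only works when the GMD instance has normalized outdegree at least $1/\epsilon'$ for a parameter $\epsilon'$ of our choosing, I would next pick $\epsilon'$ much smaller than $1/T$ (say $\epsilon'=\Theta(T^{-2})$) and build the hard GMD instance starting from a regular Unique Games instance of sufficiently large degree. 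Since the reduction in Appendix~\ref{sec:ugreduction} inherits the degree of the underlying UG instance (up to the fixed-size graph $D$ used in the composition), making the UG degree large suffices to force $\ndeg\geq 1/\epsilon'$.

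Applying Theorem~\ref{thm:reduction} to the resulting DAG instance gives a Graph Pricing instance whose optimum is sandwiched between $\Optgmd$ and $\Optgmd + 3\epsilon'$. In the YES case this yields $\Optgp \geq c$, while in the NO case $\Optgp \leq s+3\epsilon'$. Dividing, the ratio is
\[
\frac{\Optgp^{\mathrm{NO}}}{\Optgp^{\mathrm{YES}}}
\leq \frac{\tfrac{1}{4T}+O(T^{-5/4})+3\epsilon'}{(1-o_T(1))/T}
= \frac{1}{4}+O(T^{-1/4})+O(T\epsilon'),
\]
and with $\epsilon'=\Theta(T^{-2})$ the second term is $O(T^{-1})$ and absorbed into $\tfrac{\epsilon}{2}$. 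Hence for large enough $T$ the overall ratio is at most $\tfrac{1}{4}+\epsilon$, and since distinguishing the two sides of the GMD gap is UGC-hard, so is approximating Graph Pricing within $\tfrac{1}{4}+\epsilon$.

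I expect no substantive obstacle here, since the work has been done in Theorems~\ref{thm:ug-hard} and~\ref{thm:reduction}; the only care needed is the joint tuning of $T$ and $\epsilon'$ so that the additive loss $3\epsilon'$ in the reduction is small relative to the $\Theta(1/T)$ scale of $\Optgmd$, which is the reason we need a Unique Games instance of degree growing polynomially in $T$. All of this is routine once the gap ratios are written out.
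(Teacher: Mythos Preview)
Your proposal is correct and follows essentially the same route as the paper: choose $T$ so that the $O(T^{-1/4})$ term is below $\epsilon/2$, boost the Unique Games instance to large degree so that $\ndeg$ is large, and then apply Theorem~\ref{thm:reduction} with the additive slack $3\epsilon'$ negligible against the $\Theta(1/T)$ scale of $\Optgmd$. The paper carries out exactly this, with the minor concretization that it obtains the required UG degree by duplicating each vertex $t=T/\epsilon$ times (and each constraint $t^2$ times), and it takes $\epsilon'=\epsilon/T$ rather than your $\Theta(T^{-2})$; both choices are adequate.
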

\begin{proof}
Given $\epsilon > 0$, let $T$ large enough so that $\frac{1}{T^{1/4}} < \frac{\epsilon}{2}$. 
Theorem~\ref{thm:ug-hard} tells that it is hard to distinguish
\begin{itemize}
\item Completeness: $\Optgmd \geq \frac{1}{T} - \frac{2}{T^{4/5}} = \frac{1 - O(\epsilon)}{T}$.
\item Soundness: $\Optgmd \leq \frac{1}{4T} + \frac{5}{T^{4/5}} = \frac{1 + O(\epsilon)}{4T}$.
\end{itemize}
Let $t = \frac{T}{\epsilon}$. We can assume that each vertex in the Unique Games instance is of degree at least $t$, 
since duplicating each vertex $v$ into $t$ copies $v_1, ..., v_t$ and duplicating each constraint $(u, v)$ into $t^2$ copies $(u_i, v_j)_{1 \leq i, j \leq t}$ preserves the optimum. Therefore, the instance of Generalized Max-Dicut obtained from the above Unique Games instance will have $\ndeg \geq t$. Theorem~\ref{thm:reduction} shows that it is NP-hard to distinguish 
\begin{itemize}
\item Completeness: $\Optgp \geq \Optgmd = \frac{1 - O(\epsilon)}{T}$.
\item Soundness: $\Optgp \leq \Optgmd + \frac{1}{t} = \frac{1 + O(\epsilon)}{4T} + \frac{\epsilon}{T} = \frac{1 + O(\epsilon)}{4T}$.
\end{itemize}
\end{proof}

\section{Proofs of Lemmas about Gaussians}
\label{sec:gaussian}
Let $\phi(x)$ and $\Phi(x)$ be the probability density function (PDF) and the cumulative distribution function (CDF) of the standard Gaussian, respectively. Let $\tilde{\Phi}(x) = 1 - \Phi(x)$. We begin with the following simple fact about the tail of $\Phi$. 
\begin{lem}[\cite{CMM06}]
For any $t > 0$, $\frac{t}{\sqrt{2\pi}(t^2 + 1)}e^{-\frac{t^2}{2}} < \tilde{\Phi}(t) < \frac{1}{\sqrt{2\pi}t}e^{-\frac{t^2}{2}}$. 
\label{lem:tailgaussian}
\end{lem}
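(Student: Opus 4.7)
The plan is a standard two-sided argument on the Gaussian density $\phi(x) = \frac{1}{\sqrt{2\pi}}e^{-x^{2}/2}$, one side by a direct monotonicity trick in the integrand, the other by an antiderivative/boundary argument.

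For the upper bound, I would start from $\tilde{\Phi}(t) = \int_{t}^{\infty} \phi(x)\,dx$ and use that for $x \ge t > 0$ one has $x/t \ge 1$. Inserting this harmless factor in the integrand gives
\[
\tilde{\Phi}(t) \;\le\; \int_{t}^{\infty} \frac{x}{t}\,\phi(x)\,dx \;=\; \frac{1}{t\sqrt{2\pi}} \int_{t}^{\infty} x\,e^{-x^{2}/2}\,dx \;=\; \frac{1}{t\sqrt{2\pi}}\, e^{-t^{2}/2},
\]
since $\int_{t}^{\infty} x e^{-x^2/2} dx = e^{-t^2/2}$ has a clean antiderivative. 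That settles the right inequality immediately.

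For the lower bound, the cleanest route is to set
\[
g(t) \;:=\; \sqrt{2\pi}\,\tilde{\Phi}(t) \;-\; \frac{t}{t^{2}+1}\, e^{-t^{2}/2},
\]
observe that $g(t) \to 0$ as $t \to \infty$ (both terms vanish), and show $g'(t) < 0$ for all $t > 0$; then $g(t) > 0$ on $(0,\infty)$, which is exactly the claimed inequality. The derivative is a short computation: differentiating $\frac{t}{t^{2}+1} e^{-t^{2}/2}$ by the product/quotient rule and adding the $-e^{-t^{2}/2}$ coming from $\sqrt{2\pi}\,\tilde{\Phi}'(t)$, everything gathers over the common denominator $(t^{2}+1)^{2}$ and the numerator collapses to the constant $-2$. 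Concretely,
\[
g'(t) \;=\; -e^{-t^{2}/2}\left[ 1 + \frac{1 - t^{2}}{(t^{2}+1)^{2}} - \frac{t^{2}}{t^{2}+1}\right] \;=\; -\frac{2}{(t^{2}+1)^{2}}\, e^{-t^{2}/2} \;<\; 0,
\]
which, together with $g(\infty) = 0$, proves the left inequality.

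There is no real obstacle here; both bounds are elementary. The only mild care point is the derivative computation for the lower bound, where it is worth simplifying $(t^{2}+1)^{2} + (1 - t^{2}) - t^{2}(t^{2}+1)$ carefully to confirm that it equals $2$ rather than a more complicated polynomial, since the whole proof hinges on the sign of $g'$.
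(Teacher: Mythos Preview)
Your argument is correct. The upper bound via the $x/t \ge 1$ insertion is the standard trick, and for the lower bound your derivative computation is right: the numerator $(t^{2}+1)^{2} + (1-t^{2}) - t^{2}(t^{2}+1)$ indeed simplifies to $2$, so $g'(t) = -\frac{2}{(t^{2}+1)^{2}}e^{-t^{2}/2} < 0$, and combined with $g(t)\to 0$ as $t\to\infty$ this gives $g(t) = \int_{t}^{\infty}\frac{2}{(s^{2}+1)^{2}}e^{-s^{2}/2}\,ds > 0$.

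As for comparison with the paper: there is nothing to compare. The paper does not prove this lemma at all; it simply quotes it with a citation to~\cite{CMM06} and uses it as a black box. Your write-up therefore supplies a self-contained proof where the paper has none, which is a strict addition rather than an alternative approach.
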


\begin{lem}
[Restatement of Lemma~\ref{lem:maxgauss1}]
Let $g_1, \dots, g_n$ ($n \geq 2$) be independent standard Gaussian random variables and $0 < \epsilon < 1$.
If $x \geq \sqrt{2 \log \frac{n}{\epsilon}}$, 
\[
\Pr[\max_j [g_j] \leq x] \geq 1 - \epsilon.
\]
\end{lem}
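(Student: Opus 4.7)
The plan is to prove this by a straightforward union bound combined with the Gaussian tail estimate from Lemma~\ref{lem:tailgaussian} quoted just above the statement. The essential inequality is
\[
\Pr[\max_j g_j > x] \leq \sum_{j=1}^n \Pr[g_j > x] = n \tilde\Phi(x),
\]
so it suffices to show that $n\tilde\Phi(x) \leq \epsilon$ whenever $x \geq \sqrt{2 \log (n/\epsilon)}$.

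First I would invoke the upper tail bound $\tilde\Phi(x) < \frac{1}{\sqrt{2\pi}\,x} e^{-x^2/2}$ from Lemma~\ref{lem:tailgaussian}. Plugging in $x \geq \sqrt{2\log(n/\epsilon)}$ gives $e^{-x^2/2} \leq \epsilon/n$, hence
\[
n \tilde\Phi(x) < \frac{1}{\sqrt{2\pi}\,x} \cdot \epsilon.
\]
Since $n \geq 2$ and $\epsilon < 1$, we have $n/\epsilon > 2$, so $x \geq \sqrt{2\log 2} > 1$, which certainly gives $\sqrt{2\pi}\,x > 1$. Therefore $n\tilde\Phi(x) < \epsilon$, and taking complements yields $\Pr[\max_j g_j \leq x] \geq 1 - \epsilon$.

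No step here is a real obstacle: the only mild point to mention is checking that the prefactor $1/(\sqrt{2\pi} x)$ is indeed $\leq 1$, which follows from the mild lower bound $x > 1$ implied by the hypotheses $n\geq 2$ and $\epsilon<1$. The argument is effectively one line once Lemma~\ref{lem:tailgaussian} is in hand.
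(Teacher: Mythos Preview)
Your proposal is correct and essentially matches the paper's proof: both check $x \geq \sqrt{2\log 2}$ so that the prefactor $\tfrac{1}{\sqrt{2\pi}\,x}\leq 1$, then apply Lemma~\ref{lem:tailgaussian} to get $\tilde\Phi(x)\leq \epsilon/n$. The only cosmetic difference is the final step, where the paper computes $\Pr[\max_j g_j \leq x] = \Phi(x)^n \geq (1-\epsilon/n)^n \geq 1-\epsilon$ via Bernoulli's inequality instead of your union bound; the two are equivalent here.
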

\begin{proof}
Note that $x \geq \sqrt{2 \log 2}$, so $\frac{1}{\sqrt{2\pi}{x}} \leq 1$.

\begin{eqnarray*}
&& x \geq \sqrt{2 \log \frac{n}{\epsilon}}  \\
&\Rightarrow& \frac{1}{\sqrt{2\pi}x}\exp(-\frac{x^2}{2}) \leq \frac{\epsilon}{n} \\
&\Rightarrow& 1 - \Phi(x) \leq \frac{\epsilon}{n},
\end{eqnarray*}
where the last inequality follows from Lemma~\ref{lem:tailgaussian}. We can conclude that
\[
\Pr[\max_j [C(j)] \leq x] = \Phi(x)^n \geq (1 - \frac{\epsilon}{n})^n \geq 1 - \epsilon.
\]
\end{proof}

\begin{lem}
[Restatement of Lemma~\ref{lem:maxgauss2}]
Let $g_1, \dots, g_n$ ($n \geq 2$) be independent standard Gaussian random variables and $0 < \epsilon < 1/4$.
If $x \leq \frac{\epsilon}{2\sqrt{\log \frac{n}{\epsilon}}}$,
\[
\Pr[\max_j[g_j] - \secondmax_j[g_j] \geq x] \geq (1 - 2\epsilon).
\]

\end{lem}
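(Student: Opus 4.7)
The plan is to bound the complementary event $\Pr[\max_j g_j - \secondmax_j g_j < x] \leq 2\epsilon$ by truncating the maximum and then estimating the joint distribution of the top two order statistics. Set $T := \sqrt{2\log(n/\epsilon)}$; Lemma~\ref{lem:maxgauss1} gives $\Pr[\max_j g_j > T] \leq \epsilon$, so it suffices to bound the probability of the event $E := \{\max_j g_j - \secondmax_j g_j < x\} \cap \{\max_j g_j \leq T\}$ by $\epsilon$. Writing $M := \max_j g_j$ and $M_2 := \secondmax_j g_j$, a union bound over which ordered pair of indices realises $(M, M_2)$, together with conditioning on the maximiser, gives
\[
\Pr[E] \;\leq\; n(n-1) \int_{-\infty}^T \phi(t) \bigl(\Phi(t) - \Phi(t-x)\bigr) \Phi(t)^{n-2} \, dt.
\]

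The hypothesis $x \leq \epsilon/(2\sqrt{\log(n/\epsilon)})$ forces $Tx \leq \sqrt{2}\,\epsilon/2 \leq 1$, and the identity $\phi(t-x)/\phi(t) = e^{tx - x^2/2}$, examined for $t \leq 0$, $0 < t < x$, and $t \geq x$, yields $\Phi(t) - \Phi(t-x) \leq C_1\, x\, \phi(t)$ for $t \leq T$, where $C_1$ is an absolute constant. The resulting integral $n(n-1) \int \phi(t)^2 \Phi(t)^{n-2}\, dt$ is handled via the hazard-rate identity
\[
n(n-1) \int \phi(m)^2 \Phi(m)^{n-2}\, dm \;=\; \E\bigl[ h(M_2) \bigr], \qquad h(m) := \frac{\phi(m)}{1-\Phi(m)},
\]
which follows directly from the second-order-statistic density $f_{M_2}(m) = n(n-1)\phi(m)\Phi(m)^{n-2}(1-\Phi(m))$. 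Since $h$ is nondecreasing on $[0,\infty)$ and Lemma~\ref{lem:tailgaussian} gives $h(T) \leq (T^2+1)/T \leq 2T$, while $\E[h(M_2);\, M_2 > T] \leq 2\,\E[M;\, M > T] \leq 2n\phi(T) = O(\epsilon)$, we obtain $\E[h(M_2)] \leq 2T + O(\epsilon) = O\bigl(\sqrt{\log(n/\epsilon)}\bigr)$. Combining the estimates, $\Pr[E] \leq C_1\, x \cdot O(\sqrt{\log(n/\epsilon)}) = O(\epsilon)$, and adding the truncation loss produces the claimed $2\epsilon$ bound (absolute constants being absorbed into the coefficient of the hypothesis on $x$).

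The main obstacle is obtaining the sharp $O(\sqrt{\log(n/\epsilon)})$ factor on the integral. A naive substitution $v = \Phi(t)$ together with the crude pointwise bound $\phi(\Phi^{-1}(v)) \leq 1/\sqrt{2\pi}$ gives only $n(n-1) \int \phi^2 \Phi^{n-2}\, dt \leq n$, losing a factor of $n$ and making the whole argument fail. What rescues the estimate is the sharp decay $\phi(\Phi^{-1}(v)) \sim (1-v)\sqrt{2\log(1/(1-v))}$ as $v \to 1$; the hazard-rate reformulation above exposes this decay transparently via Mill's ratio (Lemma~\ref{lem:tailgaussian}) without having to perform the change of variables explicitly.
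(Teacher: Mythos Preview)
Your argument is correct up to the constant you flag at the end, but it is a genuinely different route from the paper's. The paper works with the \emph{favourable} event directly: it writes
\[
\Pr[\max_j g_j - \secondmax_j g_j \geq x] \;=\; n\int_{-\infty}^{\infty} \Phi(y-x)^{n-1}\phi(y)\,dy,
\]
truncates at some $b$, factors $\phi(y)=\phi(y-x)\cdot\frac{\phi(y)}{\phi(y-x)}$, pulls out the infimum of the ratio, and recognises what remains as $(\Phi(y-x)^n)'$, integrating to $\Phi(b-x)^n$. With $b=x+\sqrt{2\log(n/\epsilon)}$ both factors are at least $1-\epsilon$ and the stated constant $2\epsilon$ drops out exactly.

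Your approach instead attacks the complementary event through the joint density of $(M,M_2)$, reduces to bounding $n(n-1)\int\phi^2\Phi^{n-2}$, and then uses the pleasant identity that this integral equals $\E[h(M_2)]$ for the Gaussian hazard rate $h$, which Mill's ratio controls. This is more machinery (order-statistic densities, hazard-rate monotonicity, a separate tail estimate for $M_2>T$) and loses an absolute constant, so as written it proves the lemma only after enlarging the $2$ in the hypothesis on $x$; but the hazard-rate reformulation is a nice structural observation that the paper's bare-hands calculation does not expose. The paper's proof is shorter and delivers the exact constant; yours is more conceptual and would generalise more readily to other log-concave laws.
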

\begin{proof}
\begin{eqnarray*}
\Pr[\max_j[g_j] - \secondmax_j[g_j] \geq x]
&\geq& n \int_{-\infty}^{\infty} \Phi[y - x]^{n-1} \phi(y) dy \\
&\geq& n \int_{-\infty}^{b}  \Phi[y - x]^{n-1} \phi(y) dy \quad \mbox{for some } b \mbox{ fixed later} \\
&=& n \int_{-\infty}^{b} \Phi[y - x]^{n-1}  \phi(y - x) \frac{\phi(y)}{\phi(y - x)} dy \\
&\geq& (\inf_{y \in [-\infty, b]} \frac{\phi(y)}{\phi(y - x)}) \int_{-\infty}^{b} n \Phi[y - x]^{n-1} \phi(y - x)  dy \\
&=& (\inf_{y \in [-\infty, b]} \frac{\phi(y)}{\phi(y - x)}) \int_{-\infty}^{b} (\Phi[y - x]^n)' dy \\
&=& (\inf_{y \in [-\infty, b]} \frac{\phi(y)}{\phi(y - x)}) \Phi[b - x]^n
\end{eqnarray*}
Let $b = x + \sqrt{2 \log \frac{n}{\epsilon}}$. By the same argument with Lemma~\ref{lem:maxgauss1}, we have
\begin{eqnarray*}
&& 1 - \Phi[b - x] \leq \frac{\epsilon}{n} \\
&\Rightarrow& \Phi[b - x] \geq 1 - \frac{\epsilon}{n} \\
&\Rightarrow & \Phi[b - x] \geq (1 - \epsilon)^{1/n}
\end{eqnarray*}
Now we bound 
\[
\inf_{y \in [-\infty, b]} \frac{\phi(y)}{\phi(y - x)} = \inf_{y \in [-\infty, b]} \exp(-\frac{y^2}{2} + \frac{(y - x)^2}{2})
= \inf_{y \in [-\infty, b]} \exp(\frac{-2xy + x^2}{2}) = \exp(\frac{-2bx + x^2}{2})
\]
where the last inequality holds since it is monotonically decreasing in $y$. $x \leq \frac{\epsilon}{2\sqrt{\log \frac{n}{\epsilon}}}$ implies 
\begin{eqnarray*}
&& x(x + \sqrt{2 \log \frac{n}{\epsilon}}) \leq \epsilon \\
&\Rightarrow & bx \leq \epsilon \\
&\Rightarrow & \frac{-2bx + x^2}{2} \geq -\epsilon \\
&\Rightarrow & \exp(\frac{-2bx + x^2}{2}) \geq \exp(-\epsilon) \geq 1 - \epsilon
\end{eqnarray*}
Since both $\inf_{y \in [-\infty, b]} \frac{\phi(y)}{\phi(y - x)}$ and $\Phi[b - x]^n$ are at least $1 - \epsilon$, the lemma follows.
\end{proof}

\begin{lem}
[Restatement of Lemma~\ref{lem:gauss_concave}]
Fix $\rho, \alpha \in (0, 1)$. The function $f(x) := \Gamma_{\rho}(\alpha, x)$ is concave. 
\end{lem}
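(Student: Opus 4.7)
The plan is to give an explicit formula for $\Gamma_\rho(\alpha,b)$ as a one-dimensional integral, differentiate once with respect to $b$, and show the derivative is monotonically decreasing.

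First I would set up notation. Let $\phi$ and $\Phi$ denote the standard Gaussian PDF and CDF, let $\bar\Phi=1-\Phi$, let $x_\alpha=\Phi^{-1}(1-\alpha)$ so that $\Pr[X\ge x_\alpha]=\alpha$, and let $y_b=\Phi^{-1}(1-b)$. Conditioning on $Y=y$, the variable $X\mid Y=y$ is Gaussian with mean $\rho y$ and variance $1-\rho^2$, so
\[
\Gamma_\rho(\alpha,b)=\int_{y_b}^{\infty}\bar\Phi\!\left(\frac{x_\alpha-\rho y}{\sqrt{1-\rho^2}}\right)\phi(y)\,dy.
\]

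Next I would differentiate. Since $y_b=\Phi^{-1}(1-b)$ we have $dy_b/db=-1/\phi(y_b)$, and by the Leibniz rule only the boundary term survives (the integrand does not depend on $b$ except through the lower limit):
\[
f'(b)=-\bar\Phi\!\left(\frac{x_\alpha-\rho y_b}{\sqrt{1-\rho^2}}\right)\phi(y_b)\cdot\frac{dy_b}{db}=\bar\Phi\!\left(\frac{x_\alpha-\rho y_b}{\sqrt{1-\rho^2}}\right).
\]

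Finally I would check monotonicity. As $b$ increases, $y_b$ decreases (because $\Phi^{-1}$ is increasing and $1-b$ is decreasing). Since $\rho>0$, the quantity $-\rho y_b$ is increasing in $b$, hence $(x_\alpha-\rho y_b)/\sqrt{1-\rho^2}$ is increasing in $b$; the function $\bar\Phi$ is strictly decreasing, so $f'(b)$ is strictly decreasing in $b$. Therefore $f$ is concave on $(0,1)$, as required.

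I do not expect any serious obstacle here: the only subtle point is to be careful that $\rho\in(0,1)$ (the case stated in the lemma) gives the correct sign in the last step, and that the argument would have to be inverted for $\rho<0$. No routine calculations beyond the one-line differentiation above are needed.
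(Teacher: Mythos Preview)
Your proof is correct and clean. Writing $\Gamma_\rho(\alpha,b)=\int_{y_b}^{\infty}\bar\Phi\bigl((x_\alpha-\rho y)/\sqrt{1-\rho^2}\bigr)\phi(y)\,dy$, differentiating through the lower limit, and reading off $f'(b)=\bar\Phi\bigl((x_\alpha-\rho y_b)/\sqrt{1-\rho^2}\bigr)$ is exactly right, and the monotonicity of $f'$ in $b$ then follows immediately from $\rho>0$ and the fact that $b\mapsto y_b$ is decreasing.

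The paper takes a different route: it fixes $0\le a\le b$, writes $f(a)+f(b)-f(a+b)$ as the difference of two probabilities $\Pr[Y\ge z,\,X\ge w]-\Pr[x\le Y\le y,\,X\ge w]$, observes that both $Y$-events have mass $a$, and then uses that the ``higher'' $Y$-interval gives the larger conditional probability $\Pr[X\ge w\mid Y\in\cdot]$ when $\rho>0$. The underlying insight is the same as yours --- it is precisely the monotonicity of $y\mapsto \Pr[X\ge w\mid Y=y]$ --- but the paper's argument as literally written only establishes the subadditivity inequality $f(a)+f(b)\ge f(a+b)$, which is strictly weaker than concavity. Your calculus argument has the advantage of yielding concavity directly (which is what the Jensen step in the soundness analysis actually uses), and with less bookkeeping.
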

\begin{proof}
Let $Y, Z$ be independent Gaussians and $X := \rho Y + \sqrt{1 - \rho^2} Z$. 
Fix $0 \leq a \leq b$. We will show that $f(a) + f(b) \geq f(a + b)$. 
Let $x = \tilde{\Phi}^{-1}(a + b), y = \tilde{\Phi}^{-1}(b), z = \tilde{\Phi}^{-1}(a), w = \tilde{\Phi}^{-1}(\alpha)$. Note that $x \leq y \leq z$. 
\begin{align*}
& f(a) + f(b) - f(a + b) \\
= &
\Pr[Y \geq y \mbox{ and } X \geq w] +
\Pr[Y \geq z \mbox{ and } X \geq w] -
\Pr[Y \geq x \mbox{ and } X \geq w] \\
= & \Pr[Y \geq z \mbox{ and } X \geq w] -
\Pr[x\leq Y \leq y \mbox{ and } X \geq w] \\
\geq& \Pr[Y \geq z \mbox{ and } Z \geq \frac{w - \rho z}{\sqrt{1-\rho^2}} ]
- \Pr[x\leq Y \leq y \mbox{ and } Z \geq \frac{w - \rho y}{\sqrt{1-\rho^2}} ] \\
=& a (\Pr [Z \geq \frac{w - \rho z}{\sqrt{1-\rho^2}} ] - \Pr [Z \geq \frac{w - \rho y}{\sqrt{1-\rho^2}} ]) \\
\geq& 0
\end{align*}
\end{proof}

\begin{lem}
[Restatement of Lemma~\ref{lem:gauss_bound}]
For large enough $T$ and $\delta = \frac{1}{T^{1/4}}$, the following holds. 
For any $a \in [0, 1], b \in [0, \frac{1}{T}]$ and $\rho \in (0, \sqrt{\frac{2}{T\delta}})$, 
$\Gamma_{\rho}(a, b) \leq ab + \frac{2}{T^{5/4}}$.
\end{lem}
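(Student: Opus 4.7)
The plan is to bound $\Gamma_\rho(a,b) - ab$ by integrating the bivariate Gaussian density from $\rho' = 0$ (where $\Gamma_0(a,b) = ab$) up to $\rho' = \rho$, then to use the fact that $y := \tilde{\Phi}^{-1}(b)$ is large (since $b \leq 1/T$) together with Lemma~\ref{lem:tailgaussian} to show the density is tiny.

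First, I would use the standard identity
\[
\frac{\partial}{\partial \rho'} \Gamma_{\rho'}(a,b) = \phi_2(x,y;\rho') \;=\; \frac{1}{2\pi\sqrt{1-\rho'^2}} \exp\!\left(-\frac{x^2 - 2\rho' xy + y^2}{2(1-\rho'^2)}\right),
\]
where $x = \tilde{\Phi}^{-1}(a)$, $y = \tilde{\Phi}^{-1}(b)$, together with $\Gamma_0(a,b) = ab$, to write
\[
\Gamma_\rho(a,b) - ab \;=\; \int_0^\rho \phi_2(x,y;\rho') \, d\rho'.
\]
The edge cases $b=0$ (so $y = +\infty$ and both sides are $0$) or $a \in \{0,1\}$ are handled separately at the start.

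Next, completing the square via $x^2 - 2\rho' xy + y^2 = (x - \rho' y)^2 + (1-\rho'^2)y^2$ gives
\[
\phi_2(x,y;\rho') \;=\; \phi(y)\cdot\frac{1}{\sqrt{2\pi(1-\rho'^2)}}\exp\!\left(-\frac{(x-\rho' y)^2}{2(1-\rho'^2)}\right) \;\leq\; \frac{\phi(y)}{\sqrt{2\pi(1-\rho^2)}},
\]
since $\rho'^2 \leq \rho^2 < 1$. Integrating,
\[
\Gamma_\rho(a,b) - ab \;\leq\; \frac{\rho\,\phi(y)}{\sqrt{2\pi(1-\rho^2)}}.
\]

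The heart of the argument is to convert $\phi(y)$ into something expressible in terms of $b$ using Lemma~\ref{lem:tailgaussian}. From $b = \tilde{\Phi}(y) \geq \frac{y}{\sqrt{2\pi}(y^2+1)}e^{-y^2/2}$ I get $\phi(y) = \frac{1}{\sqrt{2\pi}}e^{-y^2/2} \leq \frac{(y^2+1)b}{y}$; and the companion upper tail bound $b \leq \frac{1}{\sqrt{2\pi}y}e^{-y^2/2}$ gives $y = O(\sqrt{\log(1/b)}) = O(\sqrt{\log T})$ once $b \leq 1/T$ (the cases where $y$ is not bounded away from $0$ can be handled by a separate trivial argument since then $b = \Theta(1)$ is incompatible with $b \leq 1/T$ for large $T$). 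Thus $\phi(y) = O(b\sqrt{\log T}) = O(\sqrt{\log T}/T)$.

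Plugging this in with $\rho \leq \sqrt{2/(T\delta)} = \sqrt{2}\,T^{-3/8}$ and $1 - \rho^2 \geq 1/2$ (for large $T$) yields
\[
\Gamma_\rho(a,b) - ab \;=\; O\!\left(\frac{\sqrt{\log T}}{T^{11/8}}\right) \;=\; O\!\left(\frac{\sqrt{\log T}}{T^{1/8}}\right)\cdot \frac{1}{T^{5/4}},
\]
and the prefactor tends to $0$, so for $T$ sufficiently large the bound is at most $2/T^{5/4}$, as required. The only mildly delicate step is bookkeeping the two-sided tail estimate from Lemma~\ref{lem:tailgaussian} to pass from ``$b$ small'' to ``$\phi(y)$ at most roughly $b\sqrt{\log T}$''; everything else is a direct calculation.
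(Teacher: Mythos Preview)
Your approach is correct and genuinely different from the paper's. The paper argues probabilistically: it writes $X = \rho Y + \sqrt{1-\rho^2}Z$, splits the event $\{Y \geq y\}$ into $\{y \leq Y \leq 2y\}$ and $\{Y \geq 2y\}$, bounds the first piece by $b\cdot \tilde\Phi\bigl(\tfrac{x-2\rho y}{\sqrt{1-\rho^2}}\bigr)$ and the second by $\tilde\Phi(2y) < b^3$, and then does a three-way case analysis on the sizes of $a$ and $b$. Your route via the Plackett-type identity $\partial_{\rho'}\Gamma_{\rho'}(a,b)=\phi_2(x,y;\rho')$ and the uniform bound $\phi_2 \leq \phi(y)/\sqrt{2\pi(1-\rho^2)}$ is cleaner: it avoids the case split entirely and gives a slightly stronger bound $O(T^{-11/8}\sqrt{\log T})$ rather than $O(T^{-5/4})$. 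The paper's argument, on the other hand, is self-contained and does not invoke the derivative identity.

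One point to tighten: the line ``$y = O(\sqrt{\log(1/b)}) = O(\sqrt{\log T})$ once $b \leq 1/T$'' is false as stated, since $b \leq 1/T$ gives $\log(1/b) \geq \log T$, not $\leq$. The conclusion $\phi(y) = O(\sqrt{\log T}/T)$ is nonetheless correct; the quickest fix is to observe that $\phi$ is decreasing on $[0,\infty)$ and $y \geq y_0 := \tilde\Phi^{-1}(1/T)$, so $\phi(y) \leq \phi(y_0)$, and then apply your tail estimates to $y_0$ alone (where indeed $y_0 = \Theta(\sqrt{\log T})$). Alternatively, keep the $b$-dependence and note that $b\sqrt{\log(1/b)}$ is increasing on $(0,1/T]$, hence maximized at $b=1/T$.
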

\begin{proof}
Let $Y, Z$ be independent Gaussians and $X := \rho Y + \sqrt{1 - \rho^2} Z$. 
Let $x = \tilde{\Phi}^{-1}(a)$ and $y = \tilde{\Phi}^{-1}(b)$.
By taking $T > 2$, we can assume $b < \frac{1}{2}$ and $y > 0$, while we do not put any assumption on $a$ and $x$. 

\begin{align}
\Gamma_{\rho}(a, b) &= \Pr[X \geq x \mbox{ and } Y \geq y] \notag \\
&\leq \Pr[Z \geq \frac{x - 2\rho y}{\sqrt{1 - \rho^2}} \mbox { and }  y \leq Y \leq 2y] + \Pr[Y \geq 2y] \notag  \\
&\leq \Pr[Z \geq \frac{x - 2\rho y}{\sqrt{1 - \rho^2}} \mbox { and }  Y \geq y] + \Pr[Y \geq 2y] \notag  \\
&\leq b \cdot \tilde{\Phi}(\frac{x - 2\rho y}{\sqrt{1 - \rho^2}}) + \tilde{\Phi}(2y) \label{eq:stability}.
\end{align}
By Lemma~\ref{lem:tailgaussian}, $\tilde{\Phi}(2y)
< \frac{1}{2\sqrt{2\pi}y} \exp(-2y^2) < b^3 < \frac{1}{T^{5/4}}$.

\begin{itemize}
\item $a \geq 1 - \frac{1}{T^{1/4}}$: \eqref{eq:stability} is bounded by $b + \frac{1}{T^{5/4}} \leq (a + \frac{1}{T^{1/4}})b + \frac{1}{T^{5/4}} \leq ab + \frac{2}{T^{5/4}}$. 

\item $b \leq \frac{1}{T^{5/4}}$: \eqref{eq:stability} is bounded by $b + \frac{1}{T^{5/4}} \leq \frac{2}{T^{5/4}}$. 

\item $a \leq 1 - \frac{1}{T^{1/4}}$ and $b \geq \frac{1}{T^{5/4}}$: Note that $x \geq -10 \sqrt{\log T}$ and $y \leq 10 \sqrt{\log T}$. 
Since $\rho \leq \sqrt{\frac{2}{T\delta}} = \frac{\sqrt{2}}{T^{3/8}}$,
\[
(x - 2\rho y) - \sqrt{1 - \rho^2}(x - \frac{1}{T^{1/4}})
\geq 
\begin{cases}
-2\rho y + \frac{1}{2T^{1/4}} \geq 0 & \mbox{if } x \geq 0 \\
\rho^2 x - 2 \rho y + \frac{1}{2T^{1/4}} \geq 0 & \mbox{if } -10 \sqrt{\log T} \leq x \leq 0, \\
\end{cases}
\] 
which shows that $\frac{x - 2\rho y}{\sqrt{1 - \rho^2}} \geq x - \frac{1}{T^{1/4}}$.
Therefore, 
\[
\eqref{eq:stability} \leq b \cdot \tilde{\Phi}(x - \frac{1}{T^{1/4}}) + \frac{1}{T^{5/4}} \leq b(a + \frac{1}{T^{1/4}}) + \frac{1}{T^{5/4}} \leq ab + \frac{2}{T^{5/4}},
\]
where the second inequality follows from $\phi(x) \leq 1$ for all $x \in \RR$. 
\end{itemize}
\end{proof}
\end{document}